\newtheorem{theorem}{Theorem}
\newtheorem{lemma}{Lemma}
\DeclarePairedDelimiter{\ceil}{\lceil}{\rceil}
\DeclarePairedDelimiter{\floor}{\lfloor}{\rfloor}
\newcommand{\abs}[1]{\left\vert#1\right\vert}
\newcommand{\set}[1]{\left\{#1\right\}}
\newcommand{\eps}{\varepsilon}
\newcommand{\Fig}[1]{Fig.~\ref{fig:#1}}
\newcommand{\Sec}[1]{Sec.~\ref{sec:#1}}
\newcommand{\Tab}[1]{Tab.~\ref{tab:#1}}
\newcommand{\Eq}[1]{(\ref{eq:#1})}
\newcommand{\Alg}[1]{Alg.~\ref{alg:#1}}
\newcommand{\Ic}{\mathcal{I}}
\newcommand{\Ec}{\mathcal{E}}
\newcommand{\Sc}{\mathcal{S}}
\newcommand{\Vc}{\mathcal{V}}
\newcommand{\Gc}{\mathcal{G}}
\newcommand{\rb}{r}
\newcommand{\RSphase}{q}
\newcommand{\RStildephase}{p}
\newcommand{\dload}{T}
\newcommand{\arrival}{a}
\newcommand{\opt}{OPT}
\newcommand{\mopt}{{\text{\opt}}}
\newcommand{\sha}{SHA}
\newcommand{\msha}{{\text{\sha}}}
\newcommand{\highshare}{c-\dhighshare}
\newcommand{\mhighshare}{{\text{\highshare}}}
\newcommand{\dhighshare}{REShare}
\newcommand{\mdhighshare}{{\text{\dhighshare}}}
\newcommand{\packalg}{VM\_assignment}
\newcommand{\mpackalg}{{\text{\packalg}}}
\newcommand{\unpackalg}{VM\_deassignment}
\newcommand{\munpackalg}{{\text{\unpackalg}}}
\begin{document}

\title{Virtual Service Embedding with \\ Time-Varying Load and Provable Guarantees}
\author{
Gil~Einziger,~\IEEEmembership{Member,~IEEE,}
Gabriel~Scalosub,~\IEEEmembership{Senior~Member,~IEEE,}
Carla~Fabiana~Chiasserini,~\IEEEmembership{Fellow,~IEEE,}
Francesco~Malandrino,~\IEEEmembership{Senior~Member,~IEEE}
\thanks{G.~Einziger and G~Scalosub are with the Ben Gurion University of the Negev, Israel. C.~F.~Chiasserini is with Politecnico di Torino, Italy. F.~Malandrino and C.~F.~Chiasserini are with CNR-IEIIT and CNIT, Italy.}
%\thanks{{\color{red}This work was supported.}}
} %author
\maketitle
\pagestyle{plain}

\begin{abstract}
Deploying services efficiently while satisfying their quality requirements is a major challenge in network slicing. Effective solutions place instances of the services' virtual network functions (VNFs) at different locations of the cellular infrastructure and manage such instances by scaling them as needed. In this work, we address the above problem and the very relevant aspect of sub-slice reuse among different services. Further, unlike prior art, we account for the services' finite lifetime and time-varying traffic load. We identify two major sources of inefficiency in service management: (i) the overspending of computing resources due to traffic of multiple services with different latency requirements being processed by the same virtual machine (VM), and (ii) the poor packing of traffic processing requests in the same VM, leading to opening more VMs than necessary. To cope with the above issues, we devise an algorithm, called REShare, that can dynamically adapt to the system's operational conditions and find an optimal trade-off between the aforementioned opposite requirements. We prove that REShare has low algorithmic complexity and is asymptotic 2-competitive under a non-decreasing load. Numerical results, leveraging real-world scenarios, show that our solution outperforms alternatives, swiftly adapting to time-varying conditions and reducing service cost by over 25\%.
\end{abstract}

\maketitle

\section{Introduction\label{sec:intro}}

Network slicing leads to a revolutionary transformation of mobile services, with technologies like software-defined networking (SDN) and network function virtualization (NFV) enabling flexible, fully virtualized environments.  
In this context, the automated management of the services that the network supports and of the underlying resources
they consume
%~\cite{etsi-zsm} 
is a major challenge. 
As requests for service deployment can arrive and leave at a very fast
pace, it is important that deployment decisions are swift and able to {\em
  dynamically} adapt to the evolution of the network load.
Additionally, each such deployment has to fulfill the services' target key performance indicators (KPIs) and efficiently allocate the very diverse, geographically distributed, and differently owned resources. 

Such requirements imply that when a service request from a third-party
{\em vertical} (e.g., automotive industry or a content provider) 
reaches the system, the following steps should be automatically performed:
\begin{inparaenum}[(i)]
\item to identify the  network segment (e.g.,  edge, aggregation, cloud) where to deploy the service, based on the service target latency  and the infrastructure cost,
\item to check whether part (or all) of the virtual network functions (VNFs) composing the service can re-use already deployed  VNFs (i.e., (sub-)slices),
\item if so, to scale the amount of resources allocated for the VNFs composing such (sub-) slices, so as to fulfill the target KPIs of all involved services, and
\item to instantiate the VNFs that have to be deployed {\em ex novo}, and
  allocate a suitable amount of resources (e.g., CPU, memory, routers) for their processing and interconnection.
%Importantly, the resources needed to instantiate a service include
%computing/memory resources for the VNFs processing and data storage, as well as network resources (links, routers) to interconnect such VNFs 
\end{inparaenum}

Such a process, also referred to as {\em service orchestration} or {\em  service embedding}, entails multiple, inter-dependent decisions
including VNF placement and virtual machine (VM) provisioning.
In making such decisions, the orchestrator needs to avoid two main sources of inefficiency.
The first  is due to placing VNFs with different delay constraints on the same VM:
in such a case, the most stringent delay constraint is effectively maintained also for the least demanding VNF, which implies that some of the processing allocated to handle the other VNF(s) is not minimal.
Such additional capacity can be seen as derived from latency {\em dissimilarity} of {\em co-located} VNFs and it 
is removed when all those placed in a VM have exactly the same delay constraint. The second inefficiency is due to assigning (``packing'') VNFs in the VMs in a sub-optimal manner,
that is, using more than the minimal number of VMs required.

Several works have addressed these aspects,
including~\cite{noi-satyam,noi-jorge,tulino,tulino2,flexshare,liu2017dynamic,moualla2019online}, by
casting them into MILPs and proposing effective
heuristics. Most of the existing studies, however, operate {\em offline}, i.e., assume that the set of service requests to deploy
is known in advance and make the best (e.g., cost-effective) decisions
for their instantiation.
Importantly, our work is able to capture the effect of time-varying traffic conditions, i.e., with service instances exhibiting finite lifetime and time-varying traffic load and the flexible relationship between the computational resources assigned to a VNF and the resulting service times.

Further, unlike prior art~\cite{tulino,tulino2,liu2017dynamic,moualla2019online}, our work accounts for the fact that next-generation networks are made of segments (e.g., cloud and edge) with different computational capability, latency, and cost: in this scenario, it is important that each service is supported by the segment with the most appropriate cost-performance trade-off.
Furthermore, unlike state-of-the-art works like~\cite{he2018s,feng2017approximation}, our system model and algorithms can account for the  dynamic nature of the load networks have to serve.
Such aspects are
especially critical, as they influence a solution's ability 
to cope with daily and weekly traffic patterns and the network topology.

In this paper, we make the following main contributions:
\begin{itemize}

\item we first develop a model that captures the main aspects of
  the NFV ecosystem 
%, while focusing on realistic network topologies 
and accounts for a {\em time varying system load}, as well as for the
fact that there are different  segments composing the network;

\item we formulate the problem of service embedding taking the
  end-to-end latency as the main service KPI, and develop an
  algorithm, named \dhighshare, which  addresses all issues listed above 
 -- network segment identification, VNF placement/reuse,
  and resource scaling --  and, importantly, handles a time-varying service requests load;

\item we prove that \dhighshare has low, namely, quadratic, computational complexity and is an {\em asymptotic 2-competitive algorithm} under a non-decreasing load;

\item finally, using real-world load traces, we show that  the
  cost of deploying and running services under \dhighshare\ is
  much lower than under state-of-the-art solutions. 
%  which, as mentioned, work only with infinite service lifetime. Also,
  Also, \dhighshare\ can swiftly adapt to time-varying conditions, attaining excellent performance as the system load evolves.
\end{itemize} 

The rest of the paper is organized as follows.
\Sec{model} introduces the system model, while \Sec{RM} outlines our approach and main results.
\Sec{HS} presents a heuristic strategy, which is our cornerstone for building the \dhighshare\ algorithm, and analyzes its competitive ratio.
\dhighshare, along with its competitive analysis, is introduced in \Sec{DHS}, while its performance is assessed in \Sec{results}.
After discussing the related work in \Sec{rel-work}, we conclude the paper in  \Sec{conclusions}.

\begin{table}[t!]
\centering
\caption{Table of notation}
\label{tab:notation}
\begin{tabular}{|c|l|}
\hline 
Symbol &  \multicolumn{1}{c|}{Description} \\
\hline \hline
$\Sc$,\, $\Vc$& set of services and  VNFs (resp.)\\
%\hline
%$\Vc$ & set of VNFs \\
\hline
%$\rb$ & request \\
%\hline
$\Vc^s_{\rb}$ & set of VNFs of request $\rb$ for service $s$\\
\hline
$\arrival_{\rb}$, $\tau_{\rb}$  &  arrival and duration (resp.) time of service instance request $\rb$\\
%\hline
%$\tau_{\rb}$ & duration of  service instance request $\rb$\\
\hline
$\lambda_{\rb}$ &  load of service instance request $\rb$ \\
\hline
$D^s_{\rb}$ & request $\rb$ delay constraint \\
\hline
$\Gc$ & graph of layered topology of datacenters (nodes) \\
\hline
$\ell$ & layer of a node (distance from leaf) \\
\hline
$b$ & VM running in a node \\
\hline
$\bar{\mu}$, $\mu$ & maximum and actual (resp.) computing capability of a VM \\
\hline
%$\mu$ & actual computing capability of a VM \\
%\hline
$\theta_v$ & computing complexity of VNF $v$ \\
\hline
$l_{\rb}$ & leaf node where request $\rb$ arrives \\
\hline
$\Lambda(b)$ & overall load on VM $b$ \\
\hline
$d_\ell$ & forwarding latency from a leaf to a node in layer $\ell$ \\
\hline
$M_{r,v}$ & minimum processing latency of running $(r,v)$ alone in a VM \\
\hline
$D_r^v$ & fair delay allocation of job $(r,v)$ \\
\hline
$\kappa^{\ell}_f$ & fixed cost of a VM at level $\ell$ \\
\hline
$\kappa^{\ell}_p$ & proportional cost of a VM at level $\ell$ \\
\hline
$\sigma$ & sequence of requests \\
\hline
\end{tabular}
\vspace{-4mm}
\end{table}

\section{System Model}\label{sec:model}

Our system model seeks to concisely represent the main features of next-generation network architectures, namely, their layered structure~\cite{m-layer-1} and heterogeneity. Indeed,
as reported in~\cite{taleb2017multi,vmware,topos} for datacenters and in~\cite{5gg-d23} for 5G networks,
many relevant types of networks are organized in
{\em layers}, with nodes at the same layer having similar features in terms of latency and cost; however, the features of different layers can be vastly different.
In other words, any differences between nodes of the same layer are negligible when compared to those   between nodes of different layers.
By capturing such aspects, our model serves as the first step towards meeting the challenge they represent.
Furthermore, we consider the NFV Orchestrator (NFVO), specified within the
well-known NFV MANO (Management and Orchestration) architecture
\cite{etsi-mano}, as the decision-making entity.
In both ETSI standards~\cite{etsi-mano} and real-world 5G deployments~\cite{5gg-d23}, the NFVO has access to plentiful information concerning such aspects as: (i) the target KPIs of each service, e.g., its target  end-to-end delay; (ii) the time for which each service instance shall be active; (iii) the VNFs composing the service; (iv) their computational requirements, which in turn determine the processing delay.

More specifically, the processing delay associated with a service is determined by (i) the available computational capabilities; (ii) the service computational complexity; and (iii) the service demand. The  computational capabilities available to each service is one of our decision variables, as set forth next. As for the computational complexity of a given service, it is routinely obtained by profiling the service as it runs, hence, it can be considered a known quantity. Finally, the service demand can be either known {\em a priori} (e.g., in smart-factory applications) or reliably predicted~\cite{bega2019deepcog,xu2022esdnn}, as better detailed in \Sec{rel-work}.

We leverage these notions throughout our system model and algorithms, as set forth next. The most relevant notation we use is summarized in Tab.\,\ref{tab:notation}.

{\bf Services.}
Let $\Sc$ be the set of services that verticals may request and $\Vc$  the set of all possible VNFs they may use. Requests for the deployment
of service instances arrive sequentially at the system, and each new request $\rb$ for an instance of service $s$ requires a
set\footnote{We denote the set of VNFs of a service instance request (and the target latency introduced next) by $\Vc^s_{\rb}$ ($D^s_{\rb}$), even if it  depends only on $s$.} 
of connected VNFs $\Vc^s_{\rb} \subseteq \Vc$. 
Such VNFs may either be freshly deployed or reuse existing VNF
instances provided that isolation requirements are met. 
%for isolation between different slices allow for VNF sharing.

Verticals and network operators commit to service level agreements
(SLAs), detailing the cost and the quantity of the resources each
vertical can use.
%; operators then provision the resources and guarantee the target
%KPIs accordingly. 
Thus, the requests received at the NFVO do not exceed the resource budget for which the vertical
is paying; i.e., there are always enough resources to deploy a service meeting the target KPIs.
Thus, our problem is not whether or not to {\em admit} a service request, but how to {\em embed} it in the most efficient way.

{\bf Network system.}
Most real-world networks connecting datacenters exhibit a layered topology~\cite{taleb2017multi,vmware}, e.g., trees, fat trees, as well as SlimFly and DragonFly~\cite{topos} (an example of such a layered topology appears in \Fig{fat}). 
Let $\Gc=\langle\Ic,\Ec\rangle$ be a graph modelling such a layered topology, with  
 vertices, $i
\in \Ic$, representing datacenters (also referred to as {\em nodes}), and edges, $e \in \Ec$, representing communication links
between datacenters.  
Each node can host a
% very
large~\cite{hong2019resource} number of virtual machines (VMs).
Each VM implements one VNF instance~\cite{5gg-d23}, and the same VNF instance can serve traffic belonging to one or more services.

Layers are numbered as follows: leaf nodes are at layer~$0$, and a generic node
is at {\em layer} $\ell>0$ if its distance in links from the closest leaf is $\ell$.
VMs are characterized by their {\em computational capabilities}, i.e., resources at their disposal. Intuitively, the more capable a VM is, the quicker it will be able to perform a given task. Specifically, a
VM $b$ running in some generic datacenter in node $i$ has maximum speed (i.e.,
computing capability) $\bar{\mu}$ and can host a single VNF $v$.  
We denote with $\mu_b \leq \bar{\mu}$ the amount of computing
resources VM $b$ is using, and
with $\theta_v$ the
{\em computing complexity of VNF $v$}.
Intuitively, computational complexity 
%can be thought of as the inverse of computational capabilities and 
expresses how difficult it is to run a certain VNF. If we fix the amount of computational capabilities, then the processing time required by a given task is proportional to its computational complexity. More formally, computational complexity
is defined as the number of computing  units\footnote{For simplicity,
  $\theta_v\leq 1$ is normalized to the maximum VNF complexity.}
required by  VNF $v$ to handle a  unit of traffic: the higher~$\theta_v$, the more complex the associated VNF, and the more computational resources will be required to run it.

For ease of presentation, we treat the computing capability~$\mu$ as a scalar quantity, i.e., a number; therefore, $\mu$ is able to represent one type of system resources, e.g., CPU. Additional types of resources, like memory or storage, could be accounted for by converting both $\mu$ and the service requirements to multi-dimensional variables, i.e., vectors. Our approach would work unmodified; however, so doing would entail significant additional notation complexity.

A request $r$ is associated with a tuple $\langle\Vc^s_{\rb}, \arrival_{\rb}, \tau_{\rb}, D^s_{\rb}, \lambda_{\rb}, l_{\rb}\rangle$, where $l_{\rb}$ specifies  the leaf node at which the request arrives,
$\arrival_{\rb}$ and $\tau_{\rb}$ are, respectively, its arrival time and duration, $\lambda_{\rb}$ is the traffic load, and $D^s_{\rb}$ is the latency requirement. Having a finite lifetime for requests allows us to model time-varying network load. Specifically, as demonstrated in \Sec{results}, fluctuations in the network load can be reproduced by adding, removing, or replacing service requests.
Notice that the  duration (i.e., lifetime) of a service is a distinct quantity from the time it takes to process an individual packet (or query) belonging to the service, and the quantities often differ by orders of magnitude. As an example, a smart factory service like the one discussed in \Sec{sub-scenarios} may be active for as long as a certain batch has to be produced, i.e., its service duration can be of minutes or days; in the other hand, individual messages pertaining to robot actions are processed in milliseconds.

For each VNF $v \in \Vc^s_{\rb}$, we define $(\rb,v)$ as the  {\em job} of running VNF $v$ for request $\rb$;  each job $(\rb,v)$ has to be assigned to a VM $b$  hosting $v$, running on some node.  
Clearly, the jobs associated with a request  arrive all upon the request's arrival (i.e., at time $\arrival_{\rb}$), and depart after the request's  duration $\tau_{\rb}$ has elapsed (i.e., at time $\arrival_{\rb}+\tau_{\rb}$), when all jobs $(\rb,v)$ are removed and the  resources they consumed are freed. 
Finally,  $\Lambda(b)$ denotes the overall traffic load of the jobs assigned to VM $b$.

\begin{figure}
\centering
\includegraphics[width=.6\columnwidth]{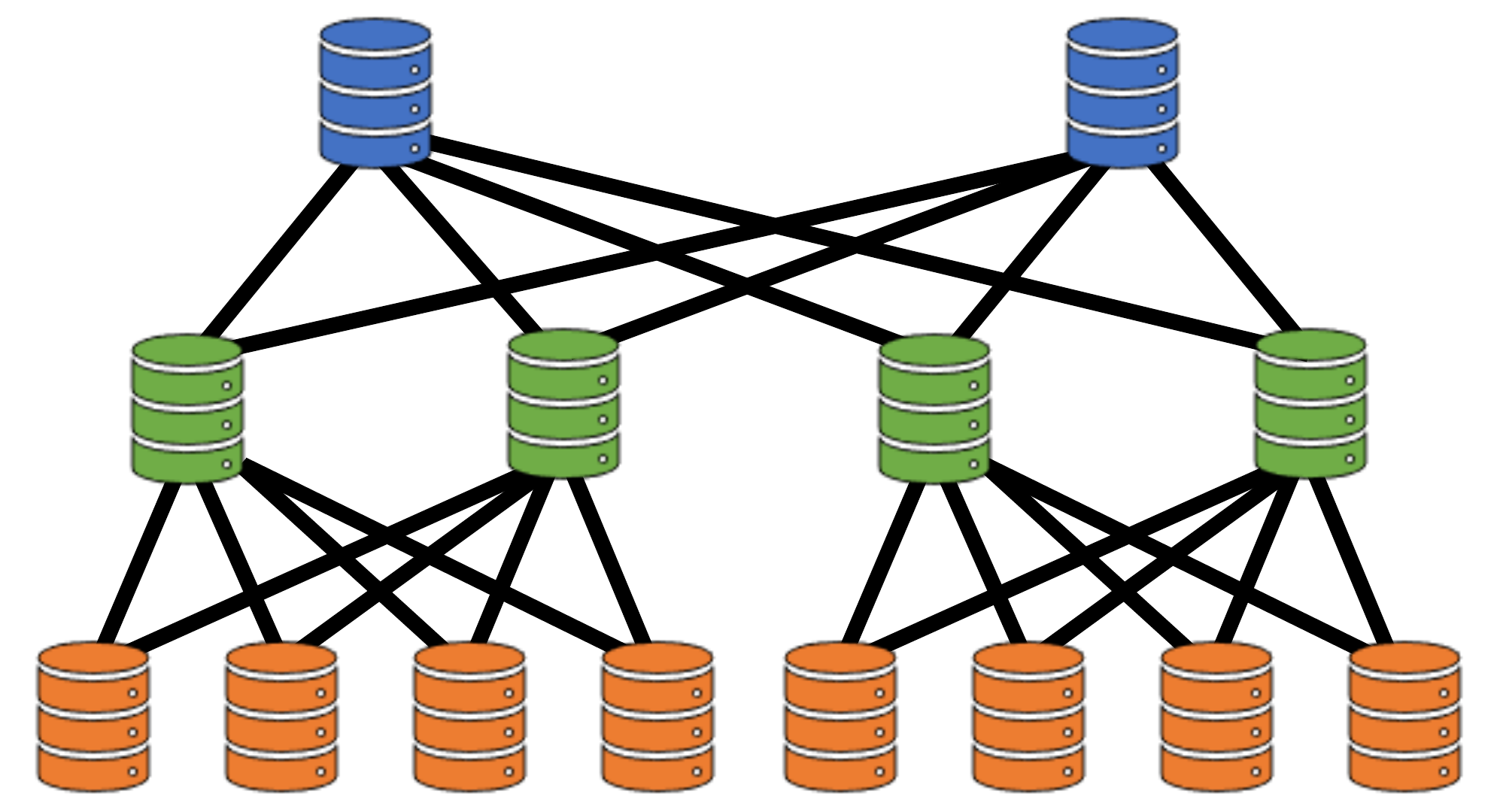}
\caption{
    Example of a layered topology, featuring three layers.
    \label{fig:fat}
} %caption
\vspace{-4mm}
\end{figure}

{\bf Latency.} 
The total latency incurred by request $\rb$ is given by the sum of the traffic forwarding latency and the processing latency. 
To determine the former, we consider that routes between any two nodes in the topology have been pre-computed. 
%Also, without loss of generality, we assume that traversing each link
%incurs one time unit of latency. 
We denote with $d_{\ell}$ the traffic forwarding latency from a leaf node to a
node at layer $\ell$; owing to the layered structure of the network
topology (e.g., edge, aggregation, and cloud layers),  we have $d_{\ell+1}>d_{\ell}$,
reflecting the fact that farther-away layers take longer to reach than close-by ones.
We further consider the forwarding latency incurred by request $\rb$, where each job $(\rb,v)$ is deployed at some level $\ell_r^v$, as the {\em maximum} forwarding latency over all levels hosting jobs of $\rb$.
The maximum forwarding latency captures the latency incurred by reaching the highest level, which hosts jobs corresponding to the request.
Such a latency is the dominating component of the overall forwarding latency incurred by the job; indeed, as better detailed in \Sec{HS}, the large number of VMs available in each datacenter makes placing all VNFs of a service in the same node (albeit on different VMs) the most convenient option, hence, inter-VNF forwarding times are not significant.

As for the processing latency, we model each VM  as an $M/M/1$ queue.
Queuing models are commonly used in the literature to represent VMs in edge and cloud scenarios~\cite{neely2008fairness,bi2010dynamic,noi-satyam,prados2021performance,prados2016modeling}, with Markovian service~\cite{bi2010dynamic,noi-satyam,prados2016modeling} and arrival~\cite{neely2008fairness,bi2010dynamic,noi-satyam,prados2016modeling} times being considered in most cases;
the validity of such modeling assumption has also been confirmed through simulation-based validation~\cite{prados2021performance}. Using a queueing model allows us to model how VNFs may be assigned different quantities of resources, e.g., CPU, and how such an amount impacts their processing time. This is in contrast with existing works, e.g.,~\cite{tulino,tulino2,moualla2019online,liu2017dynamic}, where VNF requirements are constant over time and no resource allocation decision is possible.

As per~\cite{Kleinrock},
the processing latency incurred on VM $b$ running VNF $v$ at speed $\mu_b$ is
$\frac{1}{\mu_b-\theta_{v}\Lambda(b)}$.
While VMs get assigned virtual cores in practice, it is fair to assume that $\mu_b$'s take on real values: indeed, given that each node corresponds to a datacenter, it contains a very large number of physical cores, and each physical core can accommodate a high number (e.g., up to 32) of virtual cores.
Importantly,  a queuing model allows us to capture the flexible relation between the computational resources assigned to a VNF and the resulting processing times, i.e., VNFs process requests quicker if they have more resources.
For stability, we must have  $\Lambda(b) < \mu_b/\theta_v$. 

A deployment is 
% then 
said to be {\em feasible} for request $\rb$ if
the sum of the traffic forwarding latency and processing
latency over all $v \in \Vc^s_{\rb}$, with the latter computed
considering the maximum VM capability $\bar{\mu}$ for all VNFs, does not exceed 
target delay $D^s_{\rb}$. 
Also, for every  $\rb$, let $\ell^*(\rb)$ be the highest layer 
for which running all $v \in \Vc^s_{\rb}$ on  VMs in a node at that layer is a feasible deployment. 
Next, we define the {\em fair per-job delay allocation}  over the
jobs $(\rb,v)$  
as the per-job processing latency budget, which will serve as a guide for placing request's jobs, allocating  computing resources to them, and performing  VMs  sharing across distinct jobs.

The intuition behind the fair delay allocation is to give more resources to more complex VNFs, i.e., those with larger computational requirements~$\theta_v$. Such an allocation is {\em fair}, in that no VNF contributes in a disproportionate manner to the total processing time.
Specifically, since a deployment of request $\rb$ is feasible
if the total processing latency of its jobs is at most $D^s_{\rb} -
d_{\ell^*(\rb)}$, then 
% is a lower bound on the allowed processing latency of jobs in a feasible deployment of request $\rb$.
we distribute this latency budget over the distinct jobs  $(\rb,v)$ according to their {\em relative} processing requirements. 
Namely, for each $(\rb,v)$,  the fair per-job delay allocation  is: 
$D^v_{\rb} = \frac{M_{\rb,v}}{\sum_{u \in \Vc^s_{\rb}} M_{\rb,u}} ( D^s_{\rb} -
d_{\ell^*(\rb)} )$, where   $M_{\rb,v}=\frac{1}{\bar{\mu}-\theta_v \lambda_{\rb}}$ is the latency incurred by running $v$ using the maximum VM capability $\bar{\mu}$, 
and 
$\frac{M_{\rb,v}}{\sum_{u \in \Vc^s_{\rb}} M_{\rb,u}}$
accounts for the relative time complexity of $v \in \Vc^s_{\rb}$. It follows that 
a job $(\rb,v)$ corresponding to a VNF $v$ with a larger value of $\theta_v$ will be assigned a larger value of $D^v_{\rb}$.

Consider now request $\rb$ such that job $(\rb,v)$ is deployed on a node at layer $\ell_r^v \leq \ell^*(\rb)$, while satisfying the fair per-job delay allocation.
The overall processing latency of the request is at most: 
\begin{align}
\sum_{v \in \Vc^s_{\rb}} D_r^v
&=
\sum_{v \in \Vc^s_{\rb}} \frac{M_{\rb,v}}{\sum_{u \in \Vc^s_{\rb}} M_{\rb,u}} ( D^s_{\rb} -
d_{\ell^*(\rb)} ) \\
&= D^s_{\rb} -
d_{\ell^*(\rb)}.
\end{align}
%is the {\em minimal processing delay} of job $(\rb,v)$.
By the definition of $\ell^*(r)$ and the fact that forwarding latency is monotone with the node's level, we have that the overall latency of deploying request $\rb$ is at most $D^s_{\rb}$, and thus feasible.
In the following, we will consider feasible 
service deployments that satisfy such fair per-job delay allocation, and we will compare our solution to an optimal solution that must also satisfy such fair per-job delay bounds.

{\bf Cost.}
%We assume that 
Running a VM
$b$ at a node of layer $\ell(b)$ implies a {\em fixed} cost, $\kappa^{\ell(b)}_f$.
The fixed cost reflects the fact that unused VMs still entail additional work for the hypervisor and keep resources allocated, both of which consume power~\cite{kumar2018renewable,morabito2015power}, hence, money.
Furthermore, any VM $b$ running on a node at layer
$\ell(b)$ at speed $\mu_b$ incurs cost $\kappa^{\ell(b)}_p \mu_b$,
which is proportional to the computing resources consumed by  $b$.
Our model accounts for the fact that running VMs at a higher layer
(closer to the cloud) is cheaper \cite{5gt-d14} than running VMs at
lower layers (closer to the edge), i.e., $\kappa^{\ell+1}_f <
\kappa^{\ell}_f$ and $\kappa^{\ell+1}_p < \kappa^{\ell}_p$.
Cost parameters can also account for differently-owned and/or federated resources,
e.g., in different administrative domains.

Denoting by $\sigma$ the sequence of the requests handled by the system till
the current time,  we define $\phi(A,\sigma)$ as the overall {\em
  instantaneous} cost of the deployment decisions made by an algorithm 
$A$, i.e., the sum of fixed and proportional costs of all VMs currently used.
%set of service instances running in the system, 
%, where $A$ denotes the algorithm and $\sigma$ denotes 
$\phi(A)$, instead, refers to the instantaneous cost of
algorithm $A$ at the end of the whole request arrival/departure process.

\begin{comment}
\begin{table}[t]
	\centering
\caption{List of main notations}
    \small
    \begin{tabular}{|c|p{0.8\columnwidth}|}
		\hline
		{\bf Symbol} & {\bf Description} \\
		\hline
		\hline
		$\bar{\mu}$ & maximum capacity of a VM \\
		\hline
	   $\mu_b$ & computing resources allocated to VM $b$ \\
		\hline
		$\rb$ & service request\\
		\hline
		$(\rb,v)$ & job to be run by $v$ for request $\rb$\\
		\hline
		$\lambda_r$ & traffic load associated with $\rb$\\
		\hline
		$\Lambda(b)$ & total traffic load of VM $b$\\
			\hline
		$i^*$ & location (layer) of the highest node running VNFs for $\rb$\\
			\hline
			$\theta_v$ & processing units per traffic unit required by $v$\\
			\hline
				$\ell^*(\rb)$ & highest layer where all VNFS of $\rb$ can be feasibly placed\\
					\hline
		$D^s_r$ & Service target latency \\
			\hline
			$D^v_r$ & VNF job $(\rb,v)$ target latency \\
			\hline
			$\phi(A,\sigma)$ & instantaneous cost of algorithm $A$ for input sequence $\sigma$\\
			\hline
	\end{tabular}
\label{tbl:notations}
\end{table}
\end{comment}

\subsection{Problem formulation}
\label{sec:formulation}
The main decisions we have to make are (i) whether VM~$b$ should be used for job~$(\rb,v)$, expressed through binary variables~$y(\rb,b,v)\in\{0,1\}$, and (ii) how much computational resources to assign to each VM, expressed through the real variable~$\mu_b$.
Importantly, $y$-variables also express whether or not a certain VM~$b$ is active; specifically, $b$~is active -- hence, the corresponding fixed cost is incurred -- if and only if it is used by at least one job, i.e.,~$\sum_{(\rb,v)}y(\rb,b,v)\geq 1$.
Also notice that, in our system model, VMs do not migrate, hence, the values of the $y$-variables do not change during the lifetime of a job. 

Our goal is to minimize the total cost, i.e.,
\begin{equation}
\label{eq:obj}
\min_{y,\mu_b} \sum_{b} \left(\kappa_f^{\ell(b)}\max_{(\rb,v)}y(\rb,b,v)+\kappa_p^{\ell(b)}\mu_b \right).
\end{equation}
The two terms in \Eq{obj} correspond to the fixed and proportional cost, respectively.
% With a slight abuse of notation, $\ell(m)$ denotes the layer VM~$b$ is located at.
%
The constraints we have to satisfy are that all target delays are met: for each job~$(\rb,v)$, if~$b$ is the VM serving it, i.e., $y(\rb,b,v)=1$, then we wish to impose
% a feasible 1-1 assignment of VNFs to VMs satisfying
$\frac{1}{\mu_{b} - \theta_v \Lambda(b)} \leq D^v_{r}$.
% \gabi{the following is incomplete...}
This is captured by requiring
\begin{align}
D^v_{r} \cdot y(\rb,b,v) \left( \mu_{b} - \theta_v \Lambda(b) \right) & \geq y(\rb,b,v) & \forall (\rb,v),
\end{align}
where $\Lambda(b)=\sum_{\rb'} y(\rb',b,v) \lambda_{r'}$.
We further impose that
% Feasibility and completeness are imposed by the constraints:
\begin{align}
\sum_{b}  y(\rb,b,v) & \geq 1, & \forall (\rb,v), \label{eq:every_job_deployed}\\
% \sum_{v} \min_{\rb,v'=v} y(\rb,b,v') & \leq 1 & \forall b.\\
\sum_v \max_r y(\rb,b,v) &\leq 1, & \forall b, \label{eq:single_vnf_per_vm} \\
\mu_b &\leq \bar{\mu}, & \forall b, \label{eq:per_vm_capacity_constraint}
\end{align}
where Eq.~(\ref{eq:every_job_deployed}) ensures that every job is actually deployed, Eq.~(\ref{eq:single_vnf_per_vm}) mandates 
that each VM runs a single VNF, and Eq.~(\ref{eq:per_vm_capacity_constraint}) ensures that the capacity constraint of each VM is met.
Since the problem is combinatorial in nature and non-linear, and the formulation is non-convex, we 
adopt a heuristic approach in designing an efficient and effective algorithm for solving the problem.

\section{Approach and Main Results}
\label{sec:RM}
We now describe the approach we use in designing our algorithm, \dhighshare,
% (\dhighshare), 
which performs feasible VNF placement with {\em time-varying service demand}. \dhighshare\ aims at minimizing the overall cost, %(as defined above), 
and it is shown to be  
{\em asymptotically 2-competitive} when requests duration, $\tau_r$, is unlimited.

\dhighshare\ tackles the sources of inefficiency described in \Sec{intro}, as follows:
\begin{itemize}
    \item to reduce the number of requests served before their deadline (and the resulting waste or resources), \dhighshare\ seeks to minimize the {\em dissimilarity} between jobs that are co-located in the same VM;
    \item to reduce the bin-packing suboptimality, \dhighshare\ leverages efficient, state-of-the-art bin-packing algorithms.
\end{itemize}

Our algorithm carefully manages these challenges according to the overall system load.
Intuitively, we pack together diverse services at times of low system load to avoid creating many underutilized VMs. For high system loads, instead, we want to utilize individual VMs better and allow for minimal dissimilarity. 

Our solution is based on a carefully designed algorithm, constant-\dhighshare (\highshare), with parameter $\eps$, described in \Sec{HS}.
The value of $\eps$ governs the amount of dissimilarity allowed; smaller $\eps$ values imply less dissimilarity.
\highshare($\eps$) places all jobs corresponding to a request in a single node at the highest level for which the placement is feasible (even if it requires starting a new VM for every job).
\highshare($\eps$) is shown to be {\em asymptotically $(2+\eps)$-competitive} under a non-decreasing load.
%when requests duration, $\tau_r$, is unlimited.

\dhighshare\ uses \highshare($\eps$) with dynamically varying values of $\eps$, depending on the overall system load.
When the system load is low, bin-packing suboptimality is the dominant cost factor, and  \dhighshare\ uses a larger value for $\eps$.
When system load increases, bin-packing suboptimality is naturally reduced since any solution must use many VMs for supporting the load.
In this case, \dhighshare\ favors reducing dissimilarity by using a smaller value for $\eps$.
\Fig{PoD} provides an example of these aspects. %of \dhighshare.
One of the main novel elements of our approach is such an ability to switch between different decision-making strategies (expressed through different values of~$\eps$) as the network load varies. Indeed, recent works~\cite{tulino,tulino2} leverage a forecast of future requests for the next few hours; \cite{liu2017dynamic} accounts for the network load evolution but always follows the same strategy, i.e., the same trade-off between resource consumption and overhead. Finally, \cite{moualla2019online}~does not adapt its decision strategy to the load and deploys multiple replicas of each VNF chain to achieve robustness.

We decide when to transition between  $\eps$  values by maintaining a lower bound on the cost of an optimal solution (which is closely correlated with the overall system load).
This approach also allows us to prove that \dhighshare\ is asymptotically 2-competitive when requests duration, $\tau_r$, is unlimited.

Notice that assuming infinite request duration is only required for our analysis. In practice, \dhighshare\ is inherently designed to deal with time-varying load conditions where requests arrive and leave the system.
We show the effectiveness of our approach through an extensive simulation study, presented in \Sec{results}, which provides further insight into the dynamic behavior of \dhighshare\ and the reasoning underlying our algorithmic approach.

\begin{figure}
\centering
\includegraphics[width=.75\columnwidth]{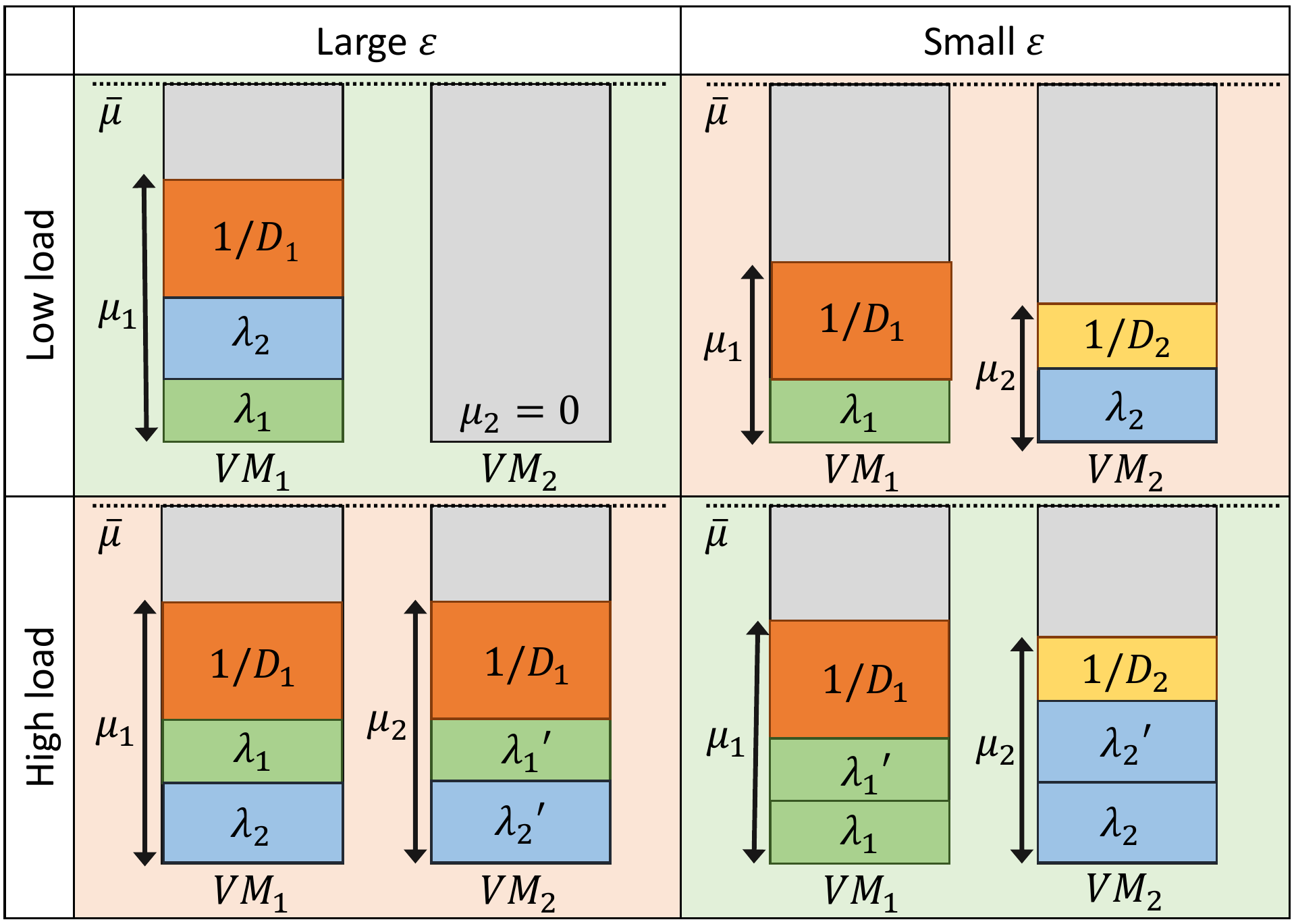}
%\includegraphics[width=.4\columnwidth]{img/roadmap11.png}
%\includegraphics[width=.4\columnwidth]{img/roadmap12.png}\\
%\vspace{2mm}
%\includegraphics[width=.4\columnwidth]{img/roadmap21.png}
%\includegraphics[width=.4\columnwidth]{img/roadmap22.png}
%\caption{Example of the \highshare($\eps$) approach: two types of
%services (1 and 2), each composed of the same, single VNF. The
%service target latency is $D_1$ and $D_2$, resp., with $D_1<D_2$; two
%load conditions are presented: low load (one request per service) in
%the top row, high load (two requests per service) in the bottom
%one. Under low load, a large $\eps$ (top left) implies that the two
%requests can share the same VM and, considering the M/M/1 modeling,
%consume $\mu=\mu_1=\lambda_1+\lambda_2 +1/\min(D_1,D_2)$ (see
%\Sec{HS}). Instead, a small $\eps$  (top right) requires two VMs and
%a larger total $\mu$. For high load, a large $\eps$ (bottom left)
%means that services can share the same VMs, but require a high $\mu$
%as the most stringent target latency must be met by both VMs. A
%small $\eps$ (bottom right) yields instead a cheaper deployment. 
\caption{Example of the \highshare\ approach: consider  two types of
  services (1 and 2), each composed of the same, single VNF. The
 target latency is $D_1$ and $D_2$ (resp.), with
  $D_1\mathord{<}D_2$. 
As detailed in \Sec{HS}, the capacity required at each VM is given by (i) the total load it must serve (green and blue blocks), plus (ii) the inverse of {\em the shortest delay} it must guarantee (orange and yellow boxes).
Under low load  (one request per service), a large
  $\eps$ (top left) entails that the two requests can share the same
  VM and, considering the M/M/1 modeling, consume
  $\mu_1=\lambda_1+\lambda_2 +1/\min(D_1,D_2)$. Instead, a small $\eps$  (top right) requires two VMs and a larger total capacity. For high load (two requests per service), a large $\eps$ (bottom left) means that services can share the same VMs, but require a high capacity as the most stringent target latency must be met by both VMs. A  small $\eps$ (bottom right) yields a cheaper deployment instead. 
\label{fig:PoD}
} %caption
\vspace{-4mm}
\end{figure}

\begin{figure*}
\centering
\includegraphics[width=1\textwidth]{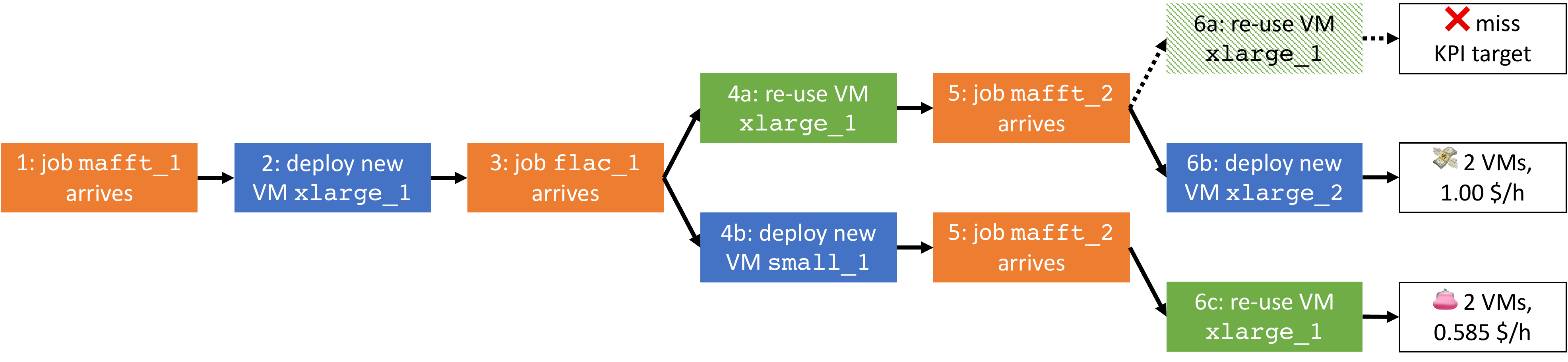}
\caption{
    AWS-based experiment: summary of the possible decisions and their outcome.
    \label{fig:exp-flow}
} %caption
\end{figure*}

\subsection{An AWS-based experiment}

Before describing our \highshare\ approach, we perform a simple real-world experiment to demonstrate the first, and perhaps less intuitive, of the suboptimality sources that we discuss, namely, serving requests with different deadlines within the same node. We consider two types of requests:
\begin{itemize}
    \item a MAFFT protein alignment task~\cite{mafft} whose target delay is 80~s;
    \item a WAV-to-FLAC audio conversion~\cite{flac} whose target delay  is 50~s.
\end{itemize}
We further consider two types of VMs available on Amazon AWS, namely, \texttt{m1.small} and \texttt{m2.xlarge}. For both VMs, we select the Ubuntu operating system, and run the benchmarks via the \texttt{phoronix} command. When ran alone, the MAFFT benchmark takes 37~s on an \texttt{m2.xlarge} machine, and 137~s on an \texttt{m1.small} one; the FLAC benchmark takes 11~s on an \texttt{m2.xlarge}, and 48~s on an \texttt{m1.small}.

A total of three requests arrive at very short interval from each other, in the following order: first a request for the MAFFT task (\texttt{mafft\_1}), then one for the FLAC task (\texttt{flac\_1}), and finally a second MAFFT request (\texttt{mafft\_2}). Recall that we take an online approach, i.e., we embed all requests as soon as they arrive.

The possible decisions are summarized in \Fig{exp-flow}. Clearly, when request \texttt{mafft\_1} arrives (step~1), we must create a new VM. Since an \texttt{m1.small} VM would be too slow, we must open an \texttt{m2.xlarge} VM, called \texttt{xlarge\_1} (step~2). After job \texttt{flac\_1} arrives (step~3), we can make two decisions: re-use existing VM \texttt{xlarge\_1}, or open a new VM \texttt{small\_1} of type \texttt{m1.small}.

Re-using \texttt{xlarge\_1} (step~4a) would minimize both the number of VMs and the cost {\em so far}, however, things would be different after the arrival of the third job (step~5): using \texttt{xlarge\_1} (step~6a) would result in a violation of the target KPIs, hence, we need to open a new VM.  Furthermore, such a new VM can only be of type \texttt{m2.xlarge} (step~6b), resulting in a higher cost of 1 USD/hour\footnote{All prices refer to the US-east availability zone.}.

Let us assume instead that we create a new VM for the newly-arrived \texttt{flac\_1} job, on the grounds that its target delay and complexity are vastly different from the ones of job \texttt{mafft\_1}; importantly, a cheaper VM of type  \texttt{m1.small} would suffice. Performing this counter-intuitive action allows us to serve request \texttt{flac\_2} on the existing VM \texttt{xlarge\_1} (step~6c), resulting in a total lower cost of 0.585\,USD/hour.

In summary, avoiding serving requests with overly-different values of target delay  within the same VM results in lower global costs. As confirmed in \Tab{exp-data}, it also results in requests served closer to their deadline, hence, intuitively, in less wasted computational capabilities.
\begin{table}
\caption{
    AWS-based experiment: service time and cost for the different options reported in \Fig{exp-flow}
    \label{tab:exp-data}
} %caption
\centering
\begin{tabular}{|l|l|r|}
\hline
{\bf case} & {\bf service time [s]} & {\bf cost [USD/hour]} \\
\hline\hline
6a &\makecell{ {\color{red}\texttt{mafft\_1}: 85}\\{\color{red}\texttt{mafft\_2}: 85}\\{\color{ForestGreen}\texttt{flac\_1}: 33}} & 0.5\\
\hline
6b & \makecell{ {\color{ForestGreen}\texttt{mafft\_1}: 48}\\{\color{ForestGreen}\texttt{mafft\_2}: 22}\\{\color{ForestGreen}\texttt{flac\_1}: 37}} & 1\\
\hline
6c & \makecell{ {\color{ForestGreen}\texttt{mafft\_1}: 74}\\{\color{ForestGreen}\texttt{mafft\_2}: 74}\\{\color{ForestGreen}\texttt{flac\_1}: 45}} & 0.585\\
\hline
\end{tabular}
\end{table}

\section{The \highshare($\eps$) Strategy}
\label{sec:HS}
This section describes and analyzes the constant-\dhighshare\ (\highshare) strategy, with a fixed parameter $\eps$.

\subsection{Algorithm description}\label{sec:algo}

%It follows that \highshare($\eps$) is chiefly concerned with choosing the highest layer for the service request. 

The details of \highshare($\eps$) are presented in \Alg{highshare},
which takes  $\eps$, determining the level of VNF (hence,
VM) sharing among jobs, as an input
parameter. 
Given a request $\rb$, first 
\highshare($\eps$) identifies layer $\ell^*(\rb)$ (Line\,\ref{alg:hs:highest_level}), which
is the highest  layer where the requested service instance can be deployed
without violating the service KPI targets and keeping the fair per-job delay allocation (see \Sec{model}).
We remark that such a  layer is selected because VNF deployment   is cheaper 
at  higher layers and that  
this layer can be found by conducting a binary search, i.e., in time that is logarithmic in the number of layers.
%Hence, we may place requests at a lower layer due to (i) latency feasibility constraints, or (ii) cheaper costs due to sharing.
Furthermore, \highshare($\eps$) deploys all the service VNFs in the same node to minimize the traffic forwarding latency and maximize the computation latency budget.
Recall that, as per our system assumptions, it is {\em always} possible to place in the same node all VNFs composing a service. Furthermore, whenever doing so is possible, it is also beneficial, as it avoids incurring inter-node forwarding latency.

%$d_{\ell^*(\rb)} + \sum_{v \in \Vc^s_r} \frac{1}{\bar{\mu}-\theta_v \lambda_r} \leq D^s_r$,
%\gabi{I removed the $+\lambda_r$ from the denominator, since $\Lambda(...)$ already accounts for the entire load on that VM.}
%where we abuse the notation and denote by $b((\rb,v))$ the VM that runs job $(\rb,v)$.
The specific node $i^*$ at layer $\ell^*(\rb)$ where the request
traffic should be processed is chosen as the one providing the best
load balancing (Line\,\ref{alg:hs:highest_node}),
i.e., the one that currently has the lowest load. However, it is important to highlight that the choice of~$i^*$ does not affect the worst-case performance of the algorithm and that we can use alternative criteria (e.g., choosing the most-recently-used or most-loaded server) with similar guarantees.

\begin{algorithm}[t]
\caption[\highshare]{\highshare($\eps$)}
\label{alg:highshare}
\begin{algorithmic}[1]
%\For{every node $i$, VNF $v$, delay range $\left[  \frac{1}{\bar{\mu} - \lambda_{\min} (1+\eps)^j} , \frac{1}{\bar{\mu} - \lambda_{\min} (1+\eps)^{j+1}} \right)$, $j=1,\ldots$}
%    \State virtually initiate $(i,v,j)$-AP
%\EndFor
\For{any arrival/departure of request $\rb$}
\If{$\rb$ is arriving}
    \State $\ell^*(\rb) \gets \max \set{ \ell |\mbox{fair 
        delay allocation is met at}\, \ell }$
        \label{alg:hs:highest_level}
        \Comment{choose highest possible layer}
            \State $i^* \gets \text{Load\_balancing}(G,\ell^*(\rb))$
        \label{alg:hs:highest_node}
        \Comment{select node}
        \For{each $v \in \Vc^s_r$}
            % \State $D^v_r(i^*) \gets \frac{M_{r,v}}{\sum_{u \in V^s_r} M_{r,u}} \cdot ( D^s_r - d_{\ell^*(\rb)} )$
            % \State $D^V^s_r \gets \frac{\Theta_v}{\sum_{u \in V^s_r} \Theta_u} \cdot D^s_r$
            % \label{alg:hs:delay_distribution}
            % \Comment{determine target latency distribution}
            \State $j^* \gets \floor*{ \log_{(1+\eps)} D^v_r}$
            \label{alg:hs:partition}
            \Comment{determine latency range}
            \State $\mpackalg((\rb,v),(i^*,v,j^*)\mbox{-AP})$
             \label{alg:hs:AP}
            \Comment{assign job to a VM}
        \EndFor
    \Else
    \Comment{$\rb$ is departing}
        \For{each $v \in \Vc^s_r$}
        \State $\munpackalg((\rb,v))$
        \Comment{deallocate the job}
        \EndFor
    \EndIf
\EndFor
\end{algorithmic}
\end{algorithm}

Assigning jobs $(\rb,v)$ to VMs can now be done {\em
  independently} over the VNFs, as long as the fair per-job delay
allocation, $D_r^v$, $v \in \Vc_r^s$, is met. 
%Indeed, recall that each VM  can host one VNF $v\in \Vc$ only. 
%Also, due to the layered nature of our topology, 
%the traffic forward latency is defined given $i^*$. 
Placing each job in a new or existing VM running $v$  poses the following dilemma:
\begin{itemize}
    \item on the one hand, we would like to {\em share} VMs as much as possible, to save on fixed costs;
    \item on the other, sharing the same VM among jobs with different
      delay allocation results in wasted capacity 
(as discussed earlier and exemplified in \Fig{PoD}), and, thus, in higher proportional costs.
\end{itemize}
The parameter~$\eps$ describes the tradeoff underlying such decisions.
Specifically, let $\lambda_{\min}=\inf_{\rb}\set{\lambda_r}$, where we assume $\lambda_{\min}>0$ and is known in advance;  
then we define a series of non-overlapping latency ranges $L_j$, where
$L_j=\left( \frac{1}{\bar{\mu} - \lambda_{\min} (1+\eps)^j}, \frac{1}{\bar{\mu} - \lambda_{\min} (1+\eps)^{j+1}} \right ]$,
for $j\geq 1$,
and $L_0=\left[ \frac{1}{\bar{\mu} - \lambda_{\min}}, \frac{1}{\bar{\mu} - \lambda_{\min} (1+\eps)} \right ]$.
Clearly, we need to impose that 
$\lambda_{\min} (1+\eps)^{j+1} < \bar{\mu}$, $\forall j$; then, the number of latency ranges, $J_{\eps}$, satisfies
$J_{\eps} < \log_{(1+\eps)}\frac{\bar{\mu}}{\lambda_{\min}}-1$. 
\highshare($\eps$) only shares the same VM among jobs within the same
latency range. Thus, a larger value of $\eps$ results in wider ranges and jobs with more diverse fair delay allocations sharing the same VM. On the contrary, a
smaller value of $ \eps$ implies narrower ranges and more {\em similar} jobs within each range. % implies that only jobs with very similar latency constraints can share a VM, and there is a large number~$J_{\eps}$ of ranges.

Given $i^*$, the objective is to assign each VNF $v\in\Vc_r^s$ to a
suitable VM in $i^*$, such that the fair delay allocations are satisfied for all jobs on that VM (Line\,\ref{alg:hs:partition}). 
In particular, given a VNF $v$, we consider the {\em VM
  Assignment Problem}, hereinafter referred to as  $(i,v,j)$-AP, which 
assigns a job  $(\rb,v)$ to a VM in node $i$, while ensuring that $D^v_r \in L_{j}$ (with $j=\floor{ \log_{(1+\eps)} D^v_r}$). Since we only share VMs among jobs in the same range, it is possible to consider separate assignment problems for different values of~$j$.

\begin{algorithm}[t]
\caption[\packalg]{$\mpackalg((\rb,v),(i,v,j)\mbox{-AP})$}
\label{alg:packalg2}
\begin{algorithmic}[1]
\State{$\texttt{viable\_VMs}\gets \emptyset$}
\For{every VM $b$ in $i$ hosting jobs $(r',v)$ in $L_j$}
\label{alg:packalg:for_begin}
 \If{$\frac{1}{\bar{\mu} - \theta_v(\Lambda(b) + \lambda_r)} > D^v_{r}$}
 \label{alg:packalg:check_feasibility_start1}
 \Comment{check feasibility for the new job on $b$}
  \State{{\bf continue}} \Comment{continue to next VM}
 \EndIf
 \For{every $(r',v) \in b$}
 \label{alg:packalg:check_feasibility_start2}
 \Comment{jobs already assigned to $b$}
  \If{$\frac{1}{\bar{\mu} - \theta_v(\Lambda(b) + \lambda_r)} > D^v_{r'}$} 
  \Comment{check feasibility for jobs already on $b$}
   \State{{\bf continue}} \Comment{continue to next VM}
  \EndIf
\label{alg:packalg:for_end}
 \EndFor
\label{alg:packalg:check_feasibility_end2}
\State{$\texttt{viable\_VMs}\gets \texttt{viable\_VMs} \cup \set{b}$} \label{alg:packalg:viable_vms} \Comment{reusable VMs}
\EndFor
\If{$\texttt{viable\_VMs}\neq\emptyset$}
 \State{$b^*\gets\arg\max_{b\in \texttt{viable\_VMs}} \Lambda(b) $} \label{alg:packalg:bestfit} \Comment{Best-fit}
 \State{$\mu_{b^*}\gets \theta_v[\Lambda(b)+\lambda(r)] + \frac{1}{\min_{(\rb',v)\in b}D^v_ {r'}}$}
 \label{alg:packalg:adjust1} \Comment{adjust capability} 
\Else
 \State{$b^*\gets\textbf {create new VM}\text{ in } i$}
 \label{alg:packalg:new}
 \State{$\mu_{b^*}\gets \theta_v \lambda_r + \frac{1}{D^v_r}$}
 \label{alg:packalg:adjust2} \Comment{adjust capability} 
\EndIf
\State{$\textbf{place }(\rb,v)\text{ in }b^*$}
\label{alg:packalg:place}
\end{algorithmic}
\end{algorithm}

The \mpackalg\ procedure for solving $(i,v,j)$-AP is presented in Alg.\,\ref{alg:packalg2}.
It begins by looking for VMs whose capacity could be expanded to
accommodate the additional load of job $(\rb,v)$, while honoring its
delay allocation $D_r^v$ (Line~\ref{alg:packalg:check_feasibility_start1}) as well as that of previously allocated jobs (Lines~\ref{alg:packalg:check_feasibility_start2}-\ref{alg:packalg:check_feasibility_end2}).
It then places such VMs in set \texttt{viable\_VMs}
(Line~\ref{alg:packalg:viable_vms}). If the set is not empty, in
Line~\ref{alg:packalg:bestfit} the viable VM with the least free
capacity is selected, in a Best-fit \cite{epstein09robust} fashion.
If there are no viable VMs, a new VM is instantiated in Line~\ref{alg:packalg:new}.
In either case, job $(\rb,v)$ is assigned to VM~$b^*$ (Line~\ref{alg:packalg:place}).
In Line~\ref{alg:packalg:adjust1} (and similarly in
Line~\ref{alg:packalg:adjust2}), the capacity of~$b^*$ is adjusted to
guarantee that the fair delay allocations  for all jobs assigned to $b^*$ are met.
Through simple manipulation of the M/M/1 delay expression, this
corresponds to setting $\mu_{b^*}=\theta_v[\Lambda(b)+\lambda(r)] +
\frac{1}{\min_{(\rb',v)\in b}D^v_ {r'}}$.  Line~\ref{alg:packalg:viable_vms},
and the fact that the delay allocation  of job $(\rb,v)$ can always be satisfied
by placing it in a new VM,
imply that it is always possible.

The departure of service instances is dealt with in \Alg{unpackalg},
which removes all jobs of request $r$ from the VMs hosting them and
readjusts the VMs capability to release the resources while meeting 
the %latency 
constraints of all remaining jobs.
% (Line~\ref{alg:unpackalg:adjust_capability}).

\begin{algorithm}[t]
\caption[\unpackalg]{$\munpackalg((\rb,v))$}
\label{alg:unpackalg}
\begin{algorithmic}[1]
\State $i \gets$ node running request $\rb$
\State $b \gets$ VM on node $i$ running job $(\rb,v)$
\State remove $(\rb,v)$ from $b$
\State $\mu_b \gets \theta_v \Lambda(b) + \frac{1}{\min_{(\rb',v) \in b}D^v_{r'}}$
\label{alg:unpackalg:adjust_capability}
\Comment{adjust capability}
\end{algorithmic}
\end{algorithm}

\subsection{Competitive ratio analysis}\label{sec:compratio-HS}

In this section, we compare the cost of  \highshare($\eps$) against that of an optimal solution, denoted by \opt, which is aware of the future sequence of service requests that arrive at the NFVO.   
We derive our worst-case performance guarantees for scenarios where each request has infinite duration.
This analysis guides our algorithmic design and provides asymptotic performance guarantees when system load tends to infinity.
However, as we show in later sections, our proposed solutions also provide significantly improved performance in cases where requests have a finite duration, and the system load increases and decreases in a time-varying manner.
We first provide below an overview of our analysis.

{\bf High-level description of the analysis.}
At the heart of our analysis lies a load argument across distinct layers, where we compare the load handled by \highshare($\eps$) at some layer to that handled by \opt.
We consider a shadow solution, \sha($\eps$) (described in the sequel), which is allowed to produce a {\em fractional assignment}, and should also satisfy {\em relaxed delay constraints}.
This solution runs {\em most} of its VMs at full capacity, and the cost of such VMs serves as a {\em lower bound on the cost of \opt}.
We show that the cost induced by the deployment performed by \highshare($\eps$) is comparable to that of this lower bound, and provide an additional bound on the {\em additive} cost of all remaining VMs that may be used by \opt.
In what follows, we provide a detailed account of our analysis.

We first show that \highshare($\eps$) performs feasible deployments for any service request.
\begin{lemma}
\label{lemma:alg:hs:soundness}
If \highshare($\eps$) assigns $\rb$ to node $i^*$ at layer $\ell^*(\rb)$, allocating VM capability $\mu_{b^*}$,  then such a deployment of the service instance is feasible.
\end{lemma}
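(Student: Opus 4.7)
The plan is to verify the two contributions to the end-to-end latency (forwarding and processing) separately and show that their sum is bounded by $D^s_r$, invoking the identity $\sum_{v \in \Vc^s_r} D^v_r = D^s_r - d_{\ell^*(r)}$ already established in Section~\ref{sec:model}.

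First, I would handle the forwarding latency. By construction, \highshare($\eps$) places \emph{every} job $(r,v)$ with $v \in \Vc^s_r$ on some VM in the single node $i^*$ at layer $\ell^*(r)$. Since the forwarding latency incurred by $r$ is defined as the maximum forwarding latency over the layers hosting its jobs, this contribution equals exactly $d_{\ell^*(r)}$.

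Next, I would bound the processing latency of each job $(r,v)$ by its fair allocation $D^v_r$. There are two cases, corresponding to the two branches in \Alg{packalg2}. In the \emph{new VM} branch, the algorithm sets $\mu_{b^*} = \theta_v \lambda_r + 1/D^v_r$, so $1/(\mu_{b^*} - \theta_v \lambda_r) = D^v_r$ exactly. Here I would also note that $\mu_{b^*} \leq \bar\mu$: by the definition of $\ell^*(r)$, running each VNF alone at maximum capacity $\bar\mu$ is feasible at layer $\ell^*(r)$, which gives $1/(\bar\mu - \theta_v \lambda_r) \leq D^v_r$ and hence $\mu_{b^*} \leq \bar\mu$. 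In the \emph{reuse} branch, $b^*$ is picked from \texttt{viable\_VMs}, and its capability is updated to $\mu_{b^*} = \theta_v[\Lambda(b^*) + \lambda_r] + 1/\min_{(r',v) \in b^*} D^v_{r'}$, where the minimum now ranges over \emph{all} jobs on $b^*$ including $(r,v)$. Hence the processing latency experienced by \emph{every} job on $b^*$ (new and pre-existing) equals $\min_{(r',v) \in b^*} D^v_{r'} \leq D^v_{r'}$, so each job's individual budget is honored. I would also point out that the feasibility filter on Lines~\ref{alg:packalg:check_feasibility_start1}--\ref{alg:packalg:check_feasibility_end2} of \Alg{packalg2}, which requires $1/(\bar\mu - \theta_v(\Lambda(b)+\lambda_r)) \leq D^v_{r'}$ for every $(r',v)\in b$ as well as the new job, guarantees that the updated $\mu_{b^*}$ stays within $\bar\mu$, so \Eq{per_vm_capacity_constraint} is not violated.

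Finally, summing over $v \in \Vc^s_r$ yields a total request latency of at most
\begin{equation*}
d_{\ell^*(r)} + \sum_{v \in \Vc^s_r} D^v_r = d_{\ell^*(r)} + \bigl(D^s_r - d_{\ell^*(r)}\bigr) = D^s_r,
\end{equation*}
which is the desired feasibility bound. The only delicate step is the reuse branch: one must observe that raising $\mu_b$ while simultaneously admitting extra load $\lambda_r$ leaves the processing time of previously assigned jobs unchanged (in fact, possibly smaller), because the update formula cancels the additional $\theta_v \lambda_r$ term in the denominator of the M/M/1 expression and caps the residual service-rate slack by the new tightest deadline on the VM. Once that invariant is stated clearly, the rest is arithmetic.
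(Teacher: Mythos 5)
Your proof is correct and follows essentially the same route as the paper's: it relies on the definition of $\ell^*(\rb)$ for forwarding latency and node feasibility, on the viability checks in \Alg{packalg2} to keep $\mu_{b^*}\leq\bar{\mu}$, and on the capability-adjustment formulas to pin each job's processing latency at or below its fair allocation $D^v_{\rb}$, summing to $D^s_{\rb}-d_{\ell^*(\rb)}$. The paper's own proof is a terser sketch of exactly these steps; your version just makes the two-branch case analysis and the invariant for pre-existing jobs on a reused VM explicit.
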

% For brevity, we omit the proof, as it is easy to see that the lemma holds by inspection of \Alg{highshare}. 
\begin{proof}
Consider $\rb$ being placed at node  $i^*$, located at layer $\ell^*(\rb)$. 
By the definition of \Alg{highshare}, $\ell^*(\rb)$ is the highest layer where deploying all $v \in \Vc^s_{\rb}$ on new VMs in a node at layer $\ell$ is feasible.
When considering \Alg{packalg2}, each job $(\rb,v)$ is placed in one of the viable VMs at $i^*$ where running the VM at maximum processing capability ensures a feasible solution (Lines~\ref{alg:packalg:for_begin}--\ref{alg:packalg:for_end}).
Since \Alg{packalg2} uses the minimal capability that ensures feasibility (Lines~\ref{alg:packalg:adjust1}--\ref{alg:packalg:adjust2}), the result follows.
% Then, by the definition of feasibility and the definition of $D^v_r$ for any job $(\rb,v)$, we have, as required:
% \begin{eqnarray}
% M_{r,v}
% &\mathord{=}& \frac{M_{r,v}}{\sum_{u \in V^s_r} M_{r,u}} \cdot \sum_{u \in V^s_r} M_{r,u}\nonumber\\
% &\mathord{\leq}&  \frac{M_{r,v}}{\sum_{u \in V^s_r} M_{r,u}} \cdot (D^s_r - d_{\ell^*(\rb)})
% \nonumber\\
% &\mathord{=}& D^v_r(d_{\ell^*(\rb)})\,.\nonumber
% \end{eqnarray}
\end{proof}

Next,  we compare the performance of \highshare($\eps$) to the optimum. To bound the competitive ratio and leverage load rearrangement arguments in our proofs, we introduce 
 an alternative request (hence, job) arrival/departure process, and an alternative placement strategy. 
Specifically, 
%we introduce a modified job arrival process,  Let the generic job $(\rb,v)$ be associated with latency range $L_j$. I.e., the per-job latency constraint satisfies
%$D^v_r \in \left[  \frac{1}{\bar{\mu} - \lambda_{\min} (1+\eps)^j} , \frac{1}{\bar{\mu} - \lambda_{\min} (1+\eps)^{j+1}} \right)$.
for any job $(\rb,v)$ associated with latency range $L_j$, i.e., such that 
$D^v_r \in \left(  \frac{1}{\bar{\mu} - \lambda_{\min} (1+\eps)^j} , \frac{1}{\bar{\mu} - \lambda_{\min} (1+\eps)^{j+1}} \right]$
(or, possibly,  $D^v_{\rb}=\frac{1}{\bar{\mu}-\lambda_{\min}}$ in the case of $j=0)$, 
we define a corresponding {\em top} job, $\overline{(\rb,v)}$, as
equivalent to $(\rb,v)$ but associated with delay constraint
$\overline{D^v_r} = D_j = \frac{1}{\bar{\mu} - \lambda_{\min}(1+\eps)^{j+1}} \geq D^v_r$.
\footnote{Note that %in particular 
the load offered by $\overline{(\rb,v)}$ is the same as that of $(\rb,v)$, and the difference is that the top job has a (possibly) more relaxed  delay constraint.}
Note that  all top jobs falling in the same latency range $L_j$ have the same delay constraint $D_j$.

{\bf A shadow strategy.}
We now introduce the {\em shadow fractional assignment (\sha($\eps$))},  a {\em fractional} placement strategy, whose cost
will be used to determine
a lower bound on the cost of \opt\ within the analysis of \highshare($\eps$).
Furthermore, in \Sec{DHS} we also use \sha($\eps$) explicitly within our algorithm \dhighshare, which can handle time-varying %dynamically 
load.

\sha($\eps$) uses the same $\eps$ value as \highshare($\eps$), and operates as follows: 
\begin{inparaenum}[(i)]
\item \label{sha_top} it handles the top job sequence $\overline{(\rb,v)}$;
\item it never places in the same VM jobs associated with different $L_j$ (hence, by (i) and the definition of top jobs, with different delay allocations);
\item \label{sha_feasible}
it places a job in a single node; %so that the per-job delay allocation is met;
%where it is feasible to run the entire job using its original delay constraint on a single VM;
\item \label{sha_fractional} within the chosen node, it can place fractions of a job load on different VMs (i.e., it works with a ``fluidified'' version of the jobs);
\item \label{sha_optimal} it generates the {\em optimal} placement that satisfies  conditions~(\ref{sha_top})-(\ref{sha_fractional}) above.
\end{inparaenum}
In case of a request departure, \sha($\eps$) removes the load
associated with the corresponding jobs and fractionally rearranges the
remaining load; thus, 
the cost of the resulting placement is equivalent to that of
re-running \sha($\eps$) on all the remaining requests. 

\sha($\eps$) is used in two manners:
\begin{inparaenum}[(i)]
\item guiding the decisions of our {\em dynamic} algorithm, \dhighshare\ (details are provided in \Sec{DHS}), and
\item identifying a {\em lower bound on the cost of \opt}.
\end{inparaenum}
We now briefly explain how the cost of \sha($\eps$) can be used to define a lower bound on the cost of \opt, where we refer to this property as the {\em relaxation property of \sha($\eps$)}.
Note that by using the top sequence with a
fractional assignment, for each VNF $v$, every latency value $D_j$ of a top job, and every node $i$, all the VMs, save at most one, used by \sha($\eps$) in $i$   work at full computing capacity $\bar{\mu}$ to handle jobs associated with $v$ and $D_j$. 
Since such full VMs handle the load placed upon them in the most efficient manner (i.e., there are no inefficiencies due to placing jobs with different delay constraints on the same VM), this amount of load must also be handled by \opt. All other properties of \sha($\eps$) essentially relax the constraints imposed on \opt. It follows that the cost of running these full VMs serves as a lower bound on the cost of \opt.

We begin our analysis
% then start our competitive analysis
by showing that \highshare($\eps$) places jobs in the highest layer possible.
% (proof omitted for brevity, follows from the feasibility of \opt\, and
% feature~(\ref{sha_feasible}) of \sha).
\begin{lemma}
\label{lem:alg_places_above_opt}
If \highshare($\eps$) places the jobs of request $\rb$ in node $i^*$ at layer $\ell^*(\rb)$, then \opt\ and \sha($\eps$) both place every job $(\rb,v)$ in some node $i'$ at some layer $\ell'$, with $\ell' \leq \ell^*(\rb)$.
\end{lemma}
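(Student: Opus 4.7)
The plan is a short proof by contradiction driven by the definition of $\ell^*(\rb)$ and by the fact that the request's forwarding latency equals the maximum over the layers of its jobs.

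First, I would recast the definition of $\ell^*(\rb)$ in the form most useful for the argument: $\ell^*(\rb)$ is the largest layer $\ell$ satisfying $d_\ell + \sum_{v \in \Vc^s_\rb} M_{\rb,v} \leq D^s_\rb$, so that at layer $\ell^*(\rb)+1$ the strict reverse inequality
\begin{equation*}
d_{\ell^*(\rb)+1} + \sum_{v \in \Vc^s_\rb} M_{\rb,v} > D^s_\rb
\end{equation*}
holds. This is the form in which the feasibility-based definition from \Sec{model} will be contradicted.

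For \opt, I would suppose toward contradiction that some job $(\rb,v^*)$ is placed at a layer $\ell' \geq \ell^*(\rb)+1$. Since the forwarding latency of $\rb$ is the maximum over the layers hosting its jobs, it is at least $d_{\ell'} \geq d_{\ell^*(\rb)+1}$. Since every VM has $\mu_b \leq \bar{\mu}$, each job $(\rb,v)$ has processing latency at least $M_{\rb,v}=\frac{1}{\bar{\mu}-\theta_v\lambda_\rb}$; summing across $v \in \Vc^s_\rb$ and adding the forwarding component yields a total request latency strictly greater than $D^s_\rb$, contradicting \opt's feasibility with respect to the request's target delay. For \sha($\eps$), the argument transfers almost verbatim: condition~(iii) in its definition keeps each top job entirely inside a single node, so the forwarding lower bound $d_{\ell'}$ still applies; the top-job substitution only relaxes per-job delay bounds, leaving $\lambda_\rb$ and the per-job minimum processing latency $M_{\rb,v}$ at capability $\bar{\mu}$ unchanged; hence the identical chain of inequalities contradicts \sha's need to respect the overall request target $D^s_\rb$.

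The step I expect to be the main obstacle is the \sha\ case: one has to argue carefully that, even under a fractional split over several capacity-$\bar{\mu}$ queues inside the chosen node, the per-job minimum processing latency cannot fall below $M_{\rb,v}$, and that the aggregate constraint $D^s_\rb$ is genuinely binding on \sha\ (not merely the relaxed per-job bounds $D_j \geq D^v_\rb$). Once these two points are settled -- either by invoking the implicit feasibility requirement that makes \sha\ a valid lower bound on \opt, or by a direct convexity argument on the M/M/1 latency function -- the lemma follows as an immediate contradiction with the maximality of $\ell^*(\rb)$.
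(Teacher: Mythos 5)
Your proof is correct and follows essentially the same route as the paper's: assume a job is placed above $\ell^*(\rb)$, use monotonicity of $d_\ell$ for the forwarding term and $\mu_b\le\bar{\mu}$ to lower-bound the processing term by $\sum_{v}M_{\rb,v}$, and contradict the maximality of $\ell^*(\rb)$; the paper then transfers the argument to \sha($\eps$) exactly as you do, via the single-node condition in its definition. The obstacle you flag for the fractional case is resolved the same way the paper implicitly resolves it: feasibility (and hence $\ell^*(\rb)$) is \emph{defined} in \Sec{model} with processing latency computed at capability $\bar{\mu}$, i.e., via the $M_{\rb,v}$, so the bound is a matter of definition rather than a convexity argument about the split.
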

% The proof (omitted for brevity) follows immediately from the feasibility of \opt\, and feature~(\ref{sha_feasible}) of \sha.
\begin{proof}
We start by showing that the claim holds for \opt. Assume, by contradiction, that \opt\ places some job $(\rb,v)$ at node $i'$ at layer $\ell'>\ell$.
Then, the network latency of \opt\ is higher than that of \highshare($\eps$) (since the traffic associated with $\rb$ has to reach layer $\ell'>\ell$), i.e., $d_{\ell'}>d_{\ell}$.
As for the processing latency yielded by the \opt\ placement, this is at least $\sum_{v \in V^s_r} M_{r,v}$.
By construction,   \highshare($\eps$) ensures that node $i^*$ is at the highest layer $\ell^*=\ell^*(\rb)$ for which
%\begin{equation}
%\nonumber
$\sum_{v \in V^s_r} M_{r,v} \leq D^s_r-d_{\ell^*}$; 
%\end{equation}
this implies that the assignment made by \opt\ is not feasible, thus contradicting the initial assumption.
By item~(\ref{sha_feasible}) in the definition of \sha, the above argument also holds for \sha($\eps$).
\end{proof}

In the remainder of this section,
we consider that requests have an infinite duration, i.e., $\tau_r=\infty$, for all arriving requests $r$.
Under this condition, we can bound the total amount of traffic load (over all possible service requests) that \sha($\eps$) may process at a different layer than the one selected by \highshare($\eps$) for any given node used by \highshare($\eps$), any VNF, and any latency range.
%The following lemma bounds the overall load which \highshare($\eps$) and \sha($\eps$) place on distinct layers.
\begin{lemma}
\label{lem:max_disagreement_load}
For every node $i$ at layer $\ell$, every VNF $v$, and latency range
$L_j$, the overall load of VNF $v$, handled by \highshare($\eps$) at
$i$ and associated with $L_j$, that is handled by \sha($\eps$) at
a  layer $\ell' \neq \ell$, is at most $n_{\ell'} \cdot \lambda_{\min}(1+\eps)^{j+1}$, where $n_{\ell'}$ is the number of nodes at layer $\ell'$. 
\end{lemma}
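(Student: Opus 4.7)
The plan is to combine \Lem{alg_places_above_opt} with an exchange argument exploiting the optimality of \sha($\eps$) and the fact that higher layers are strictly cheaper. First, I would note that by construction of \Alg{highshare}, every request~$r$ whose job~$(r,v)$ is placed by \highshare($\eps$) at node~$i$ (layer~$\ell$) satisfies $\ell^*(r)=\ell$. Applying \Lem{alg_places_above_opt} to these requests, \sha($\eps$) must place $(r,v)$ at a layer $\ell'\leq\ell^*(r)=\ell$, so the hypothesis $\ell'\neq\ell$ actually reduces to the case $\ell'<\ell$. Let $J$ denote the set of jobs counted in the lemma.

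Next, I would fix a node~$i'$ at layer~$\ell'$ and bound the load from $J$ that \sha($\eps$) places at~$i'$. The argument rests on two structural facts. First, by properties~(ii) and~(iv) in the definition of \sha($\eps$), within any node the VMs serving $(v,L_j)$ are packed fractionally and kept segregated from other (VNF, latency range) pairs. Second, by the optimality condition~(v), at most one such VM per node can be strictly below the maximum capacity~$\bar{\mu}$: otherwise two partial VMs could be merged into one full VM plus a smaller partial, saving one unit of fixed cost. A full VM running top jobs of $(v,L_j)$ at delay~$D_j$ satisfies $\theta_v\Lambda(b)=\bar{\mu}-1/D_j=\lambda_{\min}(1+\eps)^{j+1}$, which is precisely the quantity appearing in the lemma.

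The heart of the argument is then an exchange step. Suppose, for contradiction, that the disputed load at~$i'$ exceeds $\lambda_{\min}(1+\eps)^{j+1}$ in CPU-consumption units. Using fractionality, I would peel off exactly one full VM's worth of pure $J$-load from~$i'$, re-deploy it as a new full VM at node~$i$ of layer~$\ell$, and repack the remaining load at~$i'$ into full VMs plus at most one partial. Feasibility is preserved because every $r$ contributing to $J$ has $\ell^*(r)=\ell$, so the moved load is deployable at~$\ell$ under delay~$D_j$; any remaining non-movable load at~$i'$, coming from requests with $\ell^*(r'')<\ell$, stays put. Since $\kappa_f^{\ell}<\kappa_f^{\ell'}$ and $\kappa_p^{\ell}<\kappa_p^{\ell'}$, this rearrangement strictly reduces cost, contradicting the optimality of \sha($\eps$). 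Summing the per-node bound over the $n_{\ell'}$ nodes at layer~$\ell'$ yields the claimed inequality.

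The main obstacle is making the exchange step fully rigorous: one must justify that exactly one full VM's worth of pure $J$-load can always be peeled off, even when inside the VMs at~$i'$ the $J$-load is fractionally comingled with load from requests~$r''$ whose $\ell^*(r'')<\ell$ (and which therefore cannot be relocated to~$\ell$). The fungibility of top jobs sharing the same delay~$D_j$ together with property~(iv) should make such a rearrangement possible, but the careful accounting --- ensuring the residual load at~$i'$ repacks cleanly into full VMs plus at most one partial, so that the net effect of the swap is removing one VM at~$\ell'$ while adding one at~$\ell$ --- is the technical crux of the proof.
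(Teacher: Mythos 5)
Your proposal is correct and follows essentially the same route as the paper: invoke \Lem{alg_places_above_opt} to reduce to $\ell'<\ell$, use the fractionality of \sha($\eps$) and the fact that all top jobs in $L_j$ share the same delay $D_j$ to extract a full VM's worth ($\lambda_{\min}(1+\eps)^{j+1}$) of purely disputed load whenever the bound is exceeded, and relocate that VM to a strictly cheaper higher layer to contradict \sha's optimality. The only cosmetic differences are that the paper moves the VM to layer $\ell'+1$ rather than all the way to $\ell$ and obtains the layer-wide bound via the pigeonhole principle rather than a per-node bound summed over $n_{\ell'}$ nodes; the ``technical crux'' you flag (comingling with load from requests that cannot be lifted) is exactly what the paper's ``without loss of generality'' step dispatches using the fungibility you identify.
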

\begin{proof}
Assume by contradiction that there exist some VMs running VNF $v$ at layer $\ell'$,  associated with latency range  $L_j$, that, according to \sha($\eps$), handle a load higher than   $n_{\ell'} \cdot \lambda_{\min}(1+\eps)^{j+1}$, while  \highshare($\eps$) handles that load at node $i$ at layer $\ell \neq \ell'$.  
By Lemma~\ref{lem:alg_places_above_opt}, we have that $\ell' < \ell$.  
By the pigeonhole principle~\cite{pigeon}, there exists at least one node at a layer $\ell' < \ell$, such that there exists a total load of at least $\lambda_{\min}(1+\eps)^{j+1}$ corresponding to $L_j$, that is handled by \highshare($\eps$) in node $i$ at layer $\ell$, but is handled by \sha($\eps$) in node $i'$ at layer $\ell'$.
Without loss of generality, we assume that there exists a {\em full} VM in node $i'$ at layer $\ell'$ that, according to \sha, processes  $\lambda_{\min}(1+\eps)^{j+1}$ traffic with latency range $L_j$. This assumption is fair since \sha($\eps$) applies a fractional load assignment and places in the same VM only jobs associated with the same latency range, while the delay constraints of all jobs in that latency range are the same in the top sequence.

Consider now an alternative solution, $\msha(\eps)'$, which is identical to \sha($\eps$), except for having this entire VM run at some node $i''$ at layer $\ell'+1$.
First, note that this produces a feasible shadow fractional assignment for the workload handled by \sha($\eps$) and does not require increasing the processing speed of the VM at layer $\ell'+1$. Indeed, since \highshare($\eps$) places these jobs at layer   $\ell'+1$ or higher, the per-job latency constraint is satisfied for the solution of $\msha(\eps)'$ and, by definition of $L_j$, \sha($\eps$) already processes $\lambda_{\min}(1+\eps)^{j+1}$  traffic at $i'$ using the maximum VM computing capability.
Second, the cost of $\msha(\eps)'$ is strictly smaller than the cost of \sha($\eps$) since, at layer $\ell'+1$, both the proportional cost and the fixed cost are smaller than at layer $\ell'$.
This contradicts the optimality of \sha, thus completing the proof. 
\end{proof}

Next, for every VNF $v$, every node $i$, and every latency range
$L_j$, let $\Lambda^v_{i,j}$  be the 
 total load due to all jobs $(\rb,v)$ 
 that are handled by \highshare($\eps$) in node $i$ at layer $\ell$.
 We next provide an upper bound on the maximum number of VMs that \highshare($\eps$) requires to handle such load,
which is then  used to prove the competitive ratio of \highshare($\eps$).
\begin{lemma}
\label{lem:highshare:number_of_VMs}
Given  VNF $v$,  node $i$, and latency range $L_j$, the number of VMs used by \highshare($\eps$) to handle workload $\Lambda^v_{i,j}$ is at most $\frac{2  \Lambda^v_{i,j}}{\lambda_{\min}(1+\eps)^j} + 1$. 
\end{lemma}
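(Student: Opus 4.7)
The plan is to exploit the Best-Fit nature of \packalg: whenever \highshare($\eps$) opens a new VM for a job in latency range $L_j$, every preexisting VM at node $i$ running VNF $v$ with jobs in $L_j$ must have failed the viability test. I will distill this into a pairwise lower bound on the loads of any two VMs in $(i,v,L_j)$, and then conclude via a simple pairing argument.

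First I would fix two VMs $b_1,b_2$ ever coexisting and serving VNF $v$ at node $i$ in range $L_j$, with $b_2$ opened after $b_1$, and let $(r,v)$ be the triggering job. By Lines~\ref{alg:packalg:for_begin}--\ref{alg:packalg:check_feasibility_end2} of \Alg{packalg2}, $b_1$ was not viable, i.e.\ at that instant
\begin{equation*}
\theta_v\bigl(\Lambda(b_1)+\lambda_r\bigr)+\frac{1}{\min\bigl(D^v_r,\min_{(r',v)\in b_1}D^v_{r'}\bigr)}>\bar{\mu}.
\end{equation*}
Every delay in the inner $\min$ lies in $L_j$, so the minimum is at least $\frac{1}{\bar{\mu}-\lambda_{\min}(1+\eps)^j}$ (strictly for $j\geq 1$, and as a left endpoint for $j=0$), giving $\bar{\mu}-\frac{1}{\min(\cdot)}\geq\lambda_{\min}(1+\eps)^j$. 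Combining this with $\Lambda(b_2)=\lambda_r$ at its opening and $\theta_v\leq 1$,
\begin{equation*}
\Lambda(b_1)+\Lambda(b_2)\;\geq\;\theta_v\bigl(\Lambda(b_1)+\lambda_r\bigr)\;>\;\lambda_{\min}(1+\eps)^j.
\end{equation*}
Because requests have infinite duration, VM loads are monotone nondecreasing over time, so this pairwise inequality persists thereafter.

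Then I would order the $N$ VMs currently used by \highshare($\eps$) for $(i,v,L_j)$ and pair the $(2k-1)$-th with the $(2k)$-th, for $k=1,\dots,\lfloor N/2\rfloor$, leaving at most one VM unpaired. Each pair contributes strictly more than $\lambda_{\min}(1+\eps)^j$ to $\Lambda^v_{i,j}$, so $\Lambda^v_{i,j}>\lfloor N/2\rfloor\,\lambda_{\min}(1+\eps)^j$; together with $N\leq 2\lfloor N/2\rfloor+1$ this yields $N<\tfrac{2\Lambda^v_{i,j}}{\lambda_{\min}(1+\eps)^j}+1$, as claimed.

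The main obstacle is temporal: the pairwise inequality is established at the instant $b_2$ is opened, whereas the lemma concerns the loads at an arbitrary later time. Monotonicity of loads, granted by the infinite-duration assumption, is what saves the argument; were $\tau_r$ finite, a departure from $b_1$ or $b_2$ could undo the bound. A secondary subtlety is the boundary case $j=0$, where $L_0$ is closed on the left, but the weaker (non-strict) inequality $\bar\mu-1/\min(D)\geq\lambda_{\min}$ still suffices since the infeasibility condition itself is strict.
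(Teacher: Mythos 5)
Your proof is correct and follows essentially the same route as the paper's: both extract, from the fact that \packalg\ only opens a new VM when every existing VM in the same $(i,v,L_j)$ class fails the viability test, a lower bound on the load carried by coexisting VMs, and then conclude by counting. The paper phrases this as ``all VMs but at most one carry load at least $\lambda_{\min}(1+\eps)^j/2$'' whereas you phrase it as a pairwise bound and pair the VMs up; these are interchangeable and yield the same $\frac{2\Lambda^v_{i,j}}{\lambda_{\min}(1+\eps)^j}+1$ bound, and your explicit handling of load monotonicity under infinite duration and of the closed left endpoint of $L_0$ makes the argument, if anything, slightly more careful than the paper's.
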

\begin{proof}
First, note that the delay constraint $D^v_r$ of every 
% bottom
job
% $\underline{(\rb,v)}$
$(\rb,v)$
contributing to $\Lambda^v_{i,j}$ is at least $\frac{1}{\bar{\mu}-\lambda_{\min}(1+\eps)^j}$.
It follows that we can pack a load of at least $\lambda_{\min}(1+\eps)^j$, while satisfying the delay constraints of the jobs assigned to the VM.
We therefore view this latter quantity as the {\em lower bound on the size} of the VMs in the $(i,v,j)$-AP.

The overall load on each VM $b$ of $(i,v,j)$-AP, except for maybe one, is at least $\frac{\lambda_{\min}(1+\eps)^j}{2}$.
This argument follows from the fact that if there were two VMs with load strictly less than $\frac{\lambda_{\min}(1+\eps)^j}{2}$, combining their loads on a single VM would have resulted in a VM with an overall load of at most $\lambda_{\min}(1+\eps)^j$. This assignment is still feasible and incurs a lower placement cost.  
Thus, running such a VM at speed $\bar{\mu}$ would result in every job $(\rb,v)$ assigned to the VM experiencing a latency at most 
$\frac{1}{\bar{\mu} - \lambda_{\min}(1+\eps)^j} \leq D^v_r$,
where the inequality follows from the definition of  $L_j$.
This contradicts the definition of \packalg\ (Alg.~\ref{alg:packalg2}), which assigns jobs to already open VMs if the overall load on the VM would still result in a feasible solution (for some speed no larger than $\bar{\mu}$).
% We note that the above lower bound on the VMs' load, except for maybe one, also holds for \highshare($\eps$) when handling the original job process (and not the bottom one).
Hence, the overall number of VMs used by \highshare($\eps$) for handling  $\Lambda^v_{i,j}$ is at most
$\ceil[\big]{\frac{2 \cdot \Lambda^v_{i,j}}{\lambda_{\min}(1+\eps)^j}} 
\leq \frac{2 \cdot \Lambda^v_{i,j}}{\lambda_{\min}(1+\eps)^j} + 1$,
which proves the thesis.
\end{proof}

% We can now prove the competitive ratio of \highshare($\eps$).
%when requests have unlimited lifetime.
\begin{theorem}%[Competitiveness of \highshare($\eps$)]
\label{thm:REShare_competitive_ratio}
\highshare($\eps$) is a $2(1+\eps)$-asymptotic competitive algorithm when service requests have unlimited duration.
\end{theorem}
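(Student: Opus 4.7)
The plan is to prove $\phi(\mhighshare(\eps))\leq 2(1+\eps)\phi(\mopt)+C$, where $C$ depends only on the infrastructure constants $|\Ic|,|\Vc|,J_\eps,\bar\mu,\lambda_{\min},\eps$ and the layer-dependent costs $\kappa_f^\ell,\kappa_p^\ell$, but \emph{not} on the request sequence $\sigma$. Since $C$ is fixed while $\phi(\mopt)\to\infty$ as the load grows, this inequality immediately yields asymptotic $2(1+\eps)$-competitiveness. By the relaxation property of \sha($\eps$) stated just before Lemma~\ref{lem:alg_places_above_opt}, we have $\phi(\msha(\eps))\leq\phi(\mopt)$, so it suffices to charge \highshare\ against \sha.

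The first step is to aggregate costs per triple $(i,v,j)$ consisting of a node $i$, a VNF $v$, and a latency-range index $j$. By Lemma~\ref{lem:highshare:number_of_VMs}, \highshare's cost contribution at $i$ for VNF $v$ in range $L_j$ is bounded by
$$\left(\frac{2\Lambda^v_{i,j}}{\lambda_{\min}(1+\eps)^j}+1\right)\!\left(\kappa_f^{\ell(i)}+\kappa_p^{\ell(i)}\bar\mu\right),$$
because each opened VM satisfies $\mu_b\le\bar\mu$ and incurs the fixed cost once. I then split $\Lambda^v_{i,j}=\Lambda^{v,=}_{i,j}+\Lambda^{v,\neq}_{i,j}$ into the portion of load that \sha($\eps$) places at some node of layer $\ell(i)$ (matched) and the portion \sha\ places at any other layer (mismatched). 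Lemma~\ref{lem:alg_places_above_opt} guarantees that \sha\ never goes above layer $\ell(i)$, and Lemma~\ref{lem:max_disagreement_load} bounds the mismatched mass per triple by $\sum_{\ell'\neq\ell(i)}n_{\ell'}\lambda_{\min}(1+\eps)^{j+1}$. Summing over all triples, the total mismatched mass is a fixed function of the topology and is independent of $|\sigma|$.

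Next I lower bound \sha's cost on the matched mass at each layer $\ell$. Because \sha\ runs the top sequence with delay $D_j$, each full VM it opens for $(v,j)$ serves a load of exactly $\lambda_{\min}(1+\eps)^{j+1}$ at speed $\bar\mu$, and by the optimality of the shadow assignment at most one VM per (node,$v$,$j$) is not full; pooling over the $n_\ell$ nodes of layer $\ell$, \sha's cost at that layer for $(v,j)$ is at least
$$\left(\frac{\sum_{i:\ell(i)=\ell}\Lambda^{v,=}_{i,j}}{\lambda_{\min}(1+\eps)^{j+1}}-n_\ell\right)\!\left(\kappa_f^\ell+\kappa_p^\ell\bar\mu\right).$$
Comparing the leading (matched) terms gives the key ratio
$$\frac{2/\bigl[\lambda_{\min}(1+\eps)^j\bigr]}{1/\bigl[\lambda_{\min}(1+\eps)^{j+1}\bigr]}=2(1+\eps),$$
while the ``$+1$'' per triple in \highshare's bound, the ``$-n_\ell$'' slack on \sha's side, and the \highshare\ cost charged to the mismatched mass all aggregate to a constant, as there are at most $|\Ic|\cdot|\Vc|\cdot J_\eps$ active triples.

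Summing over all triples and using $\phi(\msha(\eps))\leq\phi(\mopt)$ then yields $\phi(\mhighshare(\eps))\le 2(1+\eps)\phi(\mopt)+C$, proving the claim. The main obstacle I anticipate is the bookkeeping of the additive slack: most delicately, the \highshare\ cost induced by the mismatched mass $\Lambda^{v,\neq}_{i,j}$ must be shown to be $O(1)$ per triple, which requires re-applying Lemma~\ref{lem:highshare:number_of_VMs} to the mismatched submass alone and verifying that the resulting extra fixed cost does not accumulate with $|\sigma|$. Once that accounting is verified, the $2(1+\eps)$ leading ratio emerges cleanly from the geometric relationship between consecutive latency ranges $L_j$ and $L_{j+1}$.
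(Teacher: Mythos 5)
Your proposal is correct and follows essentially the same route as the paper's proof: the same split of $\Lambda^v_{i,j}$ into the load \sha($\eps$) places at the same layer versus elsewhere, the same use of Lemma~\ref{lem:highshare:number_of_VMs} for the upper bound and Lemma~\ref{lem:max_disagreement_load} to cap the mismatched mass, and the same leading ratio $2/[\lambda_{\min}(1+\eps)^j]$ over $1/[\lambda_{\min}(1+\eps)^{j+1}]$ yielding $2(1+\eps)$ with a sequence-independent additive term. The one bookkeeping worry you flag resolves even more simply than you anticipate: there is no need to re-apply Lemma~\ref{lem:highshare:number_of_VMs} to the mismatched submass alone, since the lemma's bound is linear in the load and can be split directly, with Lemma~\ref{lem:max_disagreement_load} turning the mismatched term into the constant $2n(1+\eps)$; also note that the lower bound on \opt\ is the cost $\tilde{\phi}$ of the \emph{full} VMs of \sha($\eps$) only, which is what your floor-based count actually uses.
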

\begin{proof}
Consider $\Lambda^v_{i,j}$;  
%handled by \highshare($\eps$) in some node $i$ at layer $\ell$,
%corresponding to latency range $L_j$ for VNF $v$. We can 
this can be seen as the sum of loads of two types of jobs: type 1,  corresponding to jobs that are handled by \sha($\eps$) at some node $i'$ at layer $\ell$, with an overall load $\tilde{\Lambda}^v_{i,j}$, and type 2, corresponding to jobs that are handled by \sha($\eps$) at some node at a layer other than $\ell$, with an overall load of $\bar{\Lambda}^v_{i,j}=\Lambda^v_{i,j}-\tilde{\Lambda}^v_{i,j}$.
We denote the amount of resources used by \highshare($\eps$) for handling $\Lambda^v_{i,j}$  
by $\mhighshare^v_{i,j}$, and that used by \sha($\eps$) for handling jobs contributing to $\tilde{\Lambda}^v_{i,j}$ by $\msha^v_{i,j}$.

First, by Lemma~\ref{lem:highshare:number_of_VMs}, we have that the overall number of VMs used by \highshare($\eps$) for handling load $\Lambda^v_{i,j}$ is at most
\begin{align}
\frac{2  \Lambda^v_{i,j}}{\lambda_{\min}(1+\eps)^j} + 1
& = \frac{2 \tilde{\Lambda}^v_{i,j}}{\lambda_{\min}(1+\eps)^j} + \frac{2  \bar{\Lambda}^v_{i,j}}{\lambda_{\min}(1+\eps)^j} + 1 \notag \\
& \leq \frac{2  \tilde{\Lambda}^v_{i,j}}{\lambda_{\min}(1+\eps)^j} +
\frac{2 n  \lambda_{\min}(1+\eps)^{j+1}}{\lambda_{\min}(1+\eps)^j} + 1 \notag \\
& = \frac{2  \tilde{\Lambda}^v_{i,j}}{\lambda_{\min}(1+\eps)^j} +
2 n (1+\eps) + 1\,, \notag
\end{align}
where the inequality follows from Lemma~\ref{lem:max_disagreement_load}.
If we let $\kappa^\ell=\kappa_f^\ell + \kappa_p^\ell \bar{\mu}$ denote the cost of running a single VM at maximum speed in node $i$ at layer $\ell$, then
%it follows that the overall cost of \highshare($\eps$) for handling $\Lambda^v_{i,j}$, $\phi(\mhighshare^v_{i,j})$, is at most:
\begin{equation}
\label{eq:highshare_cost}
 \phi(\text{\highshare}^v_{i,j}) \leq 
\frac{2 \tilde{\Lambda}^v_{i,j}}{\lambda_{\min} (1+\eps)^j}  \kappa^\ell
+ [2 n (1+\eps) + 1] \kappa^\ell\,.
\end{equation}
We now derive a lower bound on the cost of the VMs used by \sha($\eps$) for handling jobs contributing to $\tilde{\Lambda}^v_{i,j}$, which serves as a lower bound on the number of VMs used by \sha($\eps$) for handling the total load $\tilde{\Lambda}^v_{i,j} + \bar{\Lambda}^v_{i,j}$.

First, note that the latency constraint of each top job contributing to $\tilde{\Lambda}^v_{i,j}$ is exactly
$\frac{1}{\bar{\mu}-\lambda_{\min}(1+\eps)^{j+1}}$.
Using similar arguments as those used to prove Lemma~\ref{lem:highshare:number_of_VMs}, we can conclude that the maximum load on any VM running any such job is $\lambda_{\min}(1+\eps)^{j+1}$, which we view as the {\em size} of the VMs used by \sha($\eps$) for handling these jobs.
Recall that \sha($\eps$) can place jobs fractionally, and it does not place
jobs with different latency range on the same VM.
It follows that all
VMs handling such jobs at layer $\ell$, except for at most one, are
completely full (recall that \sha($\eps$) is optimal and
therefore, without loss of generality, can be assumed to place all jobs at a level in a single node at that level). 
Then, all these full VMs  %(except at most one) 
have a load equal to their size. 
Consequently, the number of full VMs required by \sha($\eps$) for handling jobs contributing to $\tilde{\Lambda}^v_{i,j}$ is at least:
%\begin{equation}
%\nonumber
$\Bigl\lfloor\frac{\tilde{\Lambda}^v_{i,j}}{\lambda_{\min} (1+\eps)^{j+1}}\Bigr\rfloor \geq \frac{\tilde{\Lambda}^v_{i,j}}{\lambda_{\min} (1+\eps)^{j+1}} - 1,$
%\end{equation}
which translates into
\begin{equation}
\label{eq:sha_cost}
\tilde{\phi}(\msha^v_{i,j}) \geq \left( \frac{\tilde{\Lambda}^v_{i,j}}{\lambda_{\min} (1+\eps)^{j+1}} - 1 \right) \kappa^\ell,
\end{equation} 
where $\tilde{\phi}(\msha^v_{i,j})$ is the cost associated with running only the {\em full} VMs used by \sha($\eps$).
Hence, we have:
\begin{eqnarray}
%\hspace{-3cm}
\hspace{-5mm}\phi(\mhighshare^v_{i,j}) \hspace{-2mm}&\mathord{\leq}&\hspace{-2mm}
% \frac{2 \tilde{\Lambda}^v_{i,j}}{\lambda_{\min} (1+\eps)^j} \kappa^\ell + [2 n (1+\eps) + 1] \kappa^\ell \label{eq:bound:phi:first_bound} \\
%&= \left( \frac{2 \tilde{\Lambda}^v_{i,j}(1+\eps)}{\lambda_{\min} (1+\eps)^{j+1}} \right) \kappa^\ell + (2 n (1+\eps) + 1) \kappa^\ell \notag \\
 2(1+\eps) \left(  \frac{\tilde{\Lambda}^v_{i,j}}{\lambda_{\min} (1+\eps)^{j+1}} - 1 \right) \kappa^\ell \nonumber\\
\hspace{-2mm}& &\hspace{-2mm} + [(2 n +2)(1+\eps) + 1] \kappa^\ell \label{eq:bound:phi:final_bound} \\
\hspace{-2mm}&\mathord{\leq}& \hspace{-2mm} 2(1
                              \mathord{+}\eps)\tilde{\phi}(\msha^v_{i,j})
                              \mathord{+} [(2 n \mathord{+}2)(1
                              \mathord{+}\eps) \mathord{+} 1] 
\kappa^\ell, \nonumber 
\end{eqnarray}
where the first inequality follows from (\ref{eq:highshare_cost}), and the second from (\ref{eq:sha_cost}).
By the relaxation property of \sha($\eps$),  the overall cost of VMs used by \sha($\eps$), and working at full computation capacity, serves as a lower bound on the cost of \opt.
Summing over all nodes $i$,  VNFs $v$, and latency ranges $j$, we get: 
\begin{align}
\label{eq:REShare_vs_OPT}
\phi(\mhighshare(\eps))
 &\leq \\
%&
& \hspace{-15mm} 2(1 \mathord{+} \eps) \phi(\mopt) \mathord{+} ((2 n \mathord{+}2)(1\mathord{+}\eps) \mathord{+} 1) J_{\eps} \abs{V} \sum_i \kappa^\ell. \nonumber
\end{align}
Since $J_{\eps} \leq \log_{(1+\eps)}\frac{\bar{\mu}}{\lambda_{\min}}-1$,
Eq.~(\ref{eq:REShare_vs_OPT})  implies that
\[\lim_{\phi(\mopt) \to \infty} \frac{\phi(\mhighshare(\eps))}{\phi(\mopt)} \leq 2(1+\eps)\,,\] 
thus completing the proof.
% the thesis follows.
\end{proof}

\section{Dynamically adjusting $\eps$: \dhighshare}
\label{sec:DHS}

In this section, we present the \dhighshare\ algorithm, which dynamically adjusts $\eps$ in order to 
optimize the deployment of service instances  
as the service request load varies arbitrarily over time.
The design of \dhighshare\ draws upon the analysis of the competitive ratio of
\highshare($\eps$). 
Our analysis indicates that we get a better {\em multiplicative} competitive ratio when decreasing $\eps$, implying that, ideally, for infinite duration requests, and ever increasing load, it is best to set $\eps$ as small as possible.
Our analysis also shows that the {\em additive} terms in the competitive performance of \highshare($\eps$) increase as   $\eps$ decreases.
Thus, using very small values of $\eps$ may end up being inefficient when the load is small.
Consequently,  \dhighshare\ starts with a large $\eps$ to minimize the constant overheads, and, as the load increases, it reduces $\eps$ to optimize the asymptotic competitive ratio. Finally, since \dhighshare\ explicitly deals with time-varying workloads, it increases $\eps$ when the total load declines, as the constant terms have a larger impact on its performance.

To do so, \dhighshare\ simulates our shadow strategy \sha($\eps$) alongside the actual decisions it performs and uses the cost of \sha($\eps$) as an indicator of the system load, which is then leveraged to dynamically adjust the value of $\eps$ being used by \dhighshare.
Specifically, as the service request load increases, \highshare($\eps$) gets asymptotically closer to \sha($\eps$) when $\eps$ is small, at the cost of a larger additive term due to the VMs partitioning into latency ranges $L_j$  ($0\leq j\leq J_{\eps}$).
This notion yields the main design criterion for \dhighshare: $\eps$ should be
gradually reduced as the load grows, and instead increased when
the load drops, as also depicted in \Fig{PoD}. 
Importantly, we show that
\dhighshare\ has low complexity and
it is asymptotically 2-competitive when requests have infinite duration, i.e., $\tau_r=\infty$;
also, as shown in \Sec{results}, it can effectively cope with very diverse, practical scenarios.

\dhighshare\ can be seen as a way to apply different versions of \highshare($\eps$), constantly adjusting the value of~$\eps$ as the service load varies over time.
Specifically, \dhighshare\ begins by running \highshare($\eps$)
with a large initial value of $\eps$;
at the same time, it simulates \sha($\eps$) and keeps track of its cost.
When the load goes beyond a threshold that, as detailed below, depends on the ratio between \highshare($\eps$)'s and \sha($\eps$)'s costs, we keep track of the current load and reduce $\eps$. This action is performed again and again as long as the load increases,
following the arrival of new requests.

Once the load drops below the previous load mark,
which is essentially due to requests leaving the system,
we increase $\eps$ to its previous value. This approach is applied repeatedly as long as the load decreases.
%Our algorithm can therefore be viewed as generating {\em breadcrumbs} of values of $\eps$ in periods where the load increases, and consuming these breadcrumbs in periods where the load decreases. 

Finally, we remark that, although we consider that service instances arrive and leave, their traffic load remains constant during their lifetime.
\dhighshare\ can also cope with time-varying values of $\lambda_r$. In particular, one can look at a change in the value of $\lambda_r$ as if the current service instance left and another, exhibiting the new value of $\lambda_r$, arrived.  

\subsection{Algorithm description}\label{sec:algo-D}

To formally define \dhighshare, we use the following notation.
Let $t_0$ be the arrival time of the first request. We then look at
sequence $\sigma$ as the concatenation of subsequences $\sigma_1,
\sigma_2, \ldots$, such that $\sigma_{\RSphase}$ is the sequence of requests
arriving/departing in interval $I_{\RSphase}=[t_{{\RSphase}-1},t_{\RSphase})$, for $\RSphase \geq 1$.
Intervals are periods of time during which $\eps$ remains
unchanged: 
for every $I_{\RSphase}$, \highshare($\eps$)  uses a given $\eps$ to determine the latency ranges, as described in \Alg{highshare}.
We indicate such value by $\eps_{h(\RSphase)}$, where $h(\RSphase)$ denotes the index of the level of the system load at the beginning of  $I_{\RSphase}$.
The algorithm keeps track of the most recent load threshold, associated with index $h(\RSphase)$, throughout its execution, using parameter $\dload_{h({\RSphase})}$.
Also, let us define:  
\begin{align}
Y_{\RSphase} &= \tilde{\phi}(\msha(\eps_{h(\RSphase)}),\sigma_{\RSphase}) \,,
\label{eq:Y_n}
\\
Z &= [(2n+2) (1+\eps^*) + 1] \log \frac{\bar{\mu}}{\lambda_{\min}}
    \abs{V} \sum_i \kappa^\ell \,,
\label{eq:Z}
\end{align}
where, similarly to the proof of Theorem~\ref{thm:REShare_competitive_ratio}, $\tilde{\phi}(\msha(\eps_{h(\RSphase)}),\sigma_{\RSphase})$ is the cost associated with running only the full VMs used by $\msha(\eps_{h(\RSphase)})$.
Here, $\eps^*>0$ is an initial value that satisfies $\eps^* \geq
\eps_{h(\RSphase)}$, $\forall \RSphase$. 
We remark that the cost in \Eq{Y_n} refers to the end of the interval $I_{\RSphase}$ and that  
 $Z$ represents the second term in
 Eq.~(\ref{eq:bound:phi:final_bound}), which is independent of both $\RSphase$ and $h(\RSphase)$.
For every request $\rb$ arriving or departing during $I_{\RSphase}$, we let $\sigma_{\RSphase}^{\rb}$ denote the subsequence of request arrivals and departures in $\sigma_{\RSphase}$ up to (and including) the arrival/departure of $\rb$.
We further extend the above notation and define $Y_{\RSphase}^{\rb} = \tilde{\phi}(\msha(\eps_{h(\RSphase)}), \sigma_{\RSphase}^{\rb})$.
At any time $t$, let $\tilde{Y}_{\RStildephase}$ be the value of $Y_{\RSphase}$ for the last interval in which $\eps=\eps_{\RStildephase}$, over all intervals $I_\RSphase$ ending no later than $t$.
We then define:
\begin{align}
C_{\RSphase} &= \frac{Z}{\eps_{h(\RSphase)} \log (1+\eps_{h(\RSphase)})}\,,
\label{eq:C_n}
\\
S_{\RSphase} &= \frac{1}{\eps_{h(\RSphase)}} \sum_{\RStildephase=1}^{h(\RSphase)-1} (2 + 3  \eps_{h(\RStildephase)}) \tilde{Y}_{\RStildephase}\,.
\label{eq:S_n}
\end{align}
Note that during interval $I_{\RSphase}$, both $C_{\RSphase}$ and $S_{\RSphase}$  are fixed.

\dhighshare\ is formally defined in \Alg{dynamichighshare}. 
\dhighshare\ takes as input the value $\eps^*$, which is the initial value of $\eps$ and is used to define $Z$. 
For each $\RSphase=1,2,\ldots$, the subsequence $\sigma_{\RSphase}$ will be implicitly defined during the execution of \dhighshare, according to the requests $\rb$ handled by the algorithm between consecutive updates to the value of $\eps$.
Index $h(\RSphase)$, instead, keeps track of the evolution of the system load
(by updating $\dload_{h(\RSphase)+1}=\Lambda$ in
line~\ref{dhighshare:update_load_threshold}, with $\Lambda$ being the
current system-wide load), and of the corresponding value of $\eps$
given as 
input to \highshare, which is used as a subroutine within \dhighshare\
(\highshare($\eps_{h(\RSphase)}$)).
It follows that, by calling \highshare\ with different values of the parameter~$\eps_{h(\RSphase)}$, \dhighshare\ can adjust to the traffic load as this changes over time, always using the most appropriate values of~$\eps$.
%($\eps=\eps_{h(n)}$).
% Each interval $I_n$ will then be associated with a value $\eps_h$.
%at the beginning of the interval

\begin{algorithm}[t]
\caption[\dhighshare]{\dhighshare}
\label{alg:dynamichighshare}
\begin{algorithmic}[1]
\State init $\RSphase=1$; $h(1)=1$; $\eps_{h(1)} = \eps^*$; $\dload_{h(1)}=0$
\For{{\bf each} service request $\rb$ arriving or departing}
    \State {\bf handle} $\rb$ in \highshare($\eps_{h(\RSphase)})$
    \State {\bf update} the cost $\rb$ in \sha($\eps_{h(\RSphase)})$
    \If{$Y_{\RSphase}^r \geq \max\set{C_{\RSphase}, S_{\RSphase}}$}
        % \Comment{cost (load) increased significantly during $I_n$}
        \Comment{arrival, cost (load) above threshold}
        \label{dhighshare:increase_condition}
        \State $\eps_{h(\RSphase)+1} \gets \eps_{h(\RSphase)} / 2$
        \label{dhighshare:decrease_eps}
        \Comment{reduce $\eps$}
        \State $\dload_{h(\RSphase)+1} \gets \Lambda$
        \label{dhighshare:update_load_threshold}
        \Comment{set load threshold}
        % \State $Y_n \gets Y_n^r$
        \State $\tilde{Y}_{h(\RSphase)} \gets Y_{\RSphase}^r$
        \label{dhighshare:Y_h_update}
        \State $h(\RSphase+1) \gets h(\RSphase)+1$
        \Comment{update load level index}
       \State $\RSphase \gets \RSphase+1$
    \ElsIf{$\Lambda < \dload_{h(\RSphase)}$}
        \label{dhighshare:load_below_threshold}
        % \Comment{load dropped below the last threshold}
        \Comment{departure, load below threshold}
        \State $h(\RSphase+1) \gets h(\RSphase)-1$
        \Comment{update load level index}
\State $\RSphase \gets \RSphase+1$
    \EndIf
\EndFor
\end{algorithmic}
\end{algorithm}

Assume that the condition in Line~\ref{dhighshare:increase_condition} holds for some $\RSphase$ and request $\rb$ arriving or departing during $\sigma_{\RSphase}$.
Since the right-hand side of the condition is fixed, this implies that
$Y_{\RSphase}^r$ has increased due to handling $r$, which means that $r$ has arrived at the system, causing $Y_{\RSphase}^r$ to increase beyond the value of the right-hand side.
Since $Y_{\RSphase}^r \geq C_{\RSphase}$, along with the fact that in Line~\ref{dhighshare:Y_h_update} $\tilde{Y}_{h(\RSphase)}$ is set to be $Y_{\RSphase}^r$, we have:
\begin{equation}
\label{eq:Y-Z}
\eps_{h(\RSphase)}  \tilde{Y}_{h(\RSphase)} \geq Z  \frac{1}{\log (1+ \eps_{h(\RSphase)})}.
\end{equation}
Intuitively, using the insight derived from the analysis presented in Sec.~\ref{sec:HS}, Eq.~(\ref{eq:Y-Z}) implies that a mere $\eps_{h(\RSphase)}$ fraction of the cost $\tilde{\phi}(\msha(\eps_{h(\RSphase)}),\sigma_{\RSphase})$ is already sufficiently larger than the cost incurred by \dhighshare\ for handling requests in different nodes than the ones used by \sha($\eps_{h(\RSphase)}$), for input sequence $\sigma_{\RSphase}$.
Also, since 
$Y_{\RSphase}^r \geq S_{\RSphase}$, it follows that:
\begin{equation}
\label{eq:Y-sum_Y_n}
\eps_{h(\RSphase)}  \tilde{Y}_{h(\RSphase)} \geq \sum_{\RStildephase=1}^{h(\RSphase)-1} (2 + 3 \eps_{\RStildephase}) \tilde{Y}_{\RStildephase}.
\end{equation}
Eq.~(\ref{eq:Y-sum_Y_n}) means that a fraction $\eps_{h(\RSphase)}$ of
$\tilde{Y}_{h(\RSphase)}$, i.e., a fraction $\eps_{h(\RSphase)}$ of the cost $\tilde{\phi}(\msha(\eps_{h(\RSphase)}),\sigma_{\RSphase})$, is
already sufficiently larger than the cumulative cost of  \sha\ over
the previous time intervals to warrant a change of $\eps$.
% (with varying values of $\eps$), for handling requests at previous load levels.
% 87: wanted to add: and therefore a change in the value of $\eps$ is warranted. 
We note that these lower bounds on the {\em cost} of
\sha($\eps_{h(\RSphase)})$ are commensurate to the {\em load} encountered by
the system (both by \sha($\eps_{h(\RSphase)}$) and \dhighshare), during
interval $I_\RSphase$, and they come in handy for the analysis of
\dhighshare's competitive ratio. 
%in the case of requests with infinite lifetime.

When service requests expire and make the overall system load  drop below the previous threshold (Line~\ref{dhighshare:load_below_threshold}), the algorithm reverts to using the previous load level index ($h(\RSphase)-1$), and, consequently, its corresponding value of $\eps$.

\subsection{Complexity and competitive ratio analysis}
\label{sec:competitive_analysis_REShare}

We first observe that the complexity of \dhighshare\ is remarkably low: from inspection of
Algs.\,\ref{alg:highshare}--\ref{alg:dynamichighshare} and considering
the complexity of \sha($\eps$), the total 
complexity of  \dhighshare\ is $O(B|\Vc_r^s|)$ 
where $B$ is the number of VMs currently used.

In the remainder of this subsection, we assume that requests have unlimited
$\tau_r$, i.e., there are no departures, and we analyze the competitive
ratio 
of \dhighshare\ in such a setting.
In particular, under this assumption, we have that $h(\RSphase)=\RSphase$ for all $\RSphase$.
For simplicity, we henceforth use index $h$ to represent both $\RSphase$ and $h(\RSphase)$.
Furthermore, under these settings $\tilde{Y}_h=Y_h$, thus we simply use $Y_h$ to denote either $\tilde{Y}_h$ or $Y_h$.

To prove \dhighshare's competitive ratio, we proceed as follows. 
First, let us recall the observations made in \Sec{algo-D}, i.e., whenever the condition in Line~\ref{dhighshare:increase_condition} of \Alg{dynamichighshare} holds, the overall load handled by the system at that point is significantly higher than the load at the beginning of the interval.
%Following the observations made in \Sec{algo-D}, recall that whenever the condition in Line~\ref{dhighshare:increase_condition} of \Alg{dynamichighshare} holds, the overall load handled by the system at that point is significantly higher than the load at the beginning of the interval.
In such a case, we reduce the value of $\eps$ to be used in the subsequent interval (Line~\ref{dhighshare:decrease_eps}).
Since  index $h$ serves as a proxy to the overall load in the system, this update rule for the value of $\eps$  implies that $\lim_{h \to \infty} \sum_{\RStildephase=1}^h \phi(\msha(\eps_{\RStildephase},\sigma_{\RStildephase})) = \infty$ iff $\lim_{h \to \infty} \eps_h=0$. This is due to the fact that ever-decreasing values of $\eps$ yield ever-narrower latency ranges, hence a higher number of used VMs. 
%Using the insights drawn from the analysis of \highshare($\eps$), this  allows us to show that \dhighshare\ is asymptotically 2-competitive. 
Second, 
%It is important to note that 
\highshare($\eps_h$), as well as \sha($\eps_h$),  handle the requests arriving in different intervals using a new series of latency ranges, hence they cannot reuse already deployed VNFs. Thus,  the costs associated with different intervals $I_h$ can be considered separately. 
Third, we prove the following lemma, which is the key result to
derive the asymptotic competitive ratio of \dhighshare. 
%The key result that allows us to derive the asymptotic competitive ratio of \dhighshare\ is the following lemma.
\begin{lemma}
\label{lem:dynamic_RS}
For every $k=1,2,\ldots,$
\begin{equation}
\label{eq:dynamic_eps_guarantee}
\sum_{h=1}^k \phi(\mdhighshare, \sigma_h) \leq (2+4\eps_k) \sum_{h=1}^k\tilde{\phi}(\msha(\eps_h), \sigma_h). \nonumber
\end{equation}
% for any decreasing sequence $\tilde{\eps}_k$ satisfying $\lim_{k \to \infty} \tilde{\eps}_k = 0$.
\end{lemma}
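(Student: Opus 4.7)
The plan is to prove the lemma by a direct summation argument that leverages the two trigger conditions in Line~\ref{dhighshare:increase_condition} of \Alg{dynamichighshare}. Because successive intervals $I_h$ use disjoint families of latency ranges $L_j$ (driven by different $\eps_h$), VMs cannot be reused across intervals, so $\sum_{h=1}^{k}\phi(\mdhighshare,\sigma_h)=\sum_{h=1}^{k}\phi(\mhighshare(\eps_h),\sigma_h)$ and I can treat the intervals independently. Since the assumption $\tau_r=\infty$ yields $h(\RSphase)=\RSphase$ and $\tilde{Y}_h=Y_h$, the notation simplifies throughout, and at the end of each completed interval $I_h$ (for $h\leq k$) we have the invariant $Y_h\geq\max\{C_h,S_h\}$.

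First I would establish a clean per-interval bound. Applying \Thm{REShare_competitive_ratio} to the sequence $\sigma_h$ in isolation, in the form of Eq.~(\ref{eq:bound:phi:final_bound}), gives $\phi(\mhighshare(\eps_h),\sigma_h)\leq 2(1+\eps_h)Y_h + Z/\log(1+\eps_h)$, where the additive tail is obtained by using $\eps_h\leq\eps^*$ and $J_{\eps_h}\leq\log(\bar{\mu}/\lambda_{\min})/\log(1+\eps_h)$ together with the relaxation property of \sha($\eps_h$), which lower-bounds $\phi(\mopt)$ by $Y_h$. The first trigger condition, $Y_h\geq C_h=Z/[\eps_h\log(1+\eps_h)]$, rearranges to $Z/\log(1+\eps_h)\leq\eps_h Y_h$ and collapses the bound to $\phi(\mhighshare(\eps_h),\sigma_h)\leq(2+3\eps_h)Y_h$.

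Next I would telescope across intervals using the second trigger condition. Summing the per-interval bound and splitting off the final term gives $\sum_{h=1}^{k}\phi(\mdhighshare,\sigma_h)\leq\sum_{h=1}^{k-1}(2+3\eps_h)Y_h+(2+3\eps_k)Y_k$. The condition $Y_k\geq S_k$ rearranges---by the very definition of $S_k$ in Eq.~(\ref{eq:S_n})---to $\eps_k Y_k\geq\sum_{h=1}^{k-1}(2+3\eps_h)Y_h$, so the entire prefix is swallowed by a single $\eps_k Y_k$ term. This produces the bound $(2+4\eps_k)Y_k$, which is dominated by $(2+4\eps_k)\sum_{h=1}^{k}Y_h$ by nonnegativity of each $Y_h$, yielding the claim. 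The $k=1$ base case is automatic, since $S_1$ is an empty sum.

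The ``hard part'' is less a technical obstacle than a conceptual one: recognizing that the quantities $C_h$ and $S_h$ in Eqs.~(\ref{eq:C_n})--(\ref{eq:S_n}) were engineered precisely so that the two conjuncts of the trigger condition control, respectively, the additive overhead inherited from \highshare's competitive analysis and the accumulated cost of all prior intervals. Once this dual role is seen, the proof is a short calculation, and the factor $(2+4\eps_k)$ in the statement decomposes transparently as $(2+3\eps_k)$ coming from the current interval plus an extra $\eps_k$ that absorbs the entire past.
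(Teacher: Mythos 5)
Your proof is correct and follows essentially the same route as the paper: Eq.~(\ref{eq:bound:phi:final_bound}) gives the per-interval bound $(2+2\eps_h)Y_h + Z/\log(1+\eps_h)$, the condition $Y_h\geq C_h$ (i.e., Eq.~(\ref{eq:Y-Z})) collapses it to $(2+3\eps_h)Y_h$, and the condition $Y_k\geq S_k$ (i.e., Eq.~(\ref{eq:Y-sum_Y_n})) absorbs the summed prefix into a single extra $\eps_k Y_k$, yielding $(2+4\eps_k)Y_k$. Your explicit final step passing from $(2+4\eps_k)Y_k$ to $(2+4\eps_k)\sum_{h=1}^k\tilde{\phi}(\msha(\eps_h),\sigma_h)$ by nonnegativity is the same conclusion the paper compresses into ``by the definition of $Y_k$.''
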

\begin{proof}
By the definitions of $Y_h$ and $Z$ given in Eqs.~(\ref{eq:Y_n})-(\ref{eq:Z}),  Eq.~(\ref{eq:bound:phi:final_bound}) implies (after changing the logarithm base) that
\begin{align}
\phi(\mdhighshare, \sigma_h) \leq  (2+2 \eps_h)  Y_h + Z \frac{1}{\log (1+\eps_h)}.
\label{eq:RS_h_vs_Y_h_plus_Z}
\end{align}

By plugging Eq.~(\ref{eq:Y-Z}) into Eq.~(\ref{eq:RS_h_vs_Y_h_plus_Z}), we obtain
% We  now focus on how to determine the value of $t_h$, for $h=1, 2, \ldots$, so as to ensure that (\ref{eq:dynamic_eps_guarantee}) is satisfied. 
% Given $t_{h-1}$, we define $t^{(0)}_h$ as the minimal value such that, if we set $t_h=t^{(0)}_h$,  the following inequality is satisfied
% \begin{equation}
% \label{eq:Y-Z}
% \eps_h \cdot Y_h \geq Z \cdot \frac{1}{\log (1+ \eps_h)}.
% \end{equation}
% Such a minimal value necessarily exists since we assume that $\phi(\msha)\to \infty$, hence $\lim_{t_h \to \infty} Y_h = \infty$. Then, for any $t_h \geq t^{(0)}_h$, we can combine (\ref{eq:RS_h_vs_Y_h_plus_Z}) and (\ref{eq:Y-Z}) and obtain
\begin{align}
\phi(\mdhighshare, \sigma_h) \leq  (2+3 \eps_h)  Y_h.
\label{eq:RS_h_vs_Y_h}
\end{align} 
By summing over $h$ in \Eq{RS_h_vs_Y_h} and using \Eq{Y-sum_Y_n}, % (s.t. $t_h \geq t^{(0)}_h$) till an arbitrary $k$,
we get 
\begin{eqnarray}
\sum_{h=1}^{k} \phi(\mdhighshare, \sigma_h)
&\leq &
%\sum_{h=1}^k (2 + 3\cdot\eps_h) \cdot Y_h \\ &= 
(2 + 3 \eps_k)Y_k + \sum_{h=1}^{k-1} (2 + 3 \eps_h) Y_h \nonumber\\
&\leq&
(2+ 4 \eps_k) Y_k.
\label{eq:sum_k_RS_upper_bound}
\end{eqnarray}
By the definition of $Y_k$, the thesis follows.
% where the second inequality follows from \Eq{Y-sum_Y_n}.
% % Now, if we set $t_h = \max\set{t^{(0)}_h, t^{(1)}_h}$ for every $h$,  we have:
% By the definition of $Y_k$, this implies that
% \begin{align}
% \sum_{h=1}^{k} \phi(\mdhighshare, \sigma_h)
% &\leq
% (2+ 4\cdot \eps_k) \phi(\msha(\eps_k),\sigma_k) \nonumber\\
% %&\leq \left( 2+ \frac{3 \eps_0}{2^{k-1}} \right) \phi(\msha_k) \\
% %&\leq \left( 2+ \frac{3 \eps_0}{2^{k-1}} \right) \sum_{h=1}^k \phi(\msha_h),
% &\leq \left( 2+ 4\cdot \eps_k \right) \sum_{h=1}^k \phi(\msha(\eps_h), \sigma_h).
% \label{eq:sum_k_RS_vs_SHA_upper_bound}
% \nonumber
% \end{align}
% % The result follows for $\tilde{\eps}_k=6\cdot \eps_k$, since $\lim_{k \to \infty} \eps_k = 0$.
% %the definition of $\eps_h=\frac{\eps_{h-1}}{2}$.
% % The last inequality follows from 
% % the fact that for every $k$, $\phi(\msha_k) \leq \phi(\msha)$.
% %By choosing $\eps^*_k=\frac{3\eps_0}{2^{k-1}}$, the result follows.
\end{proof}

% We note that the above result actually depends only on {\em when} $\eps$ is updated in \dhighshare, not on precisely {\em how} it is updated, provided that $\eps_{h}$ is a decreasing sequence and tends to 0 as $h$ grows large. 
The following theorem is an immediate corollary of
Lemma~\ref{lem:dynamic_RS}, for the case of unbounded $\tau_r$.
\begin{theorem}
\label{thm:dynamic_RS_2_competitive}
When $\tau_r$ is unbounded, \dhighshare\ is asymptotically 2-competitive. 
\end{theorem}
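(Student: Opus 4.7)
The plan is to derive the theorem as an essentially routine limiting argument on top of Lemma~\ref{lem:dynamic_RS}, which already furnishes the core per-$k$ bound
$\sum_{h=1}^k \phi(\mdhighshare,\sigma_h) \leq (2+4\eps_k)\sum_{h=1}^k \tilde{\phi}(\msha(\eps_h),\sigma_h)$.

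First, because requests never depart under the assumption $\tau_r=\infty$, the instantaneous cost $\phi(\mdhighshare)$ at the end of interval $I_k$ coincides with the accumulated sum $\sum_{h=1}^k \phi(\mdhighshare,\sigma_h)$: nothing is ever released, and each fresh interval opens VMs in new latency ranges $L_j$ rather than reusing old ones, as argued just before Lemma~\ref{lem:dynamic_RS}. Hence the left-hand side of the lemma is exactly the total cost of \dhighshare\ that we want to bound.

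Second, I would turn the right-hand side into a lower bound on \opt\ by invoking the relaxation property of \sha\ used in the proof of Theorem~\ref{thm:REShare_competitive_ratio}. Since \sha\ keeps jobs from different latency ranges on distinct VMs and handles them fractionally on top jobs, the cost of its full VMs is a lower bound on the cost any feasible deployment, including \opt, must pay to process the load arriving during~$I_h$. Because successive intervals use disjoint latency ranges and, under $\tau_r=\infty$, the load only accumulates, these per-interval lower bounds are additive, yielding $\sum_{h=1}^k \tilde{\phi}(\msha(\eps_h),\sigma_h) \leq \phi(\mopt)$. Combined with the previous observation, Lemma~\ref{lem:dynamic_RS} gives $\phi(\mdhighshare) \leq (2+4\eps_k)\,\phi(\mopt)$ at the end of every interval $I_k$.

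Third, I would pass to the limit $\phi(\mopt)\to\infty$. The update rule in Line~\ref{dhighshare:decrease_eps} of \Alg{dynamichighshare} halves $\eps$ whenever the load index is incremented, so $\eps_k=\eps^*/2^{\,k-1}$. To argue that $\phi(\mopt)\to\infty$ forces $k\to\infty$, I would observe that the thresholds $C_\RSphase, S_\RSphase$ governing the condition on Line~\ref{dhighshare:increase_condition} are fixed at the start of each interval, while $Y_\RSphase^r$ grows with the load arriving in $I_\RSphase$: if load, and hence $\phi(\mopt)$, grew unboundedly with $k$ remaining bounded, the condition would eventually trigger inside some interval, contradicting boundedness of $k$. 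Thus $\eps_k\to 0$ and $\lim_{\phi(\mopt)\to\infty}\phi(\mdhighshare)/\phi(\mopt)\leq 2$.

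The main obstacle is the step $\sum_{h=1}^k \tilde{\phi}(\msha(\eps_h),\sigma_h) \leq \phi(\mopt)$: because \opt\ is offline, it is in principle free to share VMs across what are, from its perspective, arbitrary time slices of the input, so per-interval lower bounds are not automatically additive. The disjointness of the latency ranges $L_j$ across different $\eps_h$, together with the fact that \sha's full VMs operate at saturation and process monotonically accumulating load, is what prevents double counting and legitimizes the summation; spelling this out carefully is the only non-routine piece of the argument.
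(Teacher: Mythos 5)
Your proposal is correct and follows essentially the same route as the paper's proof: invoke Lemma~\ref{lem:dynamic_RS}, note that the halving rule forces $\eps_k\to 0$, and use that $\sum_{h}\tilde{\phi}(\msha(\eps_h),\sigma_h)$ is a lower bound on $\phi(\mopt)$. The additional care you take in justifying the additivity of the per-interval lower bounds and in arguing that $\phi(\mopt)\to\infty$ forces $k\to\infty$ only spells out steps the paper asserts without elaboration.
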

\begin{proof}
By the update rule of $\eps_h$,
in line~\ref{dhighshare:decrease_eps} of \dhighshare,
we have that $\lim_{h\to\infty}\eps_h = 0$.
By using Lemma~\ref{lem:dynamic_RS}, this therefore implies that
$\lim_{h \to \infty} \frac{\phi(\mdhighshare)}{\sum_{\RStildephase=1}^h\tilde{\phi}(\msha(\eps_{\RStildephase},\sigma_{\RStildephase}))} \leq 2$.
% This proves the thesis, as
Since
$\sum_{\RStildephase=1}^h\tilde{\phi}(\msha(\eps_{\RStildephase},\sigma_{\RStildephase}))$ is a lower bound on $\phi(\mopt)$,
the theorem follows.
\end{proof}

\section{Numerical results}
\label{sec:results}

In this section, we describe the scenarios we use for our performance evaluation (\Sec{sub-scenarios}), the workloads we consider (\Sec{sub-services}), the benchmark solutions we compare against (\Sec{sub-benchmarks}), and the results we obtain (\Sec{sub-results}).

\subsection{Reference scenarios}
\label{sec:sub-scenarios}

We consider two hierarchical topologies akin those originally proposed
in~\cite{tong2016hierarchical} and used in many research works
thereafter. The first is organized
 in three layers \cite{martin2019modeling}:
(i) {\em edge}, closest to the users (leaf nodes) but the most expensive (normalized, per-VM fixed cost of 7.5);
(ii) {\em aggregation}, cheaper than edge (normalized, per-VM fixed cost of 2.5) but incurring a moderate extra traffic forwarding  latency; 
(iii) {\em cloud}, cheapest (normalized, per-VM fixed cost of 1) but associated with the longest extra latency. 
The second is a four-layer topology, including a {\em fog} layer that incurs a per-VM fixed cost of 10.
Cost figures are obtained from~\cite{5gt-d14}, presenting the technical and economic aspects of datacenters of different sizes, which can be placed at different network segments.

%The features of each layer are summarized in \Tab{layers}. 
Each VM has a computing capability of $\bar{\mu}=100$ packets/ms, and a per-packet proportional cost of 1/100th its fixed cost.
This
reflects findings reported in~\cite{morabito2015power}, i.e., that idle VMs consume roughly half as much power as fully-utilized ones (with the latter incurring both fixed and proportional costs).
In the three-layer scenario, the extra latency associated with the aggregation and the cloud layer is, respectively, 15~ms and 30~ms. In the four-layer one, the fog layer has no extra latency, while the latency of all other layers is increased by 5~ms.
\subsection{Workloads}
\label{sec:sub-services}

We consider different workloads for our performance evaluation from three of the main application realms of slicing-enabled networks: connected vehicles, smart factories, and cloud-edge computing.

{\bf Vehicular domain.}
We begin by considering the three-layer topology and the main services of the vehicular domain, 
presented in~\cite{pimrc-wp3} and described in \Tab{services}:
\begin{itemize}
    \item Intersection Collision Avoidance (ICA): vehicles periodically broadcast  Cooperative Awareness Messages (CAMs)  including their position, speed, and acceleration; a collision detector checks if any pair of vehicles are on a collision course and, if so, it issues an alert;
    \item Vehicular see-through (CT): cars display on their on-board screen the video captured by the preceding vehicle, e.g., a large truck obstructing the view;
    \item Entertainment (EN): passengers consume streaming content, provided with the assistance of a content delivery network (CDN) server.
\end{itemize}
In addition to their service-specific VMs, all services leverage a
virtual communication sub-slice, 
%providing the eNB and the EPC VNFs,
as  listed at the top of \Tab{services}.
New service requests are generated whenever a new instance is needed,
e.g., the service has to be deployed at a new location, or the
crossing surveilled by ICA becomes more crowded, thus triggering a
service scale out.

\begin{table}
\caption{Services target delay and VNF complexity\label{tab:services}} %caption
\scriptsize{
\begin{tabularx}{\columnwidth}{|X|r||X|r|}
\hline
{\bf VNF} & {\bf $\theta_v$} & {\bf VNF} & {\bf  $\theta_v$} \\
\hline\hline
\multicolumn{2}{|>{\hsize=2\hsize}c||}{{\em Virtual comm. sub-slice}} 
& \multicolumn{2}{>{\hsize=2\hsize}c|}{{\em CT -- delay target 50\,ms}}\\
\hline
eNB & $1$ &
Car information management (CIM) & $10$\\
\hline
EPC PGW & $1$ &
CT server & $8$ \\
\hline
EPC SGW & $1$ &
CT database & $1$\\
\hline
EPC HSS & $1$ & \multicolumn{2}{>{\hsize=\columnwidth}c|}{{\em EN -- delay target 1\,s}}\\
\hline
EPC MME & $10$ & Video origin server & $10$  \\
\hline
\multicolumn{2}{|>{\hsize=\columnwidth}c||}{}  & Video CDN & $3$\\
\hline
%\multicolumn{2}{|>{\hsize=\columnwidth}c|}{{\em Intersection Collision Avoidance (ICA) -- delay target 20\,ms}}\\
%\hline
%Car information management (CIM) & $7$\\
%\hline
%Collision detector & $10$\\
%\hline
%Car manufacturer database & $1$\\
%\hline
%Alarm generator & $1$\\
%\hline
%\multicolumn{2}{|>{\hsize=\columnwidth}c|}{{\em See through (CT) -- delay target 50\,ms}}\\
%\hline
%Car information management (CIM) & $10$\\
%\hline
%CT server & $8$\\
%\hline
%CT database & $1$\\
%\hline
\multicolumn{2}{|>{\hsize=\columnwidth}c||}{{\em ICA -- delay target 20\,ms}} & %\\
\multicolumn{2}{>{\hsize=\columnwidth}c|}{{\em Smart-factory -- delay target 100\,ms}}\\
\hline
Car information management (CIM) & $7$ &
Robot controller & $10$\\
\hline 
Collision detector & $10$ 
& Motion planning & $10$\\
\hline
Car manufacturer database & $1$ 
& Configuration interface & $5$\\
\hline
Alarm generator & $1$ &
Digital twin application & $10$\\
\hline
%\multicolumn{2}{|>{\hsize=\columnwidth}c|}{{\em Entertainment (EN) -- delay target 1\,s}}
%\hline
%Video origin server & $10$  \\
%\hline
%Video CDN & $3$ \\
%\hline
%Car information management (CIM) & $10$\\
%\hline
%\multicolumn{2}{|>{\hsize=\columnwidth}c|}{{\em Smart-factory -- delay target 100\,ms}}\\
%\hline
%Robot controller & $10$\\
%\hline
%Motion planning & $10$\\
%\hline
%Configuration interface & $5$\\
%\hline
%Digital twin application & $10$\\
%\hline
\end{tabularx}
} %scriptsize
\end{table}

%\begin{table}
%\caption{Smart-factory service: target delays and VNF complexity\label{tab:services2}} 
%\scriptsize{
%\begin{tabularx}{\columnwidth}{|X|r|r|}
%\hline
%{\bf VNF} & {\bf Processing complexity $\theta_v$} \\
%\hline\hline
%\multicolumn{2}{|>{\hsize=\columnwidth}c|}{{\em Virtual communication sub-slice}}\\
%\hline
%eNB & $1$\\
%\hline
%EPC PGW & $1$\\
%\hline
%EPC SGW & $1$\\
%\hline
%EPC HSS & $1$\\
%\hline
%EPC MME & $10$\\
%\hline
%\multicolumn{2}{|>{\hsize=\columnwidth}c|}{{\em Smart-factory -- delay target 100\,ms}}\\
%\hline
%Robot controller & $10$\\
%\hline
%Motion planning & $10$\\
%\hline
%Configuration interface & $5$\\
%\hline
%Digital twin application & $10$\\
%\hline
%\end{tabularx}
%} %scriptsize
%\end{table}

The computational requirements of the VNFs (i.e., their complexity) reported in \Tab{services} come from~\cite{pimrc-wp3}; instead, the load we apply is synthetically generated, in order to demonstrate how our approach handles rapid demand fluctuations. Specifically,
the service request process is  as follows:
\begin{enumerate}
    \item for the first 15\,s of the time horizon under study, a new request arrives every second;
    \item after that, and until 800\,s from the beginning of the horizon, no further requests arrive;
    \item between 800 and 1,000\,s from the beginning, 1,000 more requests (5 per second) arrive;
    \item those requests leave the system, at the same rate,
      between 1,000 and 1,200\,s  after the beginning of the horizon.
\end{enumerate}
The first requests represent long-running services, which are active even during periods of low traffic. The subsequent request arrivals  represent a sudden surge in vehicular activity, to which more service requests are associated, and an equally sudden decrease thereof.

{\bf Smart-factory domain.} 
Digital twins are computer models of real objects, controlling the behavior of their physical counterparts. As detailed in~\cite{digitaltwin}, in smart-factory scenarios semi-autonomous robots are controlled by entities running within the network infrastructure. Also in this case  we consider the three-layer scenario,  along with the services specified in \Tab{services}. The
 main tasks to perform are:
(i)
 fine-grained {\em control} of robot actions;
 (ii) {\em planning} of their actions and mobility;
 (iii) {\em configuration} of the robots;
 (iv) the {\em digital twin} itself. 
Comparing the smart factory to the vehicular service characteristics in \Tab{services}, one can notice how, while the end-to-end delays required in the two scenarios are comparable, the structure of the services is fairly different. Specifically, the smart-factory scenario only includes one service, including four VNFs, besides the virtual communication sub-slice.
It follows that comparing the performance of \dhighshare\ across these two use cases captures both quantitative differences and qualitative variability.
%\gil{I am not sure what the meaning of this sentence. Clarify}
%\gabi{I changed this -- is it now clearer?}% This allows us to test \dhighshare\ under loads that are different not only quantitatively, but also qualitatively.

{\bf Real-world computing load.}
Finally, we move to the four-layer scenario and consider a real-world scenario where the demand comes from the GWA Materna  trace~\cite{kohne2014federatedcloudsim,kohne2016evaluation}, which depicts the real-world evolution of the demand of a major cloud operator in Europe.

We also consider a different service request arrival process, where requests arrive and depart faster. Specifically:
\begin{enumerate}
    \item for the first 15\,s of the time horizon under study, a new request arrives every second;
    \item after that, and until 780\,s from the beginning of the horizon, no further requests arrive;
    \item between 780 and 800\,s from the beginning, 1,000 more requests (50 per second) arrive;
    \item those requests leave the system, at the same rate,
      between 1,180 and 1,200\,s  after the beginning of the horizon.
\end{enumerate}

\subsection{Benchmark strategies}
\label{sec:sub-benchmarks}

We compare \dhighshare\ against two benchmarks.
The first one is the shadow assignment used in \Sec{compratio-HS}.
As discussed in \Sec{compratio-HS}, the cost of full VMs used by such a strategy is a lower bound on the optimum because it is allowed to place different parts of the same request at different hosts, which neither the optimum nor \dhighshare\ are, of course, allowed to do, and it also handles relaxed delay constraints.
Notice how \sha\ is not a concrete strategy that could be applied in a real-world scenario, but rather it serves as a proxy of the lower bound on the optimum cost. In other words, the closer to \sha\ a strategy is, the better that strategy performs.

The second benchmark, labeled {\em RelaxSoTA} in the plots, is an adaptation
to our scenario of
the highly influential works~\cite{tulino,tulino2};
importantly, the same approach has been later followed by further studies on edge computing~\cite{jovsilo2022joint}, distributed machine learning~\cite{dinh2022network}, and energy-efficient mobile gaming~\cite{spinelli2022migration}.
Such approaches are based upon solving a convex relaxation of the placement problem every time a new request arrives, and making the placement decisions associated with the highest values of the relaxed variables.
Notice that, unlike \dhighshare, the RelaxSoTA approach may place VNFs of the same service at different layers.

In summary, we can say that both benchmarks have an advantage over \dhighshare, respectively, the ability to split VNFs and to spread a service across multiple layers. Therefore, comparing \dhighshare\ against such powerful alternatives yields additional relevance to our results.

{\bf Price-of-Dissimilarity (PoD).} The PoD captures the additional computational
capacity used because of the difference in delay constraints among the jobs served by the same VM (see also \Sec{RM}).
%This metric essentially inspired our focus on \sha\ as a lower bound on the optimal performance.
%We now turn to explain, and formally defined this notion, which we dub {\em Price-of--Diversity (PoD)}.
%Consider any set of jobs corresponding to VNF $v$ assigned to a single VM running $v$.
%Although any single job in this set may have a distinct (and possibly widely varying) delay constraint, the capacity at which the VM runs, and the load on the VM, should be maintained to satisfy the most stringent delay constraint over all these jobs.
%Therefore, some jobs assigned to the VM may benefit from much shorter delays than the ones prescribed by their constraints.
By the definition of the latency incurred on VM $b$ running VNF $v$ at speed $\mu_b$,  the delay incurred by {\em any} job assigned to $b$ is
$\frac{1}{\mu_b-\theta_v \Lambda(b)}$.
We then define the PoD of VM $b$ running on node $i$ as:
$
\max_r\set{ \frac{1}{D_r^v} - \left (\mu_b-\theta_v \Lambda(b) \right ) \mid (r,v) \mbox{ is assigned to } b}
$.
For any feasible assignment, the PoD of any VM $b$ is non-negative; also, 
when {\em all} jobs assigned to $b$ have the same delay constraint (as in \sha), it is possible to pick $\mu_b$ such that the PoD is zero.
%In particular, this is a property maintained by \sha($\eps$), which handles the top-sequence, and may fractionally split jobs load across VMs corresponding to the same partitioning (as defined by $\eps$).

\begin{figure*}
\centering
\includegraphics[width=.46\textwidth]{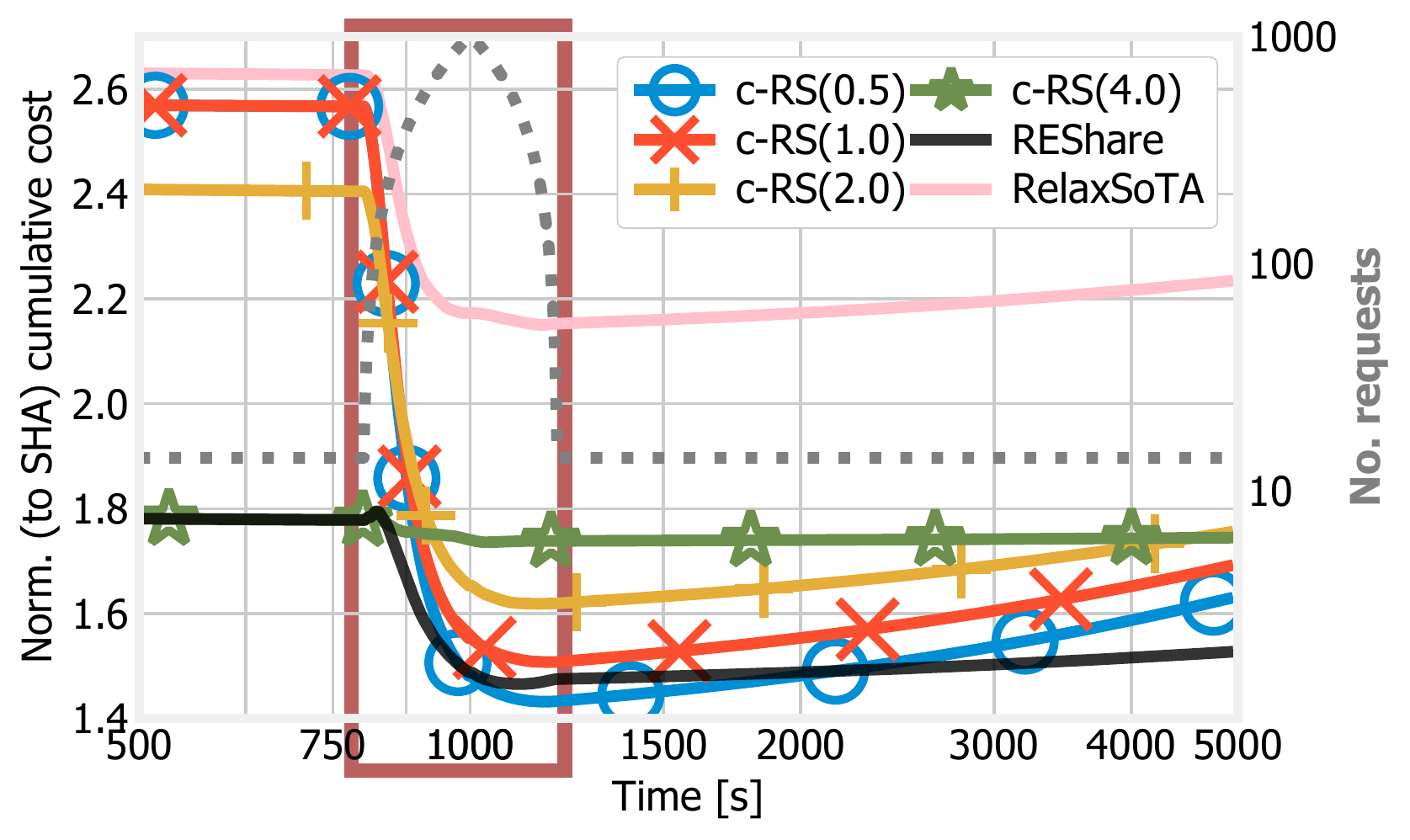}
\hspace{5mm}
\includegraphics[width=.46\textwidth]{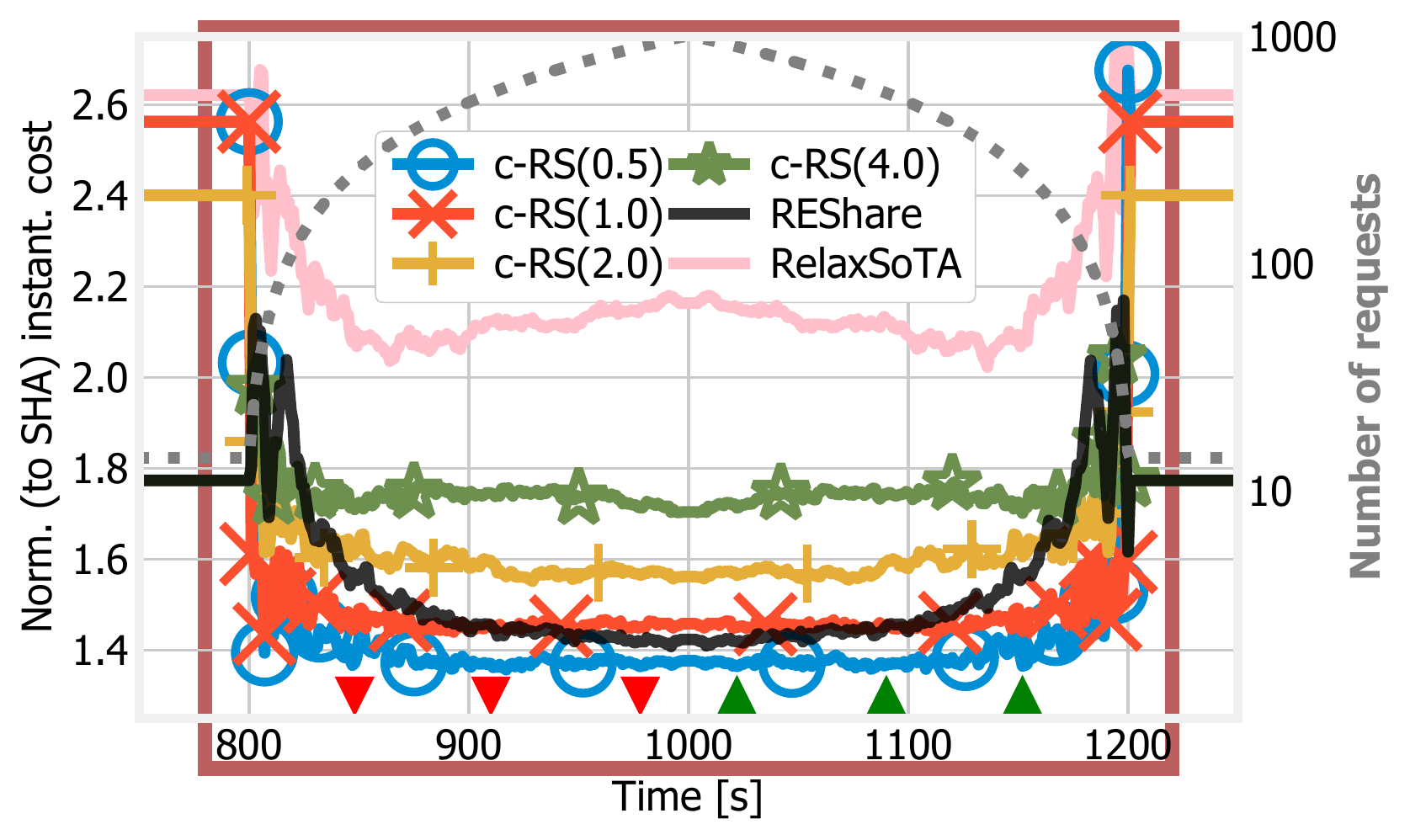}
\caption{
Three-layer scenario, vehicular application. \dhighshare\ and benchmark strategies: cumulative cost (left) and details of instantaneous cost during the load peak (right). In both plots, the dotted line corresponds to the load. In the right plot, upwards and downwards triangles at the bottom correspond to increasing and decreasing $\eps$ in \dhighshare.
Brown boxes denote the time period during which short-lived requests arrive and leave.
\label{fig:cost}
} %caption
\centering
\includegraphics[width=.32\textwidth]{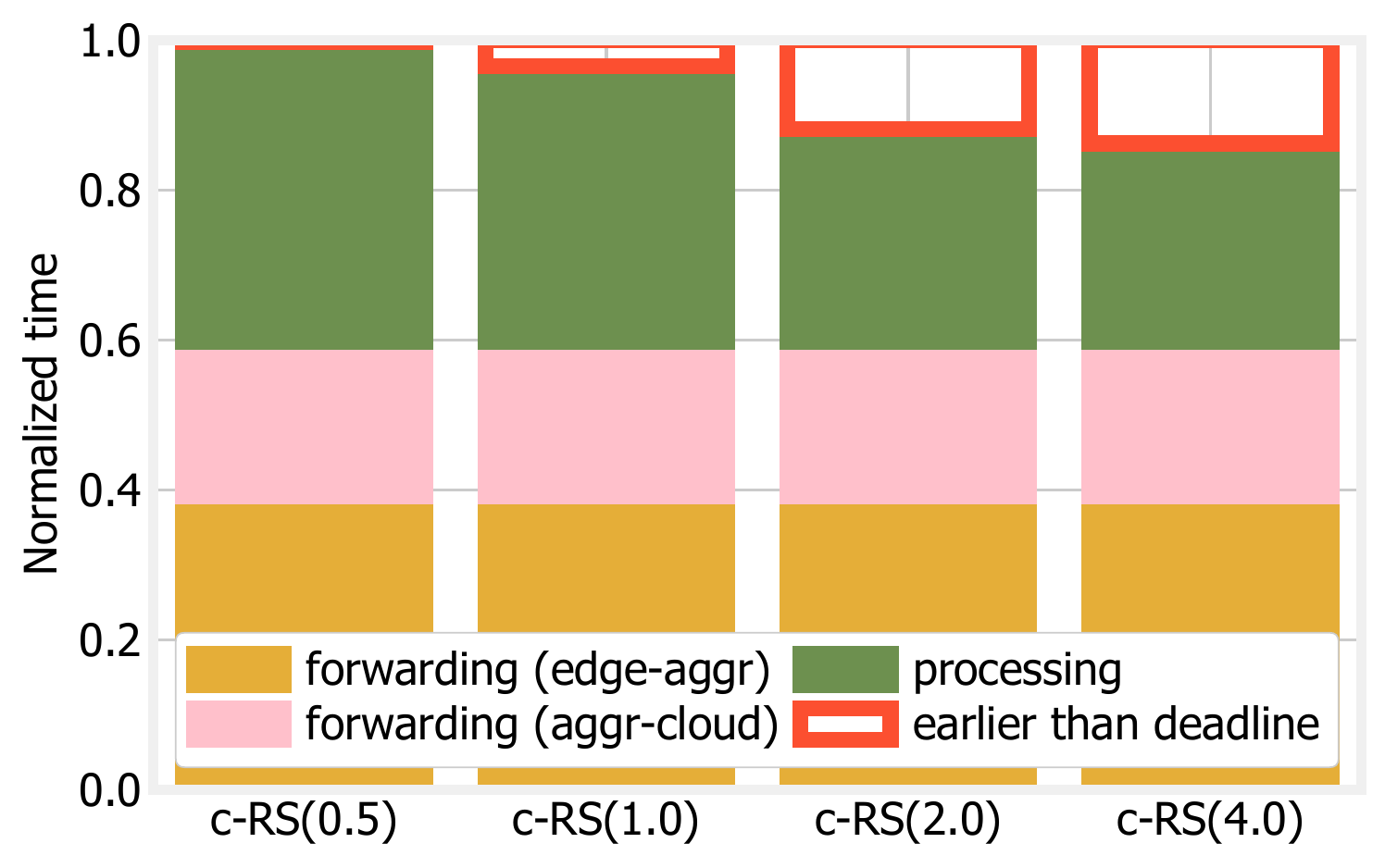}
\includegraphics[width=.32\textwidth]{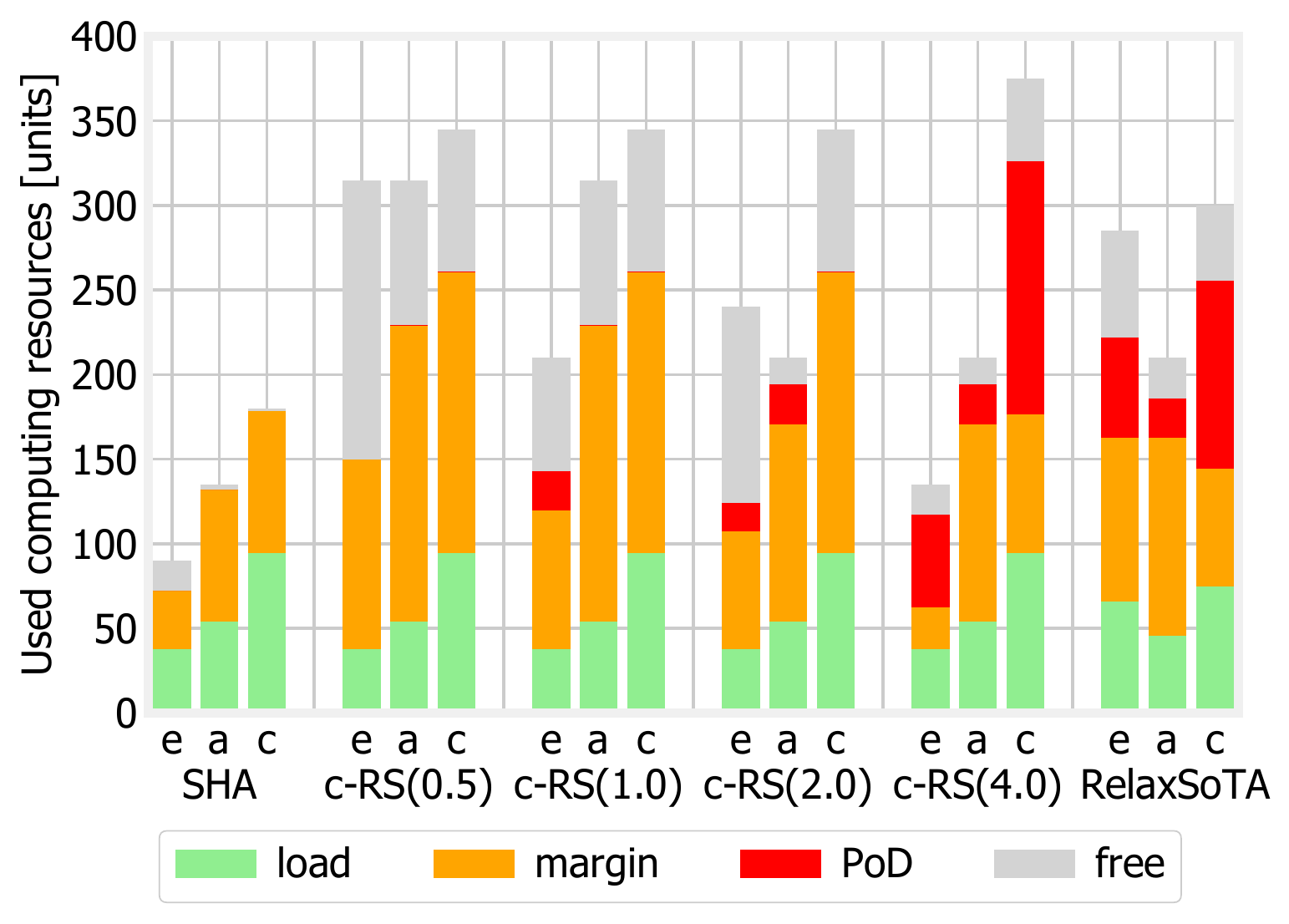}
\includegraphics[width=.32\textwidth]{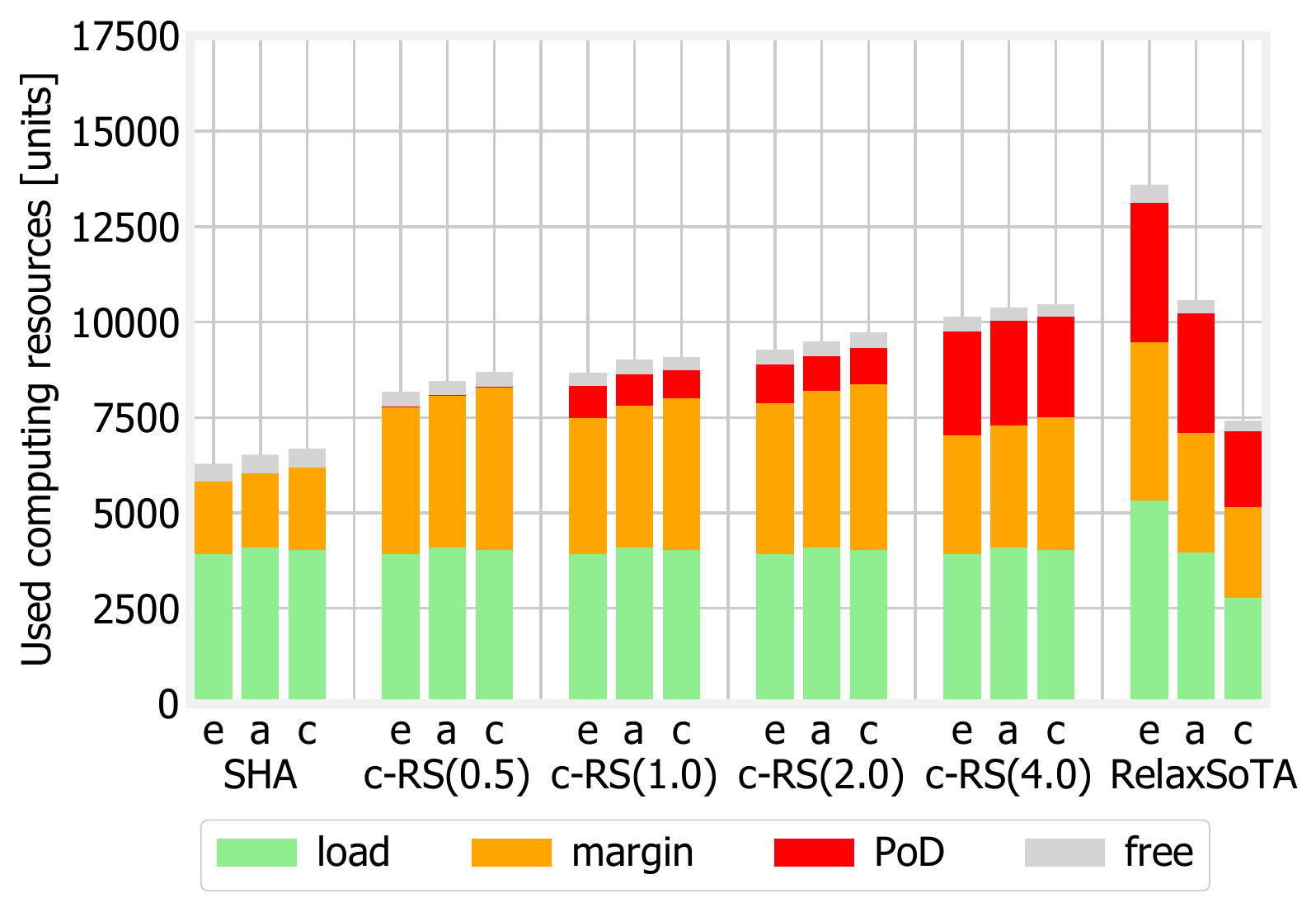}
\caption{
Three-layer scenario, vehicular application. \highshare($\eps$), with different values of $\eps$ (labeled by c-RS($\eps$) for short): actual service
latency normalized to the target value (left);  usage of computing
resources for when traffic is low, namely, after the first 15 (long-running) requests arrive (center), and at peak load (right).
\label{fig:details}
} %caption
\end{figure*}

\begin{figure*}
\centering
\includegraphics[width=.46\textwidth]{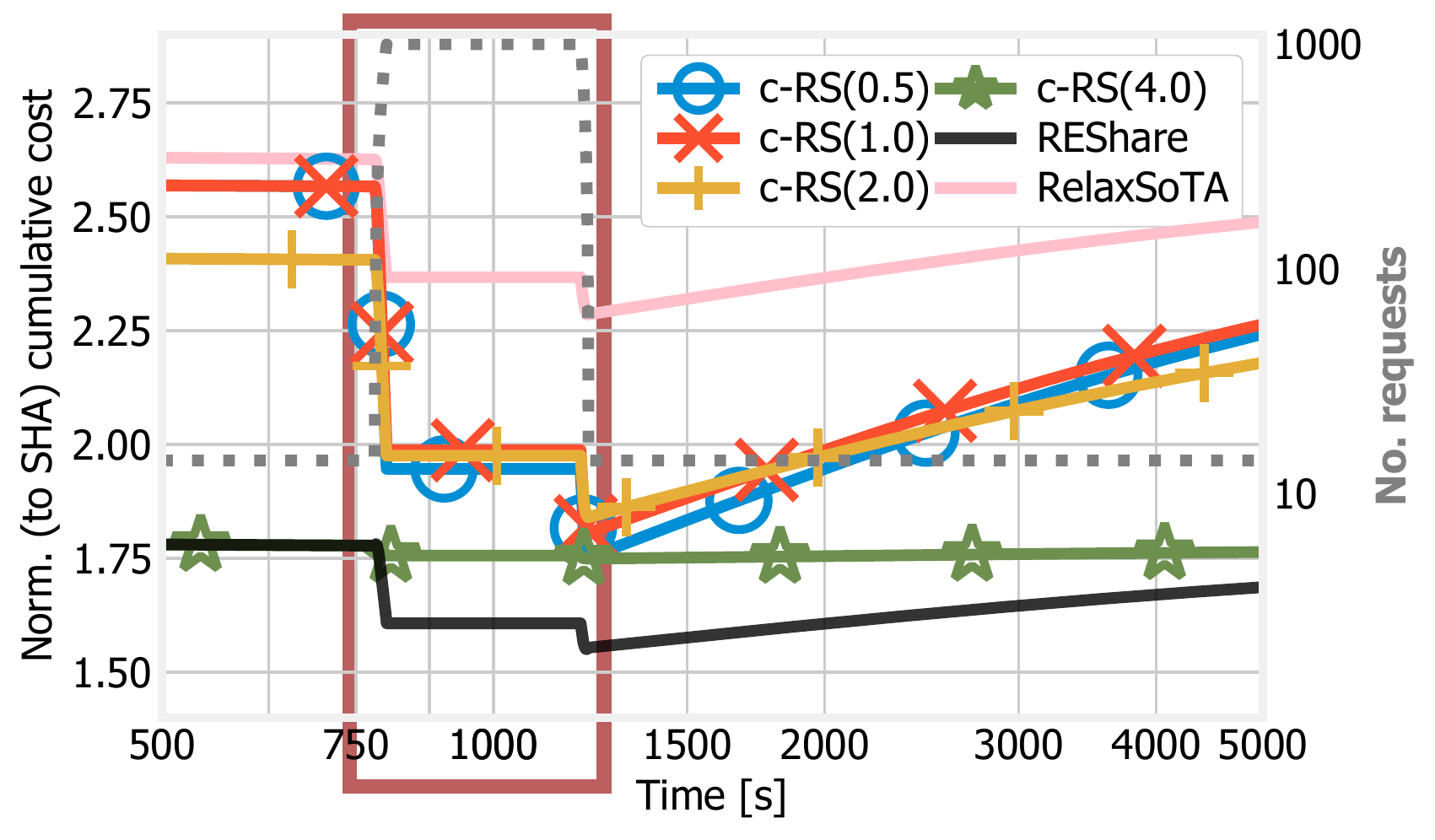}
\hspace{5mm}
\includegraphics[width=.46\textwidth]{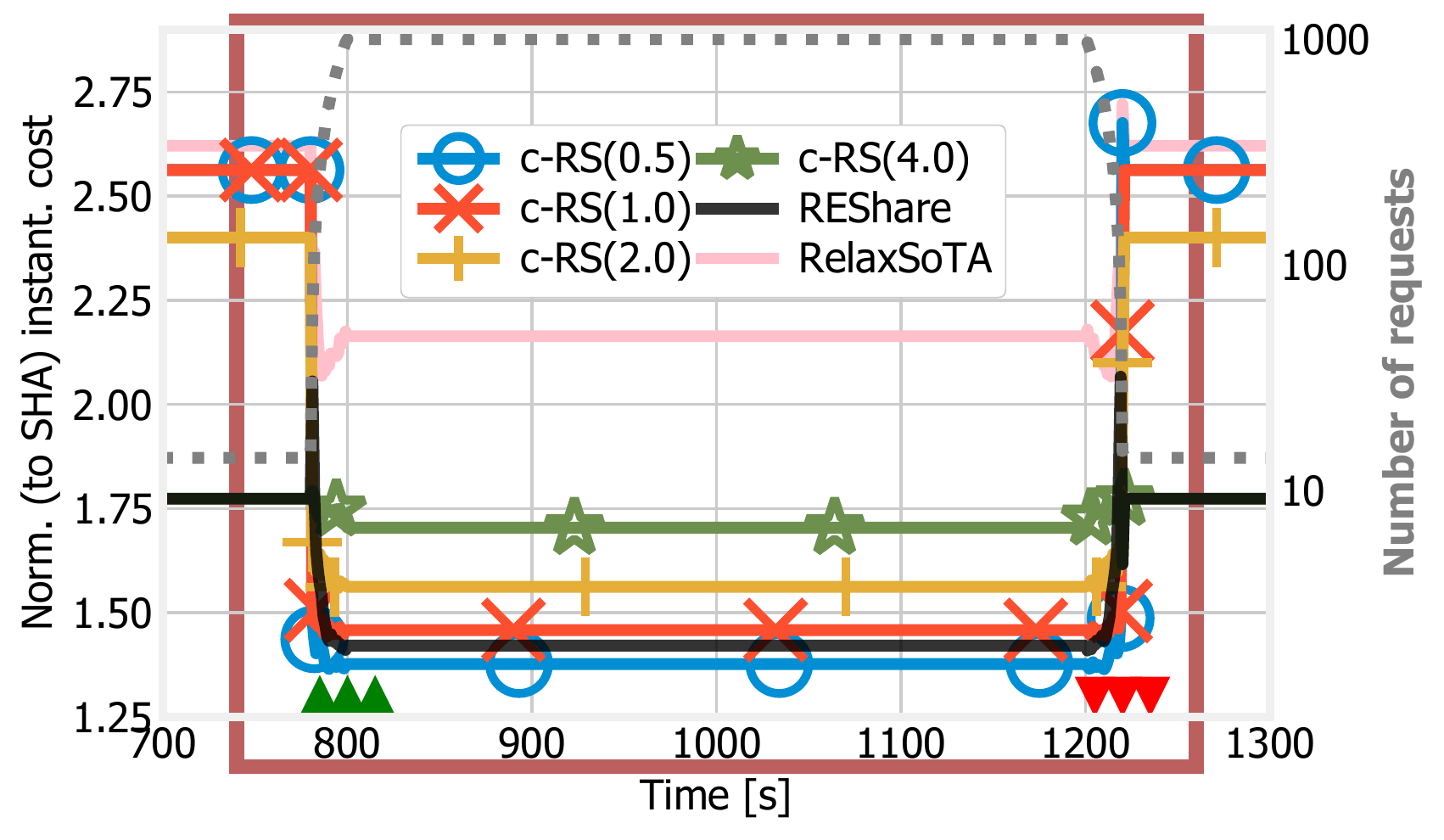}
\caption{
Three-layer scenario, smart-factory application. \dhighshare\ and benchmark strategies: cumulative cost (left) and details of instantaneous cost during the load peak (right). In both plots, the dotted line corresponds to the load. In the right plot, upwards and downwards triangles at the bottom correspond to increasing and decreasing $\eps$ in \dhighshare.
\label{fig:sf-cost}
Brown boxes denote the time period during which short-lived requests arrive and leave.
} %caption
\centering
\includegraphics[width=.32\textwidth]{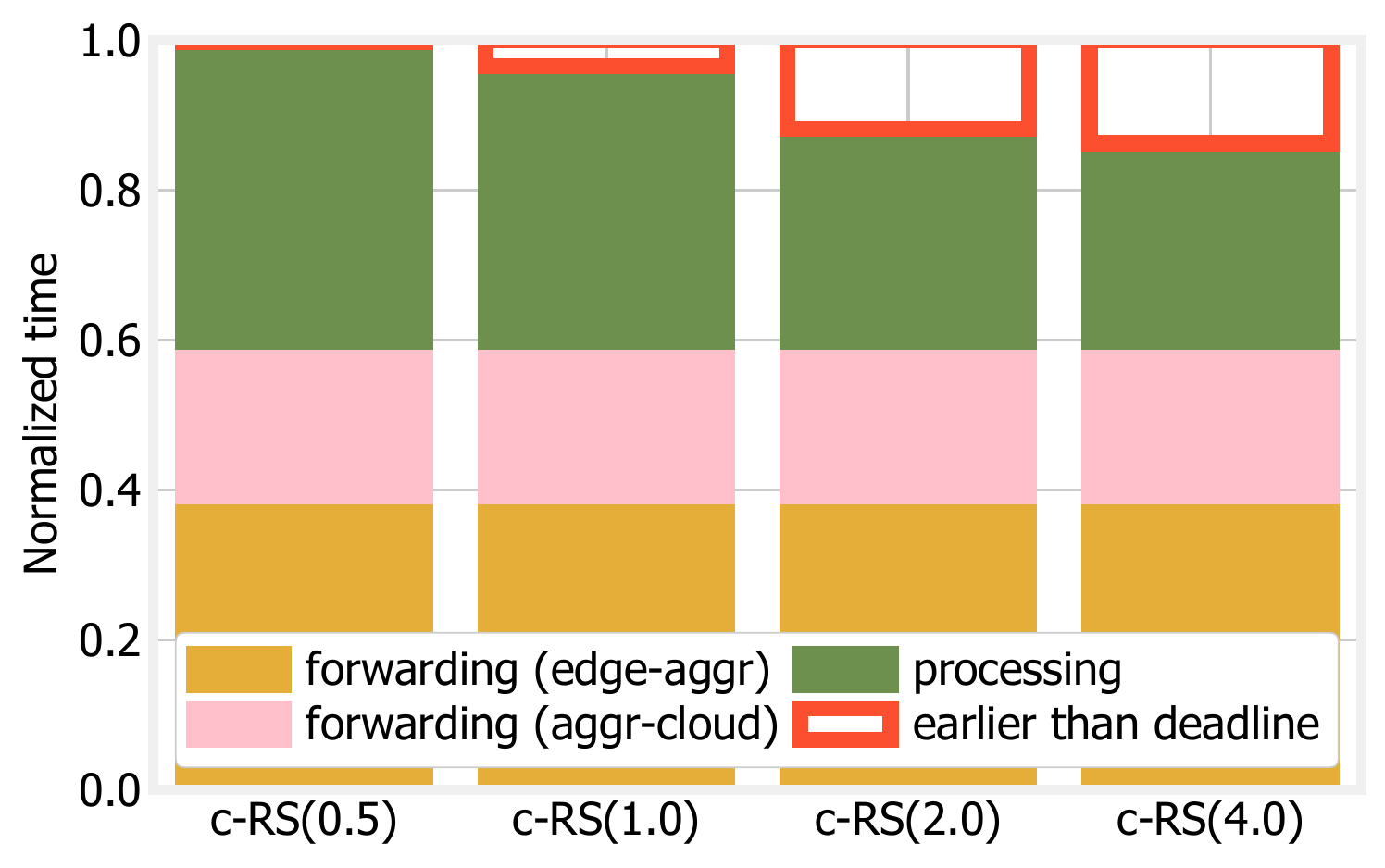}
\includegraphics[width=.32\textwidth]{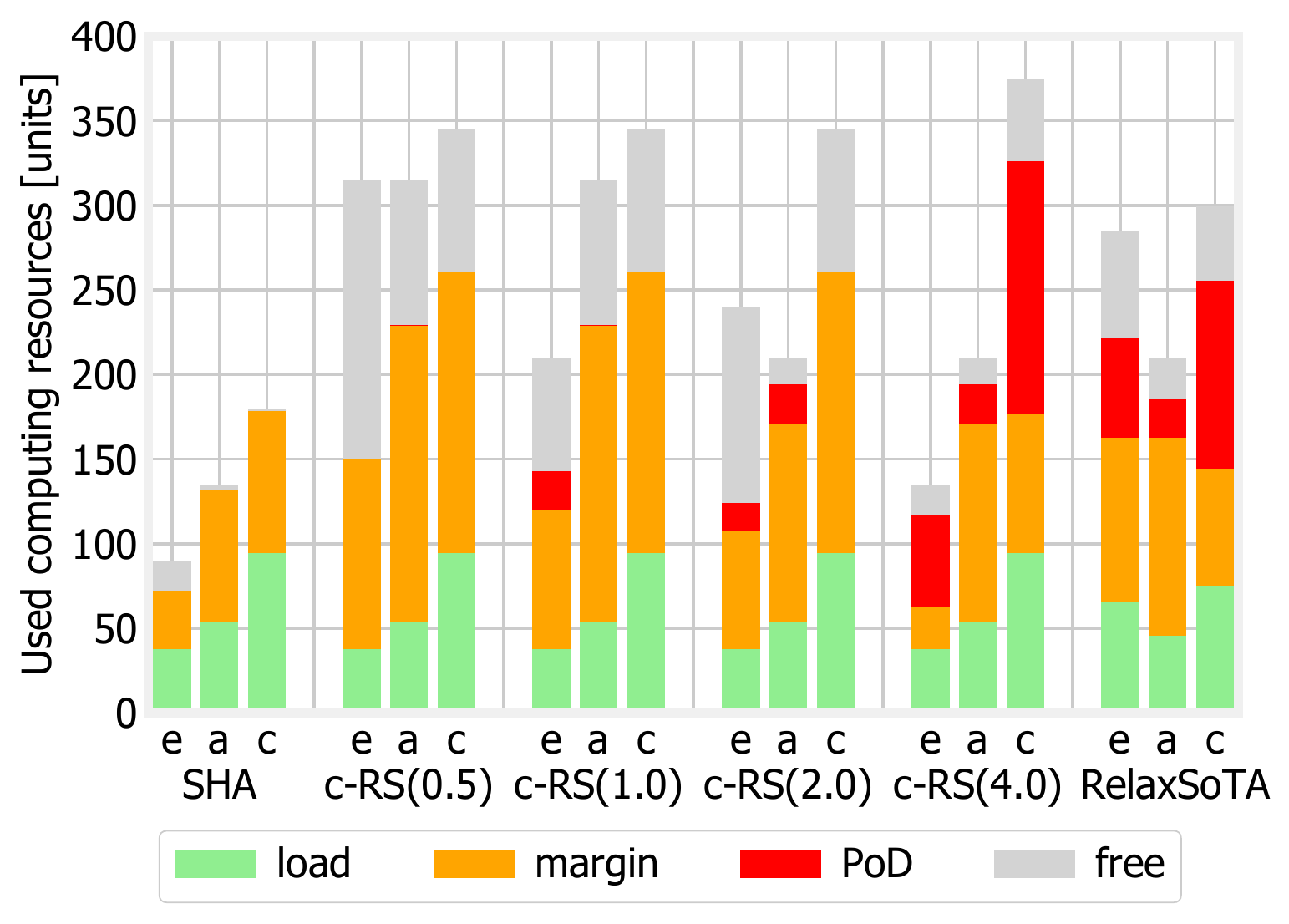}
\includegraphics[width=.32\textwidth]{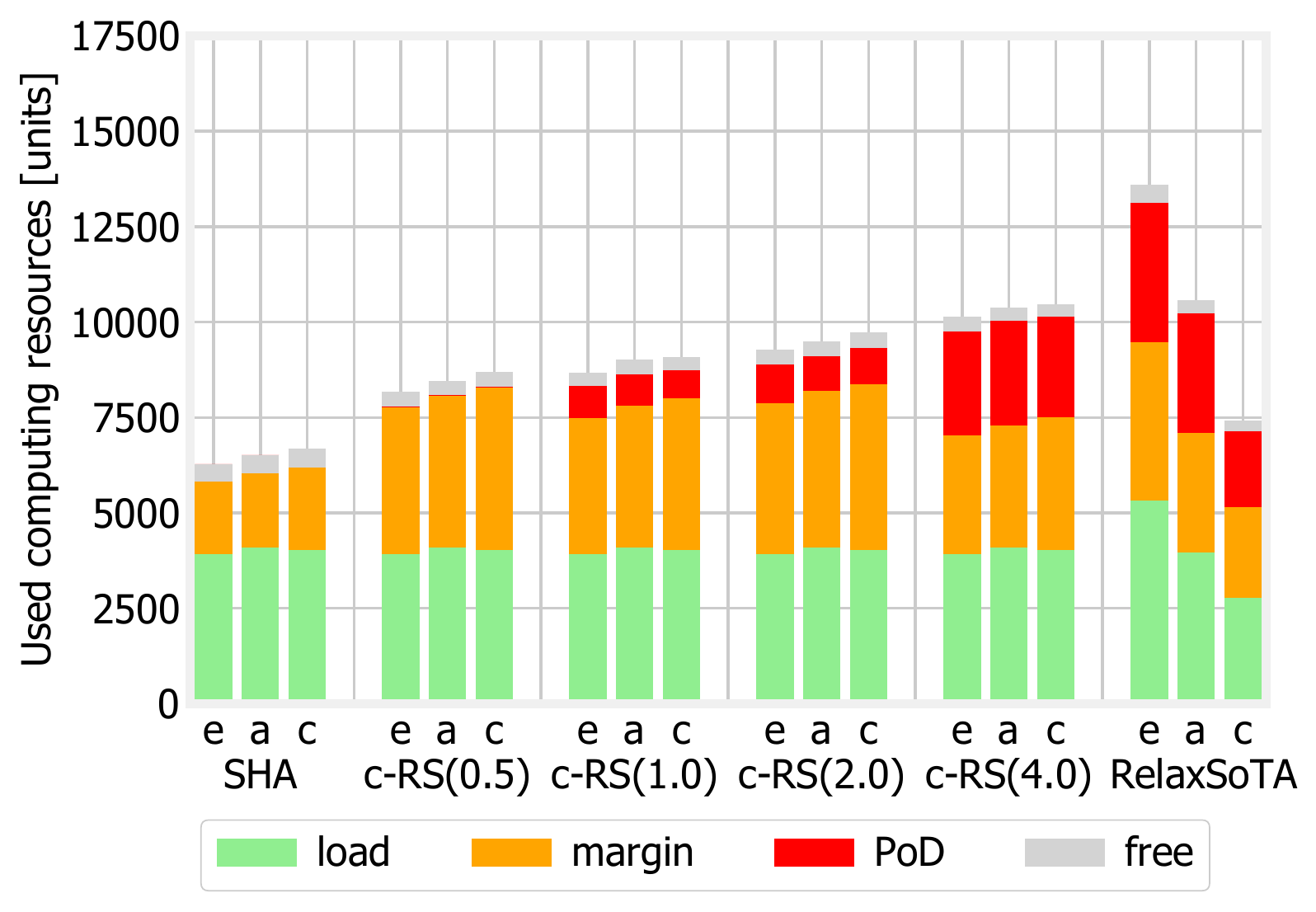}
\caption{
Three-layer scenario, smart-factory application. \highshare($\eps$), with different values of $\eps$ (labeled by c-RS($\eps$) for short): actual service
latency normalized to the target value (left);  usage of computing
resources for the first 15 requests (center) and all requests (right).
\label{fig:sf-details}
} %caption
\end{figure*}

\begin{figure*}
\centering
\includegraphics[width=.46\textwidth]{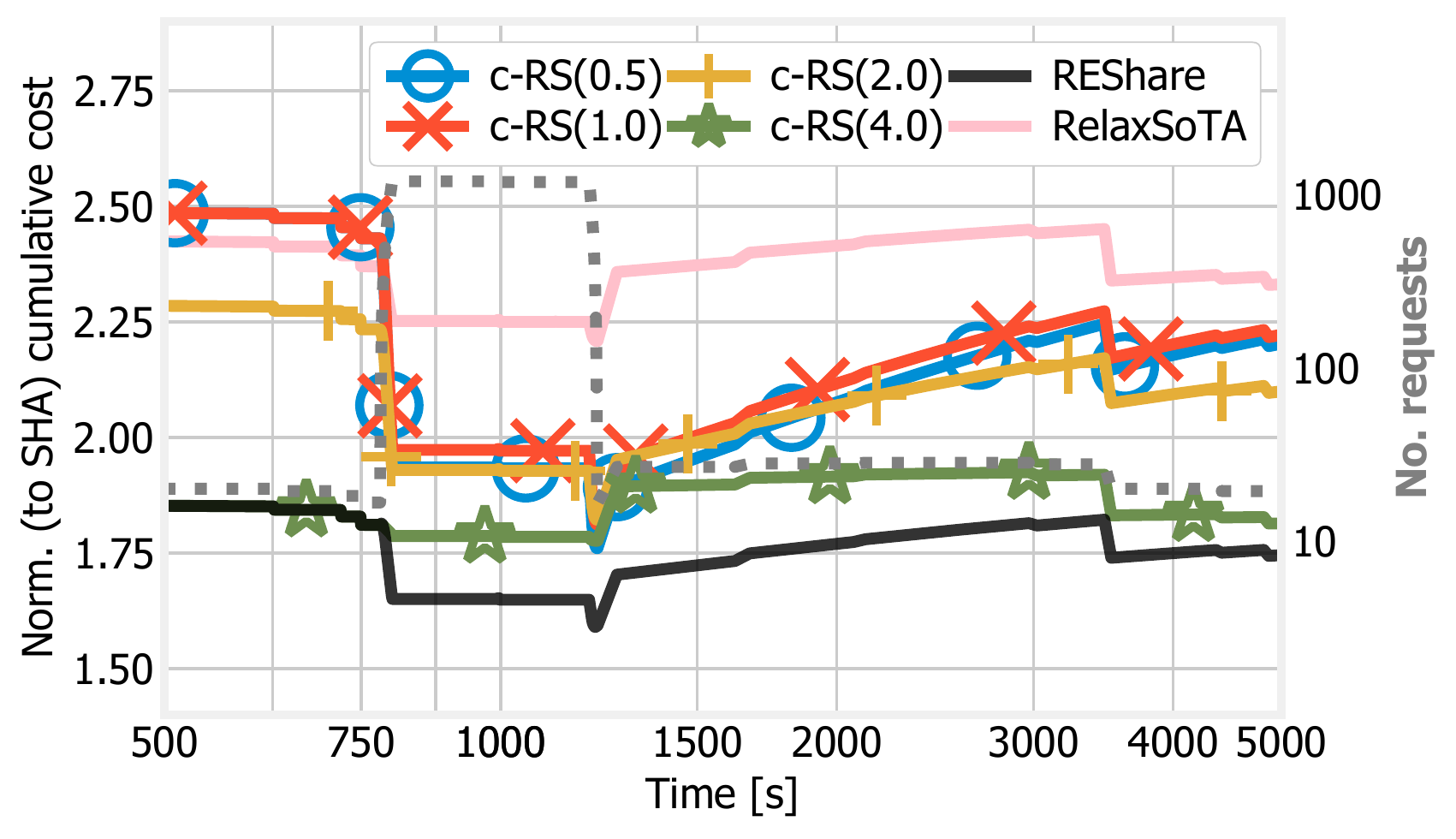}
\hspace{5mm}
\includegraphics[width=.46\textwidth]{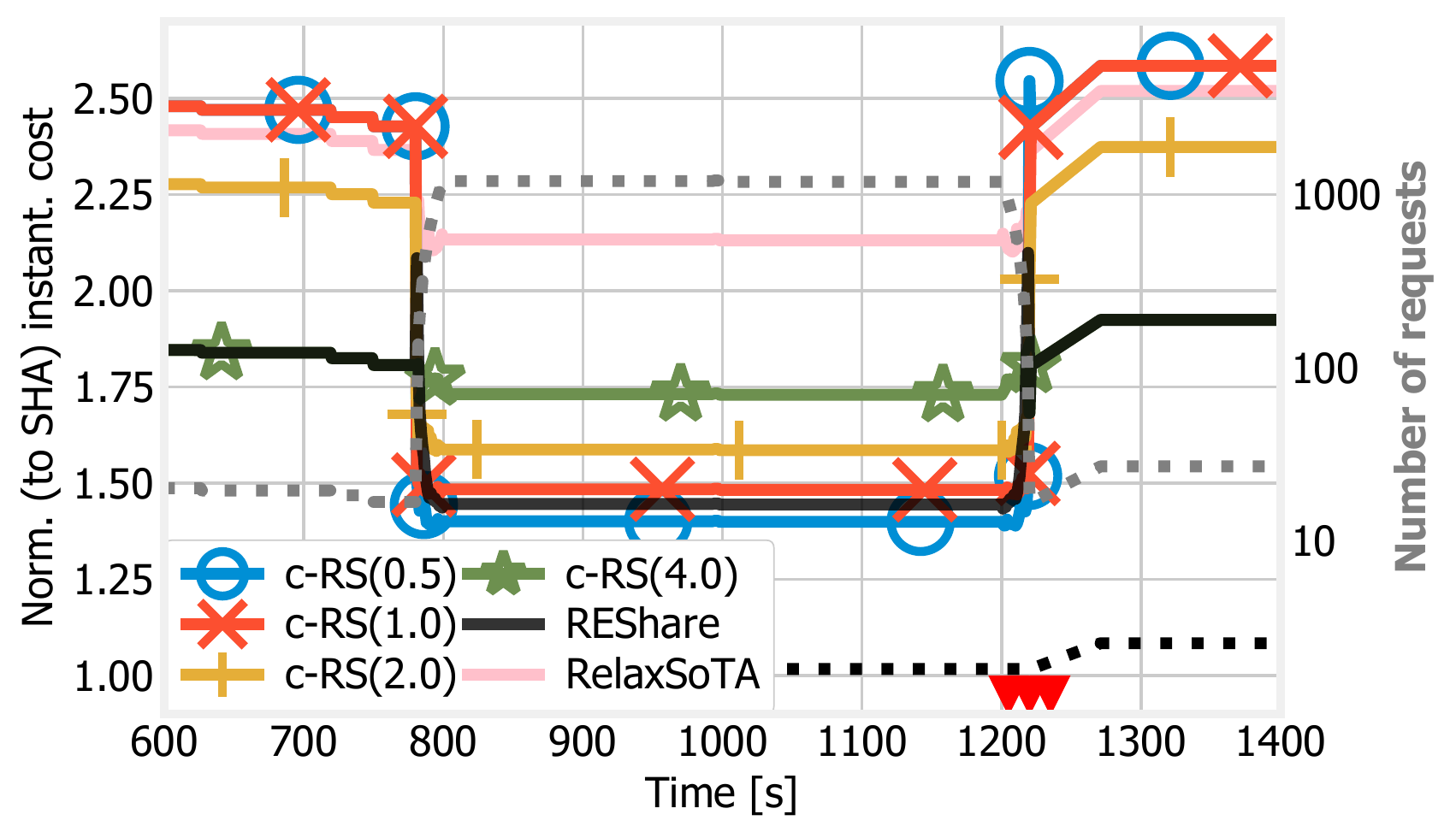}
\caption{
Four-layer scenario, real-world workload based upon the Materna trace~\cite{kohne2014federatedcloudsim,kohne2016evaluation}. \dhighshare\ and benchmark strategies: cumulative cost (left) and details of instantaneous cost during the load peak (right). In both plots, the dotted line corresponds to the load. In the right plot, upwards and downwards triangles at the bottom correspond to increasing and decreasing $\eps$ in \dhighshare.
\label{fig:real-cost}
} %caption
\centering
\includegraphics[width=.32\textwidth]{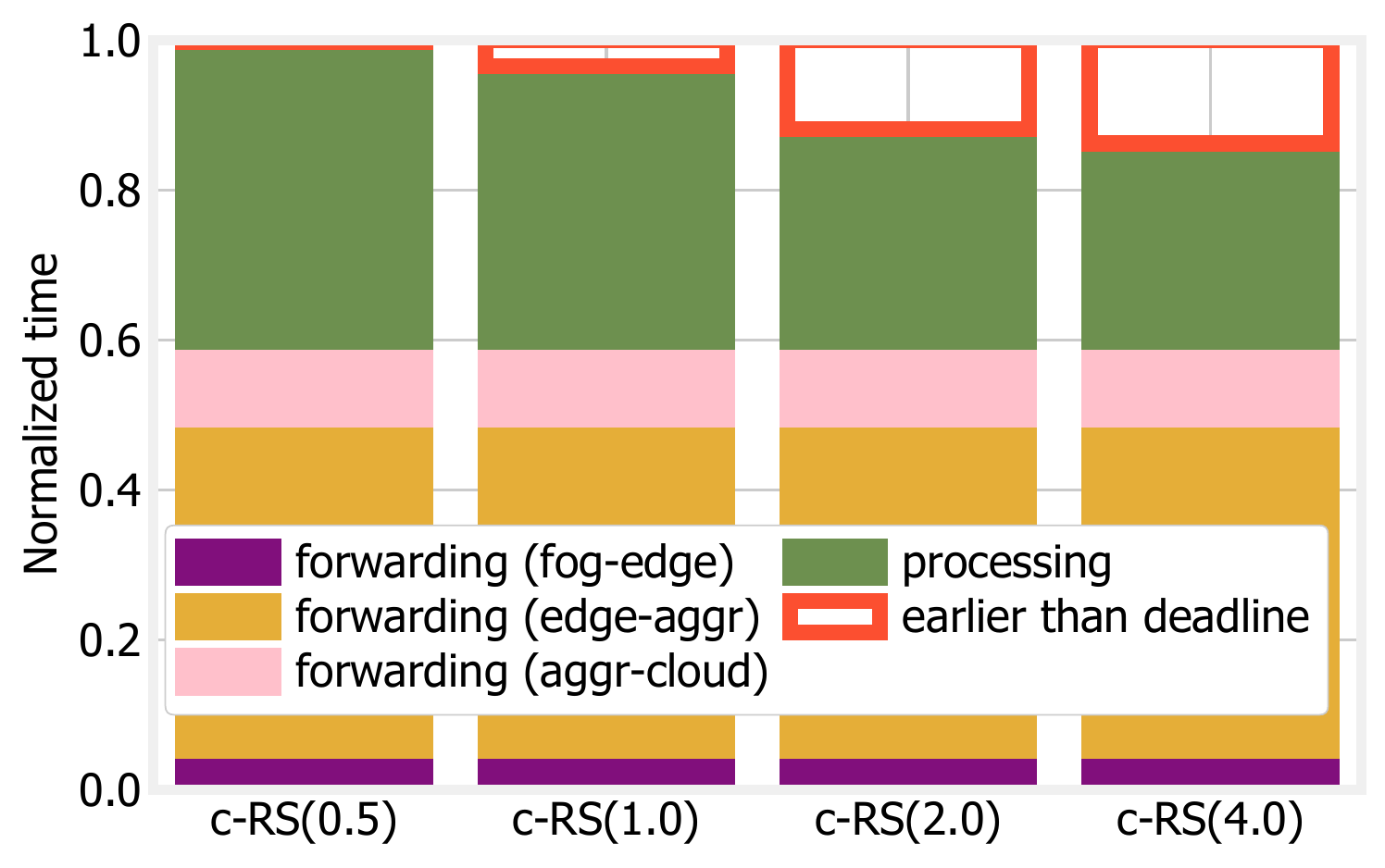}
\includegraphics[width=.32\textwidth]{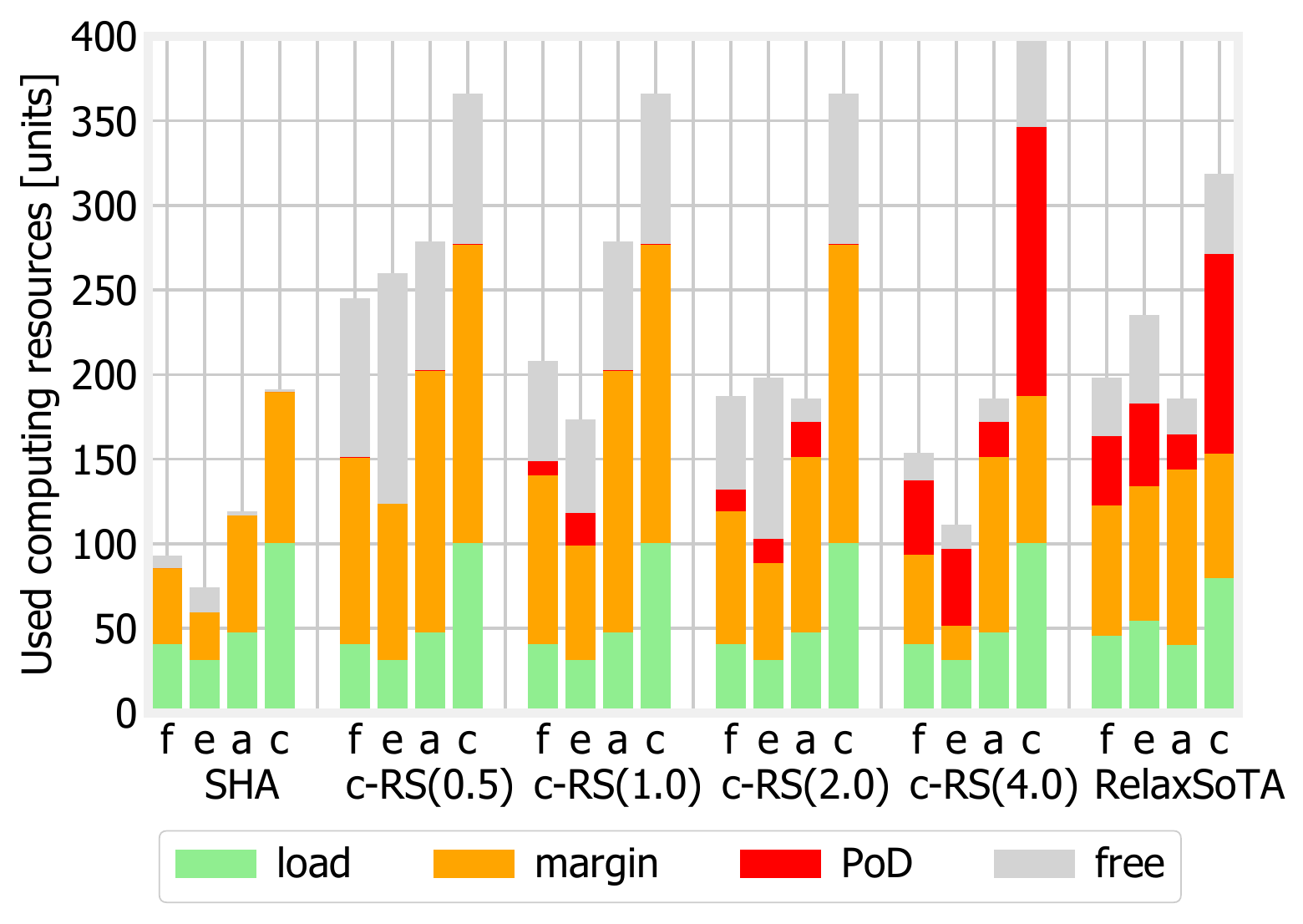}
\includegraphics[width=.32\textwidth]{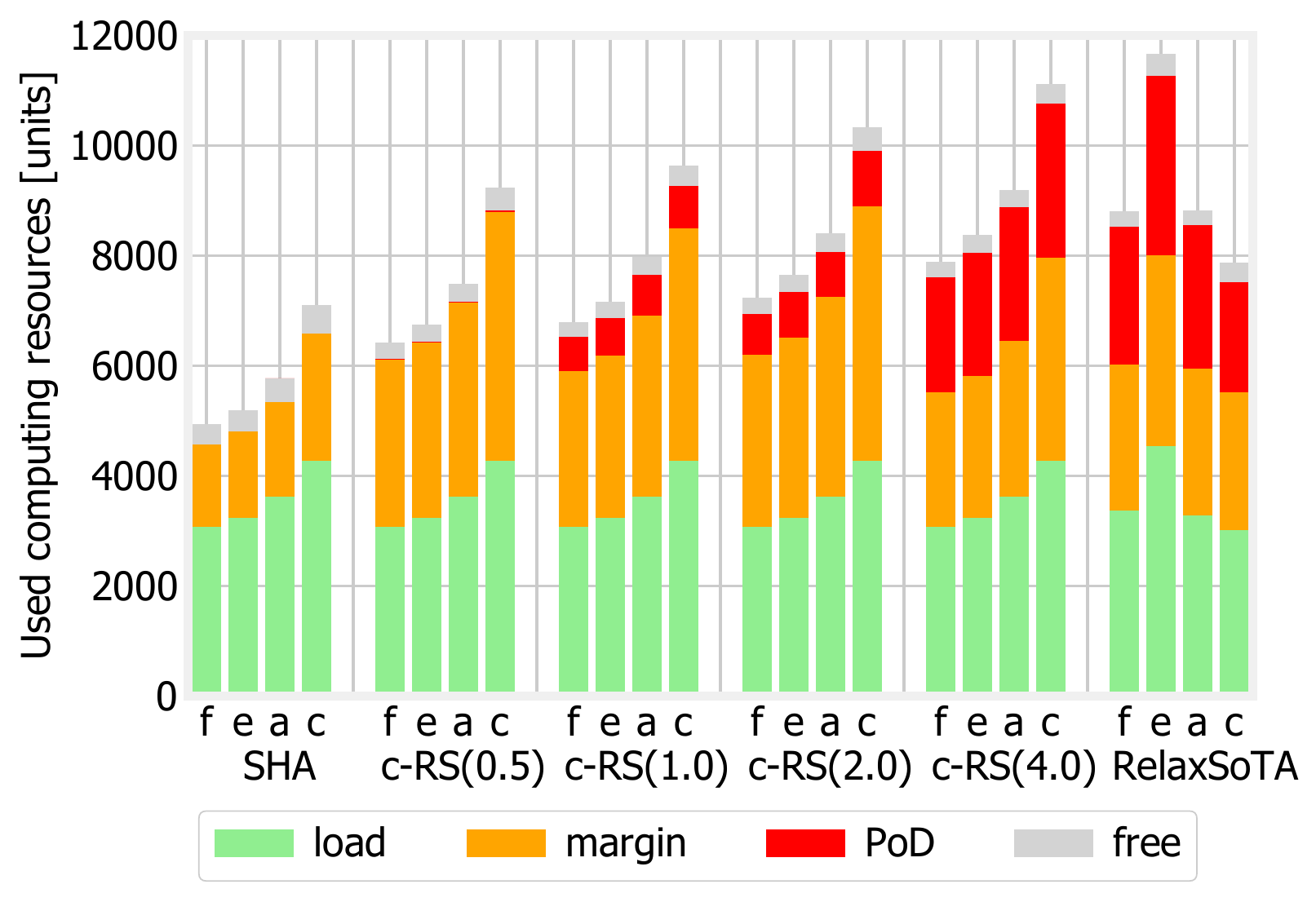}
\caption{
Four-layer scenario, real-world workload based on the Materna trace~\cite{kohne2014federatedcloudsim,kohne2016evaluation}. \highshare($\eps$) performance for different values of $\eps$ (labeled by c-RS($\eps$) for short): actual service
latency normalized to the target value (left);  usage of computing
resources for the first 15 requests (center), and all requests (right).
\label{fig:real-details}
} %caption
\end{figure*}

\subsection{Results}
\label{sec:sub-results}

\begin{table}[b]
\caption{
Cost savings brought by \dhighshare\ w.r.t. the state-of-the-art (RelaxSoTA)}\label{tab:savings}
\centering
\begin{tabularx}{0.8\columnwidth}{|X|r|} 
\hline
Scenario & Savings [\%]\\
\hline\hline
Vehicular/uniform & 15 \\
\hline
Smart factory & 24 \\
\hline
Materna (real-world) & 26 \\
\hline
\end{tabularx}
\end{table}

{\bf Vehicular domain.} 
A fundamental aspect to investigate is the cost of \dhighshare\ and its alternatives. In \Fig{cost}(left), we compare \dhighshare, \highshare($\eps$) with four different values of $\eps$ (identified by different markers), and the relaxation-based, state-of-the-art approach. For better readability, all values are normalized to the cost of SHA, 
which is why some cumulative costs in \Fig{cost}(right) appear to decrease. 
The gray, dotted line shows the number of requests in the system.

It can be immediately seen that \dhighshare\ is substantially cheaper
than the state-of-the-art benchmark we compare against and performs
close to \sha,
whose cost, we recall, is very close the optimum.
% -- which, it is worth recalling, is a {\em lower} bound to the optimum. 
Furthermore, \dhighshare\ is 
cheaper than \highshare($ \eps $) for all $\eps$ values; this
confirms the effectiveness of the strategy implemented in
\Alg{dynamichighshare}, whereby the value of~$\eps$ is adjusted according to time-varying load conditions.

\Fig{cost}(right) presents the evolution of instantaneous (i.e., per
time unit) costs during and around the load peak. Note that low
values of $\eps$ are associated with a lower cost in high-traffic
periods, while larger $\eps$ values yield lower costs in
low-traffic periods, as highlighted in \Fig{PoD}. This observation is consistent with how $\eps$~values determine how much PoD we tolerate on each open VM. 
When only a few requests are present, it is cheaper to tolerate a high
PoD (large $\eps$) since, otherwise, we would open an excessive number
of (near empty) VMs. However, when the number of requests is high,
the larger $\eps$ implies that we utilize the open VMs inefficiently.  
As for \dhighshare, its cost is always close, albeit not equal, to the one of the {\em cheapest} instance of \highshare($\eps$) due to the switching behavior of \Alg{dynamichighshare}. Transitions between different  $\eps$ values are marked in \Fig{cost}(right) by upwards- and downwards-pointing triangles.

We now characterize how $\eps$ affects the system's performance and cost. \Fig{details}(left) shows how much of the services
delay budget ($D^s_r$)  is consumed by traffic forwarding
(yellow areas) and processing  (green areas). The traffic forwarding overheads are determined entirely by level choice to accommodate the job, which is the same for all $\eps$ values. However, larger values of $\eps$ result in shorter processing
times. Shorter processing times, i.e.,
providing services {\em earlier} than their constraint,
correspond to more computing resources unnecessarily provisioned and thus higher-than-needed costs.

\begin{table*}[]
\centering
\caption{Comparison of competitive approaches to VNF placement
\label{tab:prev_work_comparison}
} %caption
\begin{tabularx}{1\textwidth}{|X|l|l|l|X|l|} 
\hline
Description & Model/approach & Latency & Dynamic & Guarantees & Refs \\
\hline\hline
REShare & bin-packing, M/M/1 & yes & yes & constant asymptotic competitive ratio: 2 & our work \\ 
\hline
Capacitated NFV Location Algorithm & Generalized Assignment Problem (GAP) & no & no & bi-criteria: cost is at most 8 times the optimum, and constraints are violated by a factor of at most 8 &\cite{RCohen15}\\
\hline
SPR$^3$ & multi-dimensional optimization; randomized approach &  yes &  no &  competitive ratio (with high probability) of $4+\frac{3\log S}{R_n}$, where $S$ and $R_n$ are instance-dependent factors; constraints are satisfied in expectation & \cite{tulino2}\\
\hline
JASPER & multi-dimensional optimization; randomized approach & yes & no &  competitive ratio (with high probability) of $3+\frac{2\log S}{\xi^\dagger}$, where $S$ and $\xi^\dagger$ are instance-dependent factors; constraints are violated, with high probability, by at most a factor $4+\delta$, with $\delta \geq 0$ being a scenario-dependent quantity & \cite{draxler2018jasper}\\
\hline
GFT & MILP optimization & yes & no & asymptotic competitive ratio: $2+(1-o(1))\log m$, where $m$ is instance-dependent & \cite{sang2017provably}\\
\hline
QNSD & multi-commodity-chain flow (MCCF) & yes & no & none for the full (integer) problem; $O(\eps)$ competitive ratio for the fractional (relaxed) one & \cite{feng2017approximation}\\
\hline
GSP-GRS & MILP optimization & yes & no & $2$ in special cases, none in general scenarios & \cite{he2018s}\\
\hline
GSP-SS & multi-scale scheduling &  yes & yes & none & \cite{farhadi2021service}\\
\hline
Online Throughput Maximization Algorithm & MILP optimization & yes & yes & $O(\log n)$, where $n$ is instance-dependent & \cite{ma2019throughput}\\
\hline
\end{tabularx}
\end{table*}

This is confirmed by \Fig{details}(center) and \Fig{details}(right),
highlighting how the VMs capacity is used. Green areas therein
correspond to the load VMs have to serve,
% such a quantity 
which
cannot be
reduced. The sum of orange and red areas correspond to the
margin~$\mu_b-\theta_v\Lambda(b)$ of the VMs (see \Sec{algo}); such a
quantity must be larger than~$\frac{1}{D_r^v}$ for all requests served
by each VM. In particular, the orange areas correspond to the margin
that VMs would have {\em if all requests they serve had the same
  latency constraints}, while the red ones correspond to the
PoD. Finally, gray areas correspond to the
difference~$\bar{\mu}-\mu_b$ between the maximum and actually allocated VM capacities.

We can see that larger $\eps$ values are always associated with a higher PoD. 
%As highlighted in \Fig{cost}(left), when the load is high (as in \Fig{details}(right)) this directly translates into higher costs. On the other hand, 
If the load is low (as in \Fig{details}(center)), larger $\eps$ values, implying more VM sharing and a higher PoD, may be an acceptable alternative to provisioning more VMs. This observation explains the behavior we observed in \Fig{cost},
%\gabi{I think it should be \Fig{cost}(left).}
%\francesco{the reference is right, but we should mention deployed capacity, not cost}
where larger values of $\eps$ result in lower cost {\em in spite} of a higher PoD.
For high load (\Fig{details}(right)),
%\gabi{again -- I think it should be \Fig{cost}(left), since there's no indication of cost in \Fig{details}. Unless I'm missing something here...}
%\francesco{similarly here: the reference is right, but we should mention deployed capacity, not cost}
limiting the PoD is instrumental in reducing the quantity of consumed
resources. 
%instead, the PoD becomes the dominant factor in the amount of consumed
%resources.
Specifically, from \Fig{details}(right), we can see that the PoD is over 5\% for RelaxSoTA, while it drops below 1\% for \dhighshare.

{\bf Smart-factory domain.}
\Fig{sf-cost} and \Fig{sf-details} present the performance of \dhighshare\ and its alternatives for the smart-factory application.
\Fig{sf-cost} confirms that \dhighshare\ yields the lowest {\em cumulative} cost (left plot), despite not necessarily being the cheapest solution at every point in time (right plot).
It is also interesting to notice how the faster pace at which the load evolves also implies that \highshare\ with low values of~$\eps$ cannot catch up with \dhighshare, and yield a substantially higher cumulative cost (first plot). \dhighshare, on the other hand, can quickly go through all $\eps$~values and reach the optimal one, as shown by the green and red triangles at the bottom of the second plot of \Fig{sf-cost}.

\Fig{sf-details} shows how, despite the different services, smaller values of~$\eps$ are consistently associated with a smaller PoD, though not necessarily with the lowest cost.
By quickly reaching the right value of~$\eps$, \dhighshare\ keeps the PoD below 1\%, compared to 13\% of RelaxSoTA.

{\bf Materna workload.}
The results for the real-world scenario based on the Materna trace are summarized in  \Fig{real-cost} and \Fig{real-details}. We can observe a behavior that is effectively equivalent to \Fig{sf-cost} and \Fig{sf-details}, which further confirms how \dhighshare\ works well with different loads and network topologies. As we can see from \Fig{real-details}(right), the PoD for  \dhighshare\ is below 1\%, compared to 21\% of the state-of-the-art solutions. 
Throughout all scenarios, using \dhighshare\ {\em in lieu} of state-of-the-art approaches consistently yields very significant cost savings, as summarized in \Tab{savings}. 
Interestingly, savings are higher in more complex scenarios, e.g., the real-world one.

\section{Related Work\label{sec:rel-work}}

The pioneering  work on VNF placement~\cite{RCohen15}
casts the problem into a generalized assignment problem (GAP), and
proposes a bi-criteria approximation thereto.  Recent
works~\cite{feng2017approximation,noi-satyam,poularakis2019joint,noi-jorge,flexshare,tulino}
widen the focus of the orchestration problem to include traffic
routing as well as VNF placement. These studies present non-linear (and
non-convex) problem formulations and, thus, resort to heuristics to
 solve the resulting problem. Other popular methodologies
include graph theory~\cite{ma2017traffic,draxler2018jasper} and
set-covering~\cite{sang2017provably}. Several works also account for 
 VNFs performing, 
the fact that VNFs can perform multiple tasks,
 e.g., caching \cite{xu2018joint,chen2018edge}. 
\cite{he2018s}~follows a similar approach and jointly solves the problems of VNF placement and scheduling, i.e., which physical resources to use and when. In the same setting, \cite{farhadi2021service}~makes placement and scheduling decisions accounting for multiple resources, e.g., memory and storage, so as to reflect the requirements of the existing VNFs. Other works focus on specific services, e.g., \cite{ma2019throughput}~considers multicast streaming in MEC scenarios, and its peculiar requirements in terms of network latency and VNF capacity.

Most schemes work offline, i.e., all the service instances are known in advance. Among the few online approaches that
deal with requests arriving at different times,
\cite{lukovszki2016s,lukovszki2018approximate}   
incrementally
update the current configuration, minimizing the changes to
accommodate the new requests.
In a similar setting,
\cite{tulino,tulino2}~process service requests via a {\em randomized} approach.
More recently,
\cite{blocher2020letting}~performed placement  offline and routing  online.

While many works account for the fact that individual hosts (e.g.,
VMs) may have different capabilities and features, few consider
layered topologies. Among those,
\cite{guo2015shadow,poularakis2019joint}~focus on the choice between
edge and cloud resources, and
\cite{cohen2019access}~studies the same problem with reference to
caching, while \cite{kamran2019deco}~aims at  jointly
placing the VNFs and the data they need. 
Finally, several works  characterize or predict service requests' arrival, thereby  simplifying network management. Approaches include exploiting the traffic variability to reduce the amount of needed resources~\cite{bouet2018mobile}, using reinforcement learning to predict traffic~\cite{sciancalepore2018z}, and estimating the resources needed by each request before admitting it~\cite{han2019utility}.
Although \dhighshare\ does not {\em require} any knowledge about the future evolution of the time demand, such information can be exploited, when available, to further improve its performance.

It is important to stress that, unlike \dhighshare\ and \highshare, existing works~\cite{tulino,tulino2,moualla2019online,liu2017dynamic} assume that VNF requirements are constant over time, and either are or are not satisfied by VMs.
In \Tab{prev_work_comparison}, we provide a summary of the comparison between previous work studying competitive approaches to VM placement, and our proposed solution. Importantly, ours is the {\em only} work featuring both a constant competitive ratio and the ability to deal with dynamic scenario, i.e., time-varying service demand.

A research problem closely related, albeit orthogonal, to \dhighshare\ is represented by {\em predicting} the future demand for content and services. Examples include~\cite{bega2019deepcog}, where the authors envision using a deep neural network (DNN) to forecast the future demand, and make network orchestration decisions based upon such a forecast. In a similar spirit, \cite{xu2022esdnn}~explores novel DNN architectures to better predict the load of cloud applications. It is important to stress that, whenever available, such predictions can seamlessly be integrated within \dhighshare\ and further boost its performance.

\section{Conclusions and future work}\label{sec:conclusions}

We addressed the problem of creating and scaling network slices while trying to reuse existing (sub-)slices across different services. 
We considered the availability of resources at different locations, including edge, aggregation,
and cloud, and a time-varying system workload
with service requests, arrivals, and departures.
To effectively create and scale sub-slices,
we proposed a low-complexity algorithm, which we proved to be asymptotic 2-competitive in the case of a non-decreasing load. Furthermore, numerical results obtained considering real-world services showed that our solution outperforms alternative approaches, 
for {\em time-varying workloads},
reducing the service cost by over 25\%.

Future work will focus on extending our system model, most notably, by considering: (i) workload prediction as an approach to resource allocation for time-varying workloads, and (ii) VM migration, i.e., the possibility that VNFs move across different nodes during the service lifetime, and the consequent need  to balance the activation of fewer VMs against  migration cost. 
Furthermore, we will seek to implement \dhighshare\ and assess its performance in real-world scenarios, first in small-scale testbeds and then through larger, cloud-based experiments. By so doing, we will be able to better prove the practical effectiveness of our solution.

\bibliographystyle{IEEEtran}
\bibliography{refs}

\end{document}